\definecolor{myred}{rgb}{1,0.6,0.6}
\definecolor{myblue}{rgb}{0,0,1}
\newcommand{\Oh}{\ensuremath{\mathcal{O}}}
\DeclareMathOperator{\dist}{dist}
\DeclareMathOperator{\diam}{diam}
\spnewtheorem{rules}{Rule}{\upshape\bfseries}{\upshape\itshape}
\spnewtheorem{rrule}[theorem]{Reduction Rule}{\upshape\bfseries}{\upshape\itshape}
\spnewtheorem{brule}[theorem]{Branching Rule}{\upshape\bfseries}{\upshape\itshape}
\spnewtheorem{invar}[theorem]{Invariant}{\upshape\bfseries}{\upshape\itshape}
\spnewtheorem{const}[theorem]{Construction}{\upshape\bfseries}{\upshape\rmfamily}
\spnewtheorem{observation}[theorem]{Observation}{\upshape\bfseries}{\upshape\itshape}
\spnewtheorem{claim}{Claim}{\upshape\itshape}{\upshape\rmfamily}
\newenvironment{claimproof}{{\noindent\textit{Proof. }}}{\hfill$\blacksquare$}
\newcommand{\trisclub}{\textsc{Vertex Triangle $s$-Club}}
\newcommand{\etrisclub}{\textsc{Edge Triangle $s$-Club}}
\newcommand{\seedsclub}{\textsc{Seeded $s$-Club}}
\newcommand{\problemdef}[3]{
  \begin{center}
    \begin{minipage}{0.95\textwidth}
      \normalsize\textsc{#1} \smallskip \\
      \begin{tabularx}{\textwidth}{@{}l@{\hspace{3pt}}X}
        \normalsize\textbf{Input:}    & \normalsize#2 \\
        \normalsize\textbf{Question:} & \normalsize#3
      \end{tabularx}
    \end{minipage}
  \end{center}
}
\newcolumntype{y}{>{\hsize=.8\hsize}X}
\newcolumntype{t}{>{\hsize=.6\hsize}X}
\begin{document}

\title{The Parameterized Complexity of~$s$-Club with Triangle and Seed Constraints }
\titlerunning{The Parameterized Complexity of~$s$-Club with Triangle and Seed Constraints}
\authorrunning{J.~Garvardt, C.~Komusiewicz, F.~Sommer}

\author{Jaroslav Garvardt \and Christian Komusiewicz\orcidID{0000-0003-0829-7032} \and Frank Sommer \thanks{Supported by the  Deutsche Forschungsgemeinschaft  (DFG), project EAGR (KO 3669/6-1).}
  \orcidID{0000-0003-4034-525X}}

\institute{Fachbereich Mathematik und Informatik, Philipps-Universität Marburg,\\ Marburg, Germany \\ \email{\{garvardt,komusiewicz,fsommer\}@informatik.uni-marburg.de}}










\maketitle

\begin{abstract}
The \textsc{s-Club} problem asks, for a given undirected graph~$G$, whether~$G$ contains a vertex set~$S$ of size at least $k$ such that~$G[S]$, the subgraph of~$G$ induced by~$S$, has diameter at most~$s$. 
We consider variants of \textsc{$s$-Club} where one additionally demands that each vertex of~$G[S]$ is contained in at least~$\ell$ triangles in~$G[S]$, that~$G[S]$ contains a spanning subgraph~$G'$ such that each edge of~$E(G')$ is contained in at least $\ell$~triangles in~$G'$, or that~$S$ contains a given set~$W$ of seed vertices. 
We show that in general these variants are W[1]-hard when parameterized by the solution size~$k$, making them significantly harder than the unconstrained~\textsc{$s$-Club} problem. 
On the positive side, we obtain some FPT algorithms for the case when~$\ell=1$ and for the case when~$G[W]$, the graph induced by the set of seed vertices, is a clique.  
\end{abstract}

\section{Introduction}
Finding cohesive subgroups in social or biological networks is a fundamental task
in network analysis. A classic formulation of cohesiveness is based on the
observation that cohesive groups have small diameter. This observation led to the~$s$-club model originally proposed by Mokken~\cite{Mok79}. An \emph{$s$-club} in a graph~$G=(V,E)$ is a set of vertices~$S$ such that~$G[S]$, the subgraph of~$G$ induced by~$S$, has diameter at most~$s$. The 1-clubs are thus precisely the cliques and the larger the value of~$s$, the more the clique-defining constraint of having diameter one is relaxed. In the \textsc{$s$-Club} problem we aim to decide whether~$G$ contains an~$s$-club of size at least~$k$.

A big drawback of $s$-clubs is that the largest $s$-clubs are often not very cohesive with
respect to other cohesiveness measures such as density or minimum degree. This behavior is particularly pronounced for~$s=2$: the largest
$2$-club in a graph is often the vertex~$v$ of maximum degree together with its neighbors~\cite{HKN15}.
To avoid  these so-called hub-and-spoke structures, it has been proposed to
augment the $s$-club definition with additional constraints~\cite{CA17,KNNP19,PYB13,VB12}.

One of these augmented models, proposed by Carvalho and Almeide~\cite{CA17}, asks that every vertex
is part of a triangle~\cite{CA17}. This property was later generalized to the \emph{vertex-$\ell$-triangle} property, which asks that every vertex of~$S$ is in at
least~$\ell$ triangles in~$G[S]$~\cite{AB19}. \problemdef{Vertex Triangle~$s$-Club} {An undirected graph~$G$ = $(V,E)$, and
  two integers~$k,\ell\geq 1$.}  {Does~$G$ contain an $s$-club~$S$ of size at least~$k$
  that fulfills the vertex-$\ell$-triangle property?}
  
 The vertex-$\ell$-triangle
constraint entails some desirable properties for cohesive subgraphs. For example, in a
vertex-$\ell$-triangle~$s$-club, the minimum degree is larger
than~$\sqrt{2\ell}$. However, some undesirable behavior of hub-and-spoke structures
remains. For example, the graph consisting of two cliques of size~$d+1$ that are connected
via one edge is a vertex-$\binom{d}{2}$-triangle 3-club but it can be made disconnected
via one edge deletion. Thus, vertex-$\ell$-triangle $s$-clubs are not robust with respect to edge deletions.

To overcome this problem, we introduce a new model where we  put triangle constraints on the edges of the~$s$-club instead of the vertices. 
More precisely, we say that a vertex set~$S$ of a graph~$G$ fulfills the \emph{edge-$\ell$-triangle} property if~$G[S]$ contains a spanning subgraph~$G'\coloneqq (S,E')$ such that every edge in~$E(G')$ is in at least~$\ell$ triangles in~$G'$ and the diameter of~$G'$ is at most~$s$.
Next, we introduce the related problem.

\problemdef{Edge Triangle~$s$-Club}
{An undirected graph~$G=(V,E)$, and two integers~$k,\ell\geq 1$.}
{Does~$G$ contain a vertex set~$S$ of size at least~$k$ that fulfills the edge-$\ell$-triangle property?} 

Note that in this definition, the triangle and diameter constraints are imposed on a spanning subgraph of~$G[S]$.
In contrast, for \textsc{Vertex Triangle~$s$-Club}, they are imposed directly on~$G[S]$.
The reason for this distinction is that we would like to have properties that are closed under edge insertions.
Properties which are closed under edge insertions are also well-motivated from an application point of view since adding a new connection within a group should not destroy this group.
If we would impose the triangle constraint on the induced subgraph~$G[S]$ instead, then an edge-$\ell$-triangle~$s$-club~$S$ would not be robust to edge additions.
For example, consider a graph~$G$ consisting of clique~$C$ to which two vertices~$u$ and~$v$ are attached in such a way that both~$u$ and~$v$ have exactly $2$~neighbors in~$C$ which are distinct.
The~$V(G)$ is an edge-$1$-triangle~$3$-club, but other adding the edge~$uv$, the edge~$uv$ is contained in no triangle and thus any edge-$1$-triangle~$3$-club cannot contain both~$u$ and~$v$.

Observe that every set that fulfills the  edge-$\ell$-triangle property also fulfills the vertex-$\ell$-triangle property. 
Also note that the converse is not true: 
A vertex-$\ell$-triangle~$s$-club is not necessarily also an edge-$\ell$-triangle~$s$-club.
For this, consider the above mentioned graph consisting of two cliques of size~$d+1$ that are connected
via one edge
Then, this graph is an~vertex-$\binom{d}{2}$-triangle-$3$-club. But after deleting~$e$ the graph is disconnected.
Moreover, each vertex~$v\in S$ has at least~$\ell+1$ neighbors in~$S$:
Consider an arbitrary edge~$uv$. 
Since~$uv$ is in at least~$\ell$ triangles~$\{u,v,w_1\},\ldots,\{u,v,w_\ell\}$ we thus conclude that~$u$ and~$v$ have degree at least~$\ell$.
We can show an even stronger statement: an edge-$\ell$-triangle $ s $-club $ S $ is robust against up to $ \ell $ edge deletions, as desired.

\begin{proposition}
\label{lem:edge-triangle-club-robust-edge-deletions}
Let~$G=(V,E)$ be a graph and let~$S$ be an edge-$\ell$-triangle $ s $-club in~$G$.
More precisely, let~$G'$ be a spanning subgraph of~$G[S]$ such that every edge in~$E(G')$ is in at least~$\ell$ triangles in~$G'$ and the diameter of~$G'$ is at most~$s$.
If $ \ell $ edges are removed from~$G'$, then~$S$ is still an $ (s+\ell)$-club and a~$(2s)$-club.
\end{proposition}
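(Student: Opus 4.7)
The plan is to let $R\subseteq E(G')$ with $|R|=\ell$ be the set of removed edges, set $G'':=G'-R$, and show that for every pair $u,v\in S$ there is a $u$-$v$ walk in $G''$ of length at most $\min(s+\ell,\,2s)$. Since $G'$ has diameter at most $s$, fix a shortest $u$-$v$ path $P=(u=x_0,x_1,\ldots,x_d=v)$ in $G'$ with $d\leq s$, and let $r:=|E(P)\cap R|$, so $r\leq \min(\ell,d)$.

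The core step is to show that every removed path edge can be replaced by a length-$2$ detour in $G''$. Fix $e=x_ix_{i+1}\in E(P)\cap R$. Since $e$ lies in at least $\ell$ triangles in $G'$, there are at least $\ell$ common neighbours of $x_i$ and $x_{i+1}$ in $G'$. Call a common neighbour $w$ \emph{bad} (for~$e$) if $x_iw\in R$ or $x_{i+1}w\in R$. I claim that any edge $f\in R\setminus\{e\}$ can make at most one common neighbour bad: if $f$ is not incident to $x_i$ or $x_{i+1}$ then it causes no badness, and if $f$ is incident to exactly one of them then its other endpoint is the unique candidate it can spoil; an edge incident to both $x_i$ and $x_{i+1}$ would equal $e$. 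Hence there are at most $\ell-1$ bad common neighbours, leaving at least one \emph{good} common neighbour $w$ with $x_iw,x_{i+1}w\in E(G'')$.

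Replacing each removed path edge $e$ by the length-$2$ detour $x_i,w,x_{i+1}$ in $G''$ yields a $u$-$v$ walk in $G''$ of length $d+r$. Since $r\leq \ell$ we obtain $d+r\leq s+\ell$, and since $r\leq d$ we obtain $d+r\leq 2d\leq 2s$. Therefore $\mathrm{dist}_{G''}(u,v)\leq \min(s+\ell,2s)$, which gives the desired $(s+\ell)$-club and $(2s)$-club properties of $S$ in $G''$ (and hence in $G[S]$, since $G''$ is a spanning subgraph of $G[S]$).

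The main obstacle is the counting argument in the middle paragraph: one has to verify carefully that different removed edges cannot jointly eliminate more than $\ell-1$ of the $\ell$ guaranteed common neighbours, using the observation that the only edge of $G'$ incident to both $x_i$ and $x_{i+1}$ is $e$ itself. Everything else is straightforward bookkeeping on path lengths.
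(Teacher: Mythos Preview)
Your proof is correct and follows the same overall strategy as the paper: take a shortest $u$--$v$ path in~$G'$ and replace each removed path edge by a length-$2$ detour through a triangle partner. The paper's proof, however, simply asserts that such a detour exists in the graph after deletion without justifying why the two detour edges themselves are not among the removed edges; your counting argument (at most $\ell-1$ other removed edges, each spoiling at most one of the $\ge\ell$ common neighbours) is exactly what is needed to close that gap, so your write-up is in fact more rigorous than the original.
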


\begin{proof}
	We show that if $ \ell $ edges are removed from $ G' $, the diameter of the resulting graph $ \widetilde{G} $ increases by at most $ \ell $. 
	Let $ P=(v_1,\ldots,v_{s+1}) $ be a path of length~$ s $ in $ G' $. 
	Since $ G' $ is an edge-$\ell$-triangle $ s $-club, every edge $ v_i v_{i+1} $ of~$P $ is part of at least $ \ell $~triangles in~$G'$. 
	Thus, for two vertices $ v_i$ and~$v_{i+1}$ in~$P$ there is a path of length at most two from $ v_i $ to $ v_{i+1} $ in $ G' $, either directly through the edge $ v_i v_{i+1}$ or via a vertex $ u $ that forms one of the $ \ell $ triangles with $ v_i $ and $ v_{i+1} $ in~$ G' $. 
	Thus, $ \dist(v_i,v_{i+1}) $ increases by at most $ 1 $ after one edge deletion and only if~$v_i v_{i+1}$ is removed.
	Since at most $ \ell $ of the edges in $ P $ are removed, we have $ \dist(v_1,v_{s+1}) \leq \dist(v_1,v_2) + \ldots + \dist(v_s,v_{s+1}) \leq s + \ell $ in $ \widetilde{G} $. 
	By the same arguments, we also have $\dist(v_1,v_{s+1}) \leq 2s$.
	
	Thus, after deleting $ \ell $ edges in $ G' $,~$S$ is an $ (s+\ell) $-club and a~$(2s)$-club.  	
	\qed
\end{proof}

The following further variant of \textsc{$s$-Club} is also practically motivated but not necessarily by concerns about the robustness of the $s$-club. Here the difference to the standard problem is simply that we are given a set of seed vertices~$W$ and aim to find a large $s$-club that contains all seed vertices.
\problemdef{Seeded~$s$-Club}
{An undirected graph~$G$ = $(V,E)$, a subset~$W\subseteq V$, and an integer~$k\geq 1$.}
{Does~$G$ contain an $s$-club~$S$ of size at least~$k$ such that~$W\subseteq S$?}

This variant has applications in community detection, where we are often interested in finding communities containing some set of fixed vertices~\cite{Kan14,WGD13}. 

In this work, we study the parameterized complexity of the three above-mentioned problems with
respect to the standard parameter solution size~$k$. Our goal is to determine whether FPT results for~\textsc{$s$-Club}~\cite{CHLS13,SKMN12} transfer to these practically motivated problem variants.

\paragraph{Known Results.}
The \textsc{$s$-Club} problem is NP-hard for all~$s\ge 1$~\cite{BLP02}, even when the input graph has diameter~$s+1$~\cite{BBT05}. For~$s=1$, \textsc{$s$-Club} is equivalent to \textsc{Clique} and thus W[1]-hard with respect to~$k$. 
In contrast, for every~$s>1$,
\textsc{$s$-Club} is fixed-parameter tractable (FPT) with respect to the solution
size~$k$~\cite{CHLS13,SKMN12}. 
 This fixed-parameter tractability  can be shown via a Turing kernel with $\Oh(k^2)$~vertices for even~$s$ and~$\Oh(k^3)$~vertices for odd~$s$~\cite{SKMN12,CHLS13}. The complexity of~\textsc{$s$-Club} has been also studied with respect to different classes of input graphs~\cite{GHKR14}  and with respect to~structural parameters such as degeneracy of the input graph~\cite{HKNS15}. The \textsc{$s$-Club} problem can be solved 
efficiently in practice, in particular for~$s=2$~\cite{BLP02,CHLS13,HKN15}.  In particular, the \textsc{$2$-Club} problem has efficient branch-and-bound algorithms~\cite{CHLS13,HKN15} which can compute the optimal solutions on very large sparse graphs.

\trisclub{} is NP-hard for all~$s\ge 1$ and for all~$\ell\ge 1$~\cite{CA17,AB19}. We are not aware of any algorithmic studies of \etrisclub{} or~\seedsclub. 
NP-hardness of \etrisclub{} for~$\ell=1$ can be shown via the reduction for \trisclub{} for~$\ell=1$~\cite{AB19}.
Also, the NP-hardness of~\seedsclub{} for~$W\ne\emptyset$ follows directly from the fact that an algorithm for the case where~$|W|=1$ can be used as a black box to solve \textsc{$s$-Club}. Further robust models of $s$-clubs, which are not considered in this work, include
$t$-hereditary $s$-clubs~\cite{PYB13}, $t$-robust $s$-clubs~\cite{VB12}, and $t$-connected $s$-clubs~\cite{YPB17,KNNP19}. For an
overview on clique relaxation models and complexity issues for the corresponding subgraph
problems we refer to the relevant surveys~\cite{K16,PYB13}.



\paragraph{Our Results.} An overview of our results is given in Table~\ref{tab:results}.
For \trisclub{} and \etrisclub{}, we provide a complexity dichotomy for all interesting combinations of~$s$ and~$\ell$, that is, for all~$s\ge 2$ and~$\ell\ge 1$,
into cases that are FPT or W[1]-hard with
respect to~$k$, respectively.  
Our W[1]-hardness reduction for \etrisclub{} for~$\ell\ge 2$ also shows the NP-hardness of this case. 
The FPT-algorithms are obtained via adaptions of the Turing kernelization for \textsc{$s$-Club}. 
Interestingly, \trisclub~ with~$\ell=1$  is
FPT only for larger~$s$, whereas \etrisclub~with~$\ell=1$ is FPT for all~$s$. 
In our opinion, this means that the edge-$\ell$-triangle property is preferable not only from a modelling standpoint but also from an algorithmic standpoint as it allows to employ Turing kernelization as a part of the solving procedure, at least for~$\ell=1$. 
It is easy to see that standard problem kernels of polynomial size are unlikely to exist for \trisclub{} and \etrisclub{}: $s$-clubs are necessarily connected and thus taking the disjoint union of graphs gives a trivial or-composition and, therefore, a polynomial problem kernel implies coNP $\subseteq$ NP/poly~\cite{BDFH09}.  

All of our hardness results for \trisclub{} and \etrisclub{} are shown by a reduction from \textsc{Clique} which is W[1]-hard with respect to~$k$~\cite{Cyg+15,DF13}.
The idea is to replace each vertex  of the \textsc{Clique} instance by a vertex gadget.
These gadgets are constructed in such a way that if one of these vertices is part of a vertex/edge-$\ell$-triangle~$2$-club~$S$, then the entire vertex gadget is part of~$S$.
We can then use the distance constraint to make sure that full vertex gadgets are chosen only if the corresponding vertices are adjacent.  
While this idea is very natural, using the triangle constraint without creating many vertices that are too close to each other turned out to be technically challenging.

For \seedsclub, we provide a kernel with respect to~$k$ for clique seeds~$W$ and W[1]-hardness with respect to~$k$ for some other cases. 
For~$s=2$, our results provide a dichotomy into FPT and W[1]-hardness with respect to~$k$ in terms of the structure of the seed.

\begin{table}[t]
  \caption{Overview of our results of the parameterized complexity of the three problems with respect to the parameter solution size~$k$.}
  \begin{tabularx}{\textwidth}{ltty}
    \toprule
     & \textsc{Vertex Triangle~~~~~\linebreak $s$-Club}~ & \textsc{Edge Triangle~~~~~\linebreak $s$-Club}~ &  \seedsclub \\
    \midrule
   FPT &\ $\ell=1$ and~$s\ge 4$ & $\ell=1$ for each~$s$ & $W$ is a clique\\
   \midrule
    W[1]-h &\ $\ell=1$ and $s\le 3$ & $\ell\ge 2$ for each~$s$ & $s=2$  and~$G[W]$ contains at least two non-adjacent vertices   \\[6.5ex]
    &\ $\ell\ge 2$ for each~$s$ & & $s\ge 3$ and $G[W]$ contains at\linebreak least~$2$ connected components\\
    \bottomrule
  \end{tabularx}
  \label{tab:results}
\end{table}

The W[1]-hardness of \seedsclub{} is provided by two reductions from \textsc{Clique}.
One reduction is for the case~$s=2$ and any seed that contains at least two non-adjacent vertices~$u$ and~$z$.
The other reduction  is for the case~$s\ge 3$ and any seed that contains at least two connected components~$U$ and~$Z$.
In both cases we add two copies~$X_1$ and~$X_2$ of the graph of the \textsc{Clique} instance to the new instance of \seedsclub{} such that each vertex of~$X_1$ has distance at most~$s$ to~$u$ or~$U$ if its copy in~$X_2$ is also part of the solution.
We show a similar property for~$X_2$ and~$z$ or~$Z$.
This ensures that if~$p_1\in X_1$ is in the solution if and only if~$p_2\in X_2$ is in the solution.
Furthermore, the reductions have the property that vertex~$p_1$ in~$X_1$ has distance at most~$s$ to vertex~$q_2$ in~$X_2$ if and only if~$pq$ is an edge of the \textsc{Clique} instance.
This feature will then ensure that the same clique has to be chosen from both copies.

Our W[1]-hardness results, in particular those for \seedsclub, show that the FPT results for \textsc{$s$-Club} are quite brittle since the standard argument that we may assume~$k\ge\Delta$ fails and that adding even simple further constraints makes finding small-diameter subgraphs much harder.

\paragraph{Preliminaries.}
For integers~$p,q$, we denote~$[p,q]\coloneqq \{p,p+1, \ldots ,q\}$ and~$[q]\coloneqq [1,q]$.
For a graph~$G$, we let~$V(G)$  denote its vertex set and~$E(G)$ its edge set.
We let~$n$ and~$m$ denote the order of~$G$ and the number of edges in~$G$, respectively.
A \emph{path of length~$p$} is a sequence of pairwise distinct vertices~$v_1,\ldots, v_{p+1}$ such that~$v_iv_{i+1}\in E(G)$ for each~$i\in[p]$.
The \emph{distance}~$\dist_G(u,v)$ is the length of a shortest path between vertices~$u$ and~$v$.
Furthermore, we define~$\dist_G(u,W)\coloneqq \min_{w\in W}\dist(u,w)$.
We denote by~$\diam_G(G)\coloneqq \max_{u,v\in V(G)}\dist_G(u,v)$ the \emph{diameter} of~$G$.
Let~$S\subseteq V(G)$ be a vertex set.
We denote by~$N_i(S)\coloneqq \{u \in V \mid \dist(u,S) = i \}$ 
the \emph{open~$i$-neighborhood} of~$S$ and by~$N_i[S]\coloneqq \bigcup_{j\le i}N_i(S)\cup S$ the \emph{closed~$i$-neighborhood} of~$S$. For a vertex~$v\in V(G)$, we write~$N_i(v)\coloneqq N_i(\{v\})$ and~$N_i[v]\coloneqq N_i[\{v\}]$. 
A graph~$G'\coloneqq (V',E')$ with~$V'\subseteq V$, and~$E'\subseteq E(G[V'])$ is a \emph{subgraph} of~$G$.
By~$G[S]\coloneqq (S,\{uv\in E(G)\mid u,v\in S\})$ we denote the \emph{subgraph induced by}~$S$.
Furthermore, by~$G-S\coloneqq G[V\setminus S]$ we denote the induced subgraph obtained after the deletion of the vertices in~$S$.
A vertex set such that each pair of vertices is adjacent is called a \emph{clique} and a clique consisting of three vertices is a \textit{triangle}. 

For the definitions of parameterized complexity theory, we refer to the standard monographs~\cite{Cyg+15,DF13}.
All of our hardness results are shown by a reduction from \textsc{Clique}.  
\problemdef{Clique}
{An undirected graph~$G=(V,E)$ and an integer~$k$.}
{Does~$G$ contain a clique of size at least~$k$?}
\textsc{Clique} is W[1]-hard with respect to~$k$~\cite{Cyg+15,DF13}.

\section{Vertex Triangle~\texorpdfstring{$s$}{s}-Club}
In this section, we settle the parameterized complexity of \textsc{Vertex Triangle~$s$-Club} with respect to the solution size~$k$.
First, we show that this problem is fixed-parameter tractable when~$\ell=1$ and~$s\ge 4$.
Afterwards, we show W[1]-hardness for all remaining cases, that is, for~$\ell\ge 2$ and~$s\ge 2$, and also for~$\ell=1$ and~$s\in\{2,3\}$.

\subsection{FPT-Algorithms}

The overall idea is based on the idea of the Turing kernel for \textsc{$2$-Club}, that is, bounding the size of~$N_s[v]$ for each vertex~$v\in V(G)$.
The first step is to remove all vertices which are not in a triangle.

\begin{rrule}
\label{rr-1-vertex-triangle-s-club-remove-vertices-in-no-triangle}
Let~$(G,k)$ be an instance of \textsc{Vertex Triangle~$s$-Club}. 
Delete all vertices from~$G$ which are not part of any triangle.
\end{rrule}

Clearly, Reduction Rule~\ref{rr-1-vertex-triangle-s-club-remove-vertices-in-no-triangle} is correct and can be exhaustively applied in polynomial time.
The application of Reduction Rule~\ref{rr-1-vertex-triangle-s-club-remove-vertices-in-no-triangle} has the following effect: if some vertex~$v$ is close to many vertices, then~$(G,k)$ is a trivial yes-instance.

\begin{lemma}
\label{lem-1-vertex-triangle-s-club-neighborhood}
Let~$(G,k)$ be an instance of \textsc{Vertex Triangle~$s$-Club} with~$\ell=1$ and~$s\ge 4$ to which Reduction Rule~\ref{rr-1-vertex-triangle-s-club-remove-vertices-in-no-triangle} is applied.
Then,~$(G,k)$ is a yes-instance if~$|N_{\lfloor s/2\rfloor-1}[v]|\ge k$ for some vertex~$v\in V(G)$.
\end{lemma}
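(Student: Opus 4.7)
Set $r \coloneqq \lfloor s/2 \rfloor - 1$ and suppose some vertex $v$ satisfies $|N_r[v]| \ge k$. The plan is to construct a vertex set $S \supseteq N_r[v]$ that (i) has size at least~$k$, (ii) satisfies the vertex-$1$-triangle property, and (iii) is an $s$-club. The candidate set $N_r[v]$ alone already has size at least~$k$, but some of its vertices may have all their triangles lying (partly) outside $N_r[v]$, so the first move is to augment~$N_r[v]$ with one triangle per vertex.

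Concretely, for each $u \in N_r[v]$ I would invoke Reduction Rule~\ref{rr-1-vertex-triangle-s-club-remove-vertices-in-no-triangle} to pick a fixed triangle $T_u$ of~$G$ containing~$u$, and define
\[
S \coloneqq N_r[v] \cup \bigcup_{u \in N_r[v]} T_u.
\]
Clearly $|S| \ge |N_r[v]| \ge k$. The vertex-$1$-triangle property for~$S$ is then immediate: every $u \in N_r[v]$ lies in $T_u \subseteq S$, and every $w \in S \setminus N_r[v]$ is a member of some $T_u \subseteq S$, so it too lies in a triangle of $G[S]$.

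The main remaining task, and the only delicate step, is to bound $\diam(G[S])$. I would show that $\dist_{G[S]}(w,v) \le r+1$ for every $w \in S$. For $w \in N_r[v]$, any shortest $w$-$v$ path in~$G$ uses only vertices whose distance to~$v$ is strictly less than that of~$w$, so all intermediate vertices lie in $N_r[v] \subseteq S$ and the path survives in~$G[S]$ with length at most~$r$. For $w \in S \setminus N_r[v]$, pick a $u \in N_r[v]$ with $w \in T_u$; then $w$ is adjacent to~$u$ in~$G[S]$ and prepending this edge to the in-$S$ shortest $u$-$v$ path gives a path of length at most~$r+1$. Applying the triangle inequality through~$v$ then yields $\diam(G[S]) \le 2(r+1) = 2\lfloor s/2\rfloor \le s$, so $S$ is an $s$-club.

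The subtle point in the above argument is the closure property of shortest paths under the ball operator, which is what lets the augmentation $S \supseteq N_r[v]$ not inflate distances beyond~$r+1$; this is also the exact place where $s \ge 4$ (equivalently $r \ge 1$) is needed, because for $s \le 3$ one would have $r \le 0$ and the ball $N_r[v] = \{v\}$ would carry no useful information.
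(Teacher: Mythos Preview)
Your proof is correct and follows essentially the same approach as the paper: both augment the ball $N_{\lfloor s/2\rfloor-1}[v]$ by one triangle per vertex to guarantee the vertex-$1$-triangle property, then observe that every vertex of the resulting set lies within distance $\lfloor s/2\rfloor$ of~$v$ so the diameter is at most~$s$. The only cosmetic difference is that the paper adds triangles only for boundary vertices in $N_{\lfloor s/2\rfloor-1}(v)$ (since triangles through interior vertices are automatically contained in the ball), and your treatment of why distances survive in the induced subgraph $G[S]$ is in fact a bit more explicit than the paper's.
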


\begin{proof}
Let~$v\in V(G)$ be a vertex such that~$|N_{\lfloor s/2\rfloor-1}[v]|\ge k$.
We construct a vertex-$1$-triangle~$s$-club~$T$ of size at least~$|N_{\lfloor s/2\rfloor-1}[v]|\ge k$.
Initially, we set~$T\coloneqq N_{\lfloor s/2\rfloor-1}[v]$.
Now, for each vertex~$w\in N_{\lfloor s/2\rfloor-1}(v)$ we do the following:
Since Reduction Rule~\ref{rr-1-vertex-triangle-s-club-remove-vertices-in-no-triangle} is applied, we conclude that there exist two vertices~$x$ and~$y$ such that~$G[\{ w,x,y\}]$ is a triangle.
We add~$x$ and~$y$ to the set~$T$. 
We call the set of vertices added in this step the \emph{$T$-expansion}.

Next, we show that~$T$ is indeed a vertex-$1$-triangle~$s$-club for~$s\ge 4$. 
Observe that each vertex in~$T$ is either in~$N_{\lfloor s/2\rfloor-1}[v]$ or a neighbor of a vertex in~$N_{\lfloor s/2\rfloor-1}(v)$.
Hence, each vertex in~$T$ has distance at most~$\lfloor s/2\rfloor$ to vertex~$v$.
Thus,~$T$ is an~$s$-club.
It remains to show that each vertex of~$T$ is in a triangle.
Observe  that for each vertex~$w\in N_{\lfloor s/2\rfloor-2}[v]$ we have~$N(w)\subseteq N_{\lfloor s/2\rfloor-1}[v]$.
Recall that since Reduction Rule~\ref{rr-1-vertex-triangle-s-club-remove-vertices-in-no-triangle} is applied, each vertex in~$G$ is contained in a triangle.
Thus, each vertex of~$N_{\lfloor s/2\rfloor-2}[v]$ is contained in a triangle in~$T$.
Furthermore, all vertices in~$N_{\lfloor s/2\rfloor-1}(v)\cup (T\setminus N_{\lfloor s/2\rfloor-1}[v])$ are in a triangle because of the~$T$-expansion. 
Since~$|T|\ge |N_{\lfloor s/2\rfloor-1}[v]|\ge k$, the statement follows.\qed
\end{proof}

Next, we show that Lemma~\ref{lem-1-vertex-triangle-s-club-neighborhood} implies the existence of a Turing kernel for~$s\ge 4$.
We do this by showing that~$N_s[v]$ is bounded for every vertex in the graph.
This in turn implies that the problem is fixed-parameter tractable. 
It is sufficient to bound the size of~$N_s[v]$ for each~$v\in V(G)$ since we then can query the oracle for an $s$-club of size~$k$.

\begin{theorem}
\label{thm-1-vertex-triangle-s-club-fpt}
\textsc{Vertex Triangle~$s$-Club} for~$\ell=1$ admits a~$k^4$-vertex Turing kernel if~$s=4$ or~$s=7$, a~$k^5$-vertex Turing kernel if~$s=5$, and a~$k^3$-vertex Turing kernel if~$s=6$ or~$s\ge 8$.
\end{theorem}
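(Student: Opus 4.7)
The plan is to apply Lemma~\ref{lem-1-vertex-triangle-s-club-neighborhood} contrapositively to obtain a uniform bound on local neighborhoods, and then to bound $|N_s[v]|$ for every vertex $v$ by iterated expansion. Concretely, I would first exhaustively apply Reduction Rule~\ref{rr-1-vertex-triangle-s-club-remove-vertices-in-no-triangle}. Next, for every $v \in V(G)$, I would test whether $|N_{\lfloor s/2\rfloor-1}[v]| \ge k$; if this holds for some $v$, Lemma~\ref{lem-1-vertex-triangle-s-club-neighborhood} already answers \emph{yes}. Therefore I may assume $|N_i[v]| < k$ for every $v$, where $i \coloneqq \lfloor s/2\rfloor-1$.

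Given this uniform local bound, I would use a standard neighborhood-growth argument. Since $N_{(j+1)i}[v] \subseteq \bigcup_{u \in N_{ji}[v]} N_i[u]$, a straightforward induction on $j$ gives $|N_{ji}[v]| < k^j$, and hence $|N_s[v]| < k^{\lceil s/i\rceil}$. Because any vertex-$1$-triangle~$s$-club containing $v$ is a subset of $N_s[v]$, iterating over all $v \in V(G)$ and querying an oracle on the instance $(G[N_s[v]], k)$ yields a Turing kernel of the claimed size: every such club found in some $G[N_s[v]]$ is also a vertex-$1$-triangle~$s$-club in~$G$, and conversely every solution $S$ in~$G$ satisfies $S \subseteq N_s[v]$ for each $v \in S$.

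It then remains to compute $\lceil s/i\rceil$ case by case to match the stated exponents. For $s \in \{4,5\}$ we have $i = 1$ and therefore exponent~$s$, giving the $k^4$- and $k^5$-vertex bounds. For $s \in \{6,7\}$ we have $i = 2$ and exponents $\lceil 6/2\rceil = 3$ and $\lceil 7/2\rceil = 4$. For $s = 8$ we have $i = 3$ and $\lceil 8/3\rceil = 3$. For $s \ge 9$ the inequality $s \le 3(\lfloor s/2\rfloor-1) = 3i$ holds, which is verified by a short case distinction on the parity of~$s$, so three radius-$i$ expansions already cover $N_s[v]$ and $|N_s[v]| < k^3$.

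The main obstacle is the bookkeeping in this last step, in particular verifying that $s \le 3i$ remains valid for $s = 8$ (where $i = 3$ and $3i = 9 \ge 8$), since this is the smallest $s$ at which the exponent drops to~$3$. Apart from this arithmetic check, the proof is a routine combination of Lemma~\ref{lem-1-vertex-triangle-s-club-neighborhood} with the observation that any $s$-club containing $v$ lives inside the closed $s$-neighborhood of~$v$.
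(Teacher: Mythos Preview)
Your proposal is correct and follows essentially the same approach as the paper: apply Reduction Rule~\ref{rr-1-vertex-triangle-s-club-remove-vertices-in-no-triangle}, invoke Lemma~\ref{lem-1-vertex-triangle-s-club-neighborhood} to assume $|N_i[v]|<k$ for $i=\lfloor s/2\rfloor-1$, and then bound $|N_s[v]|$ by iterated radius-$i$ expansion with a short case analysis on~$s$. The only cosmetic difference is that for $s=6$ and $s\ge 8$ the paper phrases the last step as $i\ge\lceil s/3\rceil$ and expands three times with radius~$\lceil s/3\rceil$, whereas you expand $\lceil s/i\rceil$ times with radius~$i$; these are equivalent formulations of the same inequality $s\le 3i$.
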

\begin{proof}
First, we apply Reduction Rule~\ref{rr-1-vertex-triangle-s-club-remove-vertices-in-no-triangle}.
Because of Lemma~\ref{lem-1-vertex-triangle-s-club-neighborhood} we conclude that~$(G,k)$ is a trivial yes-instance if~$|N_{\lfloor s/2\rfloor-1}[v]|\ge k$ for any vertex~$v\in V(G)$.
Thus, in the following we can assume that~$|N_{\lfloor s/2\rfloor-1}[v]|<k$ for each vertex~$v\in V(G)$.
We use this fact to bound the size of~$N_s[v]$ in non-trivial instances.

In this case we have~$\lfloor s/2\rfloor-1=1$.
Hence, for each~$s\ge 4$ from~$|N_{\lfloor s/2\rfloor-1}[v]|<k$ we obtain that the size of the neighborhood of each vertex is bounded.
Thus, we obtain a~$k^4$-vertex Turing kernel for~$s=4$ and a~$k^5$-vertex Turing kernel for~$s=5$.
Furthermore, if~$s=7$ we have~$\lfloor s/2\rfloor-1=2$.
Thus, we obtain a~$k^4$-vertex Turing kernel for~$s=7$ since~$N_7[v]\subseteq N_8[v]=N_2[N_2[N_2[N_2[v]]]]$.

If~$s=6$ or~$s\ge 8$, then~$\lfloor s/2\rfloor-1\ge \lceil s/3\rceil$.
Observe that~$N_s[v]$ is contained in~$N_{\lceil s/3\rceil}[N_{\lceil s/3\rceil}[N_{\lceil s/3\rceil}[v]]]$.
Thus, we obtain a~$k^3$-vertex Turing kernel for~$s=6$ or~$s\ge 8$.\qed
\end{proof}

Note that~$s\ge4$ is necessary to ensure~$\lfloor s/2\rfloor-1\ge 1$.
In our arguments to obtain a Turing kernel~$\ell=1$ is necessary for the following reason:
if~$\ell\ge 2$, then the remaining vertices of the other triangles of a vertex in the~$T$-expansion may be contained in~$N_{\lfloor s/2\rfloor+1}$ and, thus,  adding them will not necessarily give an~$s$-club.
Also note that using~$N_t[v]$ for some~$t<\lfloor s/2\rfloor-1$ does not help:
The remaining vertices of the other triangles of a vertex in the~$T$-expansion may be contained in in~$N_{t+1}$. 
But now, another~$T$-expansion for the vertices in~$N_{t+1}$ is necessary.
This may lead to a cascade of~$T$-expansions where eventually, we add vertices with distance at least~$s+1$ to~$v$.
Thus, the constructed set is no $s$-club anymore.

\subsection{Parameterized Hardness}

In the following, we prove W[1]-hardness for \textsc{Vertex Triangle~$s$-Club} parameterized by the solution size~$k$ for all cases not covered by Theorem~\ref{thm-1-vertex-triangle-s-club-fpt}, that is,~$\ell\ge 2$ and~$s\ge 2$, and also for~$\ell=1$ and $s\in\{2,3\}$. 

\begin{theorem}
\label{thm-w1-hardness-vertex-variant}
	\textsc{Vertex Triangle~$s$-Club} is W[1]-hard for parameter~$k$ if~$\ell\ge 2$, and if~$\ell=1$ and~$s\in\{2,3\}$.
\end{theorem}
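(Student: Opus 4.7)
The plan is to reduce from \textsc{Clique} in all three regimes, following the template sketched in the introduction: given an instance $(H,k')$ of \textsc{Clique}, build a graph $G$ by replacing each $v \in V(H)$ by a vertex gadget $G_v$ of some fixed size $g$, connect gadgets for adjacent vertices $u,v \in V(H)$ by an appropriate bundle of edges between $G_u$ and $G_v$, and ask for a vertex-$\ell$-triangle $s$-club of size $k = g \cdot k'$. The gadget is designed so that (a) $G_v$ on its own is a vertex-$\ell$-triangle $s$-club, (b) for any edge $uv \in E(H)$, the union $G_u \cup G_v$ is still a vertex-$\ell$-triangle $s$-club, and (c) each vertex of $G_v$ accumulates its $\ell$ required triangles only through vertices living inside $G_v$ (together with possibly a few adjacent gadgets).

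For the forward direction, I would check the routine claim that a clique $C$ of size $k'$ in $H$ makes $S \coloneqq \bigcup_{v \in C} G_v$ a vertex-$\ell$-triangle $s$-club of size $gk'$: diameter is controlled by (b) applied edgewise, and the triangle condition by (a). The reverse direction decomposes into two subclaims about any vertex-$\ell$-triangle $s$-club $S$ of size at least $k$. \emph{Atomicity:} for every $v \in V(H)$, either $G_v \subseteq S$ or $G_v \cap S = \emptyset$; this forces $|\{v : G_v \subseteq S\}| \ge k'$. \emph{Clique consistency:} if $G_u, G_v \subseteq S$ then $uv \in E(H)$, because for a non-edge some distinguished pair in $G_u \times G_v$ has no path of length $\le s$ in $G$. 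Atomicity is enforced by using the triangle requirement: a vertex $x \in G_v$ separated from its ``triangle mates'' inside $G_v$ would be forced to source its $\ell$ triangles from outside, and I design the inter-gadget edges so that this is impossible without blowing up the diameter.

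For $\ell \geq 2$ I would take the gadget $G_v$ to be a clique of size roughly $\ell+2$ (or a small clique-like structure providing many intra-gadget triangles to each vertex) and attach adjacent gadgets either by a complete bipartite connection or via shared ``link'' vertices at distance at most $\lfloor s/2 \rfloor$ from either side; the increased triangle demand makes atomicity almost immediate because any vertex of $G_v$ selected alone cannot find $\ell$ triangles using only inter-gadget neighbors without dragging in vertices too far away. For $\ell = 1$ and $s \in \{2,3\}$ I would instead let the gadget be an atomic small structure (e.g.\ a triangle or diamond), and engineer inter-gadget edges so that the single required triangle per vertex can only be sourced intra-gadget; the small diameter is exploited to rule out long ``shortcut'' paths between gadgets for non-adjacent $u,v$.

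The main obstacle, already foreshadowed in the paper's introduction, is the interaction between the triangle constraint and the very tight diameter bound in the $\ell=1$, $s\in\{2,3\}$ regime: any edge added between two gadgets to certify distance $\le s$ threatens to create an additional triangle that would let a vertex of $G_v$ sit in $S$ without the rest of $G_v$, breaking atomicity. Thus most of the proof effort will go into choosing inter-gadget connectors carefully (using dedicated connector vertices that do not form triangles with the atomic core of a gadget) and verifying in a case analysis, separately for $s=2$ and $s=3$, that no ``partial'' selection of $k$ vertices across many gadgets can simultaneously satisfy both the vertex-$1$-triangle condition and the diameter bound unless it corresponds to $k'$ fully selected gadgets whose indices form a clique in $H$.
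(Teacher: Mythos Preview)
Your high-level template matches the paper exactly: reduce from \textsc{Clique}, replace each vertex by a gadget, prove atomicity via the triangle constraint and clique consistency via the diameter constraint. The $s=2$ case is essentially as you describe (the paper uses a clique of size $c$ with $\binom{c-1}{2}\ge\ell$ per vertex, plus one global clique $Y$, and connects gadgets by a carefully chosen matching). But for $\ell\ge 2$ and $s\ge 3$ your sketch has a real gap.

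The problem is clique consistency when gadgets are small. If each $G_v$ is a clique (diameter~$1$) and $u,v$ are non-adjacent in $H$ but share a common neighbour $w$, then any vertex of $G_u$ reaches any vertex of $G_v$ in at most four steps through $G_w$; for $s\ge 4$ this is within the budget, so you cannot conclude $uv\in E(H)$. ``Link vertices at distance $\lfloor s/2\rfloor$'' do not help on their own, because the diameter constraint is on $G[S]$, and nothing prevents a short detour through a third gadget. The paper's fix is the non-obvious part of the proof: the gadget $T^v$ is not a clique but a cascading structure of internal diameter roughly $s-1$ (built from layers of $x$-, $y$-, $z$-vertices so that every vertex still sits in \emph{exactly} $\ell$ triangles, all inside $T^v$). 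The connector edges attach only to the extreme layers. Then a designated ``central'' vertex of $T^v$ is at distance about $(s-1)/2$ from every connector endpoint, and since a path between two non-adjacent gadgets must cross at least two connector edges, its length is at least $(s-1)/2+2+(s-1)/2=s+1$. Getting the counts right (exactly $\ell$ intra-gadget triangles per vertex, no cross-gadget triangles, diameter exactly $s-1$ or $s-2$) requires three separate connector schemes for odd $s$, even $s$ with $\ell\ge 3$, and even $s$ with $\ell=2$.

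A second, smaller issue: your suggestion of a ``complete bipartite connection'' between clique gadgets would create many cross-gadget triangles, destroying atomicity rather than enforcing it. The paper's connector edges are deliberately sparse (essentially matchings between designated $p$/$q$ vertices) so that \emph{no} triangle has endpoints in two different gadgets; atomicity then follows immediately because each vertex has exactly $\ell$ triangles, all internal.
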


For some combinations of~$s$ and~$\ell$ we provide hardness for restricted input graphs.
More precisely, we prove that \textsc{Vertex Triangle~$s$-Club} is W[1]-hard even if each vertex~$v\in V(G)$ is contained in \emph{exactly}~$\ell$ triangles in the input graph.
In other words, the hardness does not depend on the fact that we could choose different triangles.
We provide this hardness for the case~$s\ge 3$ and arbitrary~$\ell$ and also for~$s=2$ when~$\ell=\binom{c-1}{2}$ for some integer~$c$.

We prove the theorem by considering four subcases. 
The proofs for the four cases all use a reduction from the W[1]-hard \textsc{Clique} problem.
In these constructions, each vertex~$v$ of the \textsc{Clique} instance is replaced by a vertex gadget~$T^v$ such that every vertex-$\ell$-triangle~$s$-club~$S$  either contains~$T^v$ completely or contains no vertex of~$T^v$.
This property is obtained since each vertex in~$T^v$ will be in exactly $\ell$~triangles and each of these triangles is within~$T^v$.
The idea is that if~$uv\notin E(G)$ then there exists a vertex~$x\in T^u$ and a vertex~$y\in T^v$ such that~$\dist(x,y)\ge s+1$.

\paragraph{Vertex Triangle~$2$-Club.} 
First, we handle the case~$s=2$.

\begin{const}
\label{const-l-triangle-2-club}
Let~$(G,k)$ be an instance of \textsc{Clique} and let~$c$ be the smallest number such that~$\binom{c-1}{2}\ge\ell$.
We construct an instance~$(G',c(k+1))$ of \textsc{Vertex Triangle~$2$-Club} as follows.
\begin{itemize}
\item For each vertex~$v\in V(G)$, we add a clique~$T^v\coloneqq \{x_1^v, \ldots , x_c^v\}$ of size~$c$ to~$G'$. 

\item For each edge~$vw\in E(G)$, we connect the cliques~$T^v$ and~$T^w$ by adding the edge~$x^v_{2i-1}x^w_{2i}$ and~$x^w_{2i-1}x^v_{2i}$ to~$G'$ for each~$i\in[\lfloor c/2\rfloor]$.

\item Furthermore, we add a clique~$Y\coloneqq \{y_1, \ldots, y_c\}$ of size~$c$ to~$G'$.

\item We also add, for each~$i\in[c]$ and each~$v\in V(G)$, the edge~$x^v_iy_i$ to~$G'$.
\end{itemize}
\end{const}

Note that the clique size~$c$ ensures that each vertex~$x\in V(G')$ is contained in at least~$\binom{c-1}{2}\ge\ell$ triangles in~$G'$.
Furthermore, note that the clique~$Y$ is only necessary when~$c$ is odd to ensure that the vertices~$x^v_c$ and~$x^w_c$ have a common neighbor.
We add the clique~$Y$ in both cases to unify the construction and the correctness proof.
Next, we show that for each vertex gadget~$T^v$ the intersection with each vertex-$\ell$-triangle~$2$-club is either empty or~$T^v$.

\begin{lemma}
\label{lem-l-triangle-2-club-property-empty-or-complete} Let~$S$ be a vertex-$\ell$-triangle~$2$-club in~$G'$. Then,
\begin{itemize}
\item[a)] $S\cap T^v\neq\emptyset\Leftrightarrow T^v\subseteq S$, and
\item[b)]  $S'\coloneqq S\cup Y$ is also a vertex-$\ell$-triangle~$2$-club in~$G'$.
\end{itemize}
\end{lemma}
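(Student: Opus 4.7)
The proof hinges on the following structural claim about~$G'$: every triangle containing a vertex~$x^v_i$ lies entirely inside the clique~$T^v$. To establish this, I would enumerate the neighbors of~$x^v_i$ in~$G'$: the $c-1$ vertices of $T^v\setminus\{x^v_i\}$, the single vertex~$y_i$, and, for each edge $vw\in E(G)$, at most one matching partner $x^w_{i^*}\in T^w$, where $i^*$~denotes the partner of~$i$ inside its pair $(2t-1,2t)$ (no such partner exists when $c$~is odd and $i=c$). The task then reduces to showing that no two of these ``external'' neighbors are adjacent, and that no external neighbor is adjacent to a clique vertex $x^v_j\in T^v\setminus\{x^v_i\}$. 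This is a direct case check: $y_i$~has neighbors only inside~$Y$ and among vertices of index~$i$, ruling out edges to matching partners (of index $i^*\neq i$) and to other clique vertices~$x^v_j$ (with $j\neq i$); two matching partners $x^w_{i^*}$ and $x^u_{i^*}$ share the index~$i^*$ and hence are not linked by a matching edge; and a clique vertex~$x^v_j$ with $j\neq i$ only connects across gadgets to vertices of index $j^*\neq i^*$. This leaves exactly the $\binom{c-1}{2}$ triangles through~$x^v_i$, all inside~$T^v$.

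With the structural claim, the non-trivial direction of~(a) is immediate. Suppose $x^v_i\in S$ and let $r\coloneqq|S\cap T^v|$. Since every triangle of~$G'$ through~$x^v_i$ lies in~$T^v$, the number of triangles through~$x^v_i$ in~$G'[S]$ equals $\binom{r-1}{2}$. The vertex-$\ell$-triangle property forces $\binom{r-1}{2}\ge\ell$, while the minimality of~$c$ yields $\binom{c-2}{2}<\ell$. Hence $r\ge c$, i.e., $T^v\subseteq S$; the reverse implication is trivial.

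For part~(b) I would verify the two conditions on $S'\coloneqq S\cup Y$ separately. Diameter: pairs inside~$S$ are at distance at most~$2$ by assumption, pairs inside~$Y$ are adjacent since~$Y$ is a clique, and any mixed pair $x^v_i\in S$ and $y_j\in Y$ is connected by the length-at-most-$2$ path $x^v_i,y_i,y_j$ (shortened to one edge if $i=j$). Triangle condition: vertices originally in~$S$ retain their $\ell$~triangles because $G'[S]\subseteq G'[S']$, and every $y_j\in Y$ sits in the clique~$Y$ of size~$c$ and is thus contained in $\binom{c-1}{2}\ge\ell$ triangles of~$G'[S']$.

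The main obstacle is the triangle enumeration for~$x^v_i$ in part~(a): the correctness of the whole gadget hinges on the precise placement of the inter-gadget matching edges (pairing indices $2t-1$ and $2t$) together with the index-preserving edges to~$Y$, so that no adjacency among the external neighbors of~$x^v_i$ is ever created. All remaining steps are routine verifications.
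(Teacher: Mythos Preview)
Your proposal is correct and follows essentially the same approach as the paper. The paper's proof simply asserts that ``$T^v$ contains all vertices which form a triangle with vertex~$z$'' and then invokes the minimality of~$c$; you spell out the triangle enumeration via the neighbor case analysis, which is exactly the justification underlying that assertion. Your treatment of part~(b) likewise mirrors the paper's, the only cosmetic difference being that the paper first explicitly disposes of the case $Y\cap S\neq\emptyset$ (by the same minimality argument applied to~$Y$) before handling the mixed distance pairs, whereas you cover this implicitly through your case split.
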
 
\begin{proof}
First, we show statement~$a)$. 
Assume that for a vertex~$z\in T^v$ for some~$v\in V(G)$ we have~$z\in S$ for some vertex-$\ell$-triangle~$2$-club~$S$.
Note that~$T^v$ contains all vertices which form a triangle with vertex~$z$. 
Since~$c$ is minimal such that~$\binom{c-1}{2}\ge\ell$ and since~$T^v$ is a clique, we conclude that~$T^v\subseteq S$ to fulfill the property that vertex~$z$ is contained in at least~$\ell$ triangles in~$G[S]$.
Thus,~$T^v\subseteq S$.

Second, we show statement~$b)$.
Since each vertex~$y\in Y$ forms only triangles with vertices in~$Y$ and~$Y$ has size~$\binom{c-1}{2}$, we conclude that~$Y\subseteq S^*\Leftrightarrow S^*\cap Y\ne\emptyset$ for each vertex-$\ell$-triangle~$2$-club~$S^*$. 
In the following, let~$S$ be a vertex-$\ell$-triangle~$2$-club such that~$Y\cap S=\emptyset$.
From statement~$a)$ we conclude that~$S\coloneqq \bigcup_{v\in P}T^v$ for some set~$P\subseteq V(G)$.
Next, we show that~$S'\coloneqq S\cup Y$ is also a vertex-$\ell$-triangle~$2$-club.
Since each vertex is contained in a clique of size~$c$ in~$S$, it is in sufficiently many triangles.
Thus, it remains to prove that~$S'$ is a~$2$-club.
Hence, consider some vertex~$x^v_i$ and some vertex~$y_j$ for some~$i,j\in[c]$ and~$v\in P$. 
Then,~$y_i$ is a common neighbor of~$x^v_i$ and~$y_j$.
Hence,~$S'$ is a vertex-$\ell$-triangle~$2$-club and thus~$b)$ holds.\qed
\end{proof}

Now, we prove the correctness of Construction~\ref{const-l-triangle-2-club}.

\begin{lemma}
\label{lem-l-triangle-2-club-hardness}
For each~$\ell\in\mathds{N}$, the \textsc{Vertex Triangle~$2$-Club} problem parameterized by~$k$ is W[1]-hard.
\end{lemma}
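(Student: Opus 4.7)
\medskip

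\noindent\textbf{Proof plan for Lemma~\ref{lem-l-triangle-2-club-hardness}.}
My plan is to show that an instance $(G,k)$ of \textsc{Clique} is a yes-instance if and only if the instance $(G',c(k+1))$ produced by Construction~\ref{const-l-triangle-2-club} is a yes-instance of \textsc{Vertex Triangle $2$-Club}. Since Construction~\ref{const-l-triangle-2-club} is polynomial and $c$ depends only on $\ell$, this yields the desired W[1]-hardness under the parameter~$k$.

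For the forward direction, suppose $K=\{v_1,\dots,v_k\}$ is a clique in $G$. I would pick $S\coloneqq Y\cup\bigcup_{i\in[k]}T^{v_i}$ of size exactly $c(k+1)$ and check the two defining properties. The triangle property is immediate: every vertex of $S$ lies in a clique of size $c$ inside $S$ (either $Y$ or some $T^{v_i}$), so it participates in at least $\binom{c-1}{2}\ge \ell$ triangles in $G'[S]$. For the $2$-club property I would go through the pair types: within a single $T^{v_i}$ or within $Y$ the distance is~$1$; for $x^{v_i}_p,y_q$ the vertex $y_p$ is a common neighbour; for $x^{v_i}_p,x^{v_j}_q$ with $v_iv_j\in E(G)$ and $p=q$, the vertex $y_p$ is again a common neighbour; for $p\ne q$ with the same indices, I would use a vertex of $T^{v_i}$ itself as a common neighbour, exploiting that every $x^{v_j}_q$ has a cross-neighbour in $T^{v_i}$ (namely $x^{v_i}_{q+1}$ if $q$ is odd and $q<c$, and $x^{v_i}_{q-1}$ if $q$ is even); the boundary case where $c$ is odd and $q=c$ is handled by $y_c$.

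For the reverse direction, let $S$ be a vertex-$\ell$-triangle $2$-club in $G'$ with $|S|\ge c(k+1)$. By Lemma~\ref{lem-l-triangle-2-club-property-empty-or-complete}(a), applied to each gadget and to $Y$, the set $S$ is a union of full gadgets possibly together with the whole clique~$Y$. A simple counting argument then shows that $S$ contains at least $k$ full gadgets $T^{v_1},\dots,T^{v_k}$. To conclude I need to show $\{v_1,\dots,v_k\}$ is a clique in $G$. Assume for contradiction that $v_iv_j\notin E(G)$ and consider the pair $x^{v_i}_1,x^{v_j}_3$ (which exists since $c\ge 3$). I would enumerate the possible common neighbours in~$S$: no $y_r\in Y$ works because it would need $r=1=3$; no vertex of $T^{v_i}$ or $T^{v_j}$ works because there are no edges between these two gadgets; and for a third gadget $T^{v_a}\subseteq S$, the unique vertex of $T^{v_a}$ adjacent to $x^{v_i}_1$ is $x^{v_a}_2$, while the unique vertex adjacent to $x^{v_j}_3$ is $x^{v_a}_4$, so these cannot coincide. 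Hence $\dist_{G'[S]}(x^{v_i}_1,x^{v_j}_3)\ge 3$, contradicting that $S$ is a $2$-club.

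The main obstacle is precisely this last step: carefully tracking which indices the cross-edges between gadgets connect so as to rule out common neighbours when $v_iv_j\notin E(G)$, while at the same time producing common neighbours when $v_iv_j\in E(G)$. The edge pattern $x^v_{2i-1}x^w_{2i}$ and $x^w_{2i-1}x^v_{2i}$ is designed exactly so that a single pair of indices $(p,q)$ with different parities cannot be joined through a single intermediate gadget vertex unless both endpoints lie in adjacent gadgets; recording this carefully, together with the boundary handling when $c$ is odd via the auxiliary clique $Y$, is the only technically delicate part of the argument.
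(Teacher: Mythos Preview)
Your approach matches the paper's: build $S=Y\cup\bigcup T^{v_i}$ from a clique for the forward direction, and in the reverse direction use Lemma~\ref{lem-l-triangle-2-club-property-empty-or-complete} to decompose $S$ into full gadgets and then exhibit a pair at distance~$\ge 3$ when two gadgets correspond to non-adjacent vertices. The paper uses the pair $(x^v_1,x^w_2)$ rather than your $(x^{v_i}_1,x^{v_j}_3)$, but both work for the same reason: the unique cross-neighbours from index~$1$ land at index~$2$ and those from index~$3$ land at index~$4$ (or nowhere, when $c=3$), so no third gadget can provide a common neighbour.

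There is one small slip in your forward direction. In the case $p\ne q$, $c$ odd, $q=c$, you claim $y_c$ is a common neighbour of $x^{v_i}_p$ and $x^{v_j}_c$. It is not: $y_c$ is adjacent only to the index-$c$ vertex of each gadget, so $x^{v_i}_p y_c\notin E(G')$ when $p\ne c$. The fix is immediate and is exactly what the paper does: since $p\ne c$, the vertex $x^{v_i}_p$ does have a cross-neighbour in $T^{v_j}$ (namely $x^{v_j}_{p+1}$ if $p$ is odd, $x^{v_j}_{p-1}$ if $p$ is even), and that vertex is adjacent to $x^{v_j}_c$ because $T^{v_j}$ is a clique. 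Equivalently, assume without loss of generality that $p\in[c-1]$ and always route through $T^{v_j}$; then $y_c$ is only needed for the residual case $p=q=c$.
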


\begin{proof}
We prove that~$G$ contains a clique of size~$k$ if and only if~$G'$ contains a vertex-$\ell$-triangle~$2$-club of size at least~$c(k+1)$.

Let~$C$ be a clique of size at least~$k$ in~$G$.
We argue that~$S\coloneqq Y\cup\bigcup_{v\in C}T^v$ is a vertex-$\ell$-triangle~$2$-club of size~$c(k+1)$ in~$G'$.
Note that for each vertex~$v\in T^v$ we have~$|T^v|=c$.
Since~$T^v$ is a clique, we conclude that each vertex in~$T^v$ is contained in exactly~$\binom{c-1}{2}\ge\ell$ triangles.
The same is true for each vertex in~$Y$.
Hence, each vertex in~$S$ is contained in at least~$\ell$ triangles.
Thus, it remains to show that~$S$ is a~$2$-club.
Consider the vertices~$x^v_i$ and~$x^w_j$ for~$v,w\in C$,~$i\in[c-1]$, and~$j\in[c]$. 
If~$i$ is odd, then~$x^w_{i+1}\in N(x^v_i)\cap N(x^w_j)$.
Otherwise, if~$i$ is even,~$x^w_{i-1}\in N(x^v_i)\cap N(x^w_j)$.
In both cases, we obtain~$\dist(x^v_i,x^w_j)\le 2$.
Next, consider two vertices~$x^v_c$ and~$x^w_c$ in~$S$.
Observe that~$y_c\in N(x^v_c)\cap N(x^w_c)$.
Since~$Y$ is a clique, it remains to consider vertices~$x^v_i$ and~$y_j$ in~$S$ for~$i\in[c]$ and~$j\in[c]$. 
Observe that~$x^v_j\in N(y_j)\cap N[x^v_i]$.
Thus,~$G'$ contains a vertex-$\ell$-triangle~$2$-club of size at least~$c(k+1)$.

Conversely, suppose that~$G'$ contains a vertex-$\ell$-triangle~$2$-club~$S$ of size at least~$c(k+1)$.
By Lemma~\ref{lem-l-triangle-2-club-property-empty-or-complete}, we can assume that~$Y\subseteq S$ and for each vertex gadget~$T^v\in G'$ we either have~$T^v\subseteq S$ or~$T^v\cap S=\emptyset$.
Hence,~$S$ contains at least~$k$ cliques of the form~$T^v$.
Assume towards a contradiction that~$S$ contains two cliques~$T^v$ and~$T^w$ such that~$vw\notin E(G)$ and consider the two vertices~$x^v_1\in T^v$ and~$x^w_2\in T^w$.
Note that these vertices always exist since~$c\ge 3$.
Observe that~$N[x^v_1]=T^v\cup\{x^u_2\mid uv\in E(G)\}\cup\{y_1\}$ and~$N[x^w_2]=T^w\cup\{x^u_1\mid uw\in E(G)\}\cup\{y_2\}$.
Thus,~$\dist(x^v_1,x^w_2)\ge 3$, a contradiction.
Hence, for each two distinct vertex gadgets~$T^v$ and~$T^w$ that are contained in~$S$, we observe that~$vw\in E(G)$. 
Consequently, the set~$\{v \mid T^v\subseteq S\}$ is a clique of size at least~$k$ in~$G$.\qed
\end{proof}

If~$\ell=\binom{c-1}{2}$ for some integer~$c$, then Lemma~\ref{lem-l-triangle-2-club-hardness} also holds for the restriction that each vertex is contained in exactly~$\ell$ triangles in the input graph~$G'$.

\paragraph{Vertex Triangle~$s$-Club for~$s=3$ and for~$\ell\ge 2$ and~$s\ge 4$.}
Now, we provide hardness for the remaining cases.
We consider three subcases.
Case~$1$ deals with odd~$s$.
Case~$2$ covers the case that~$s$ is even and~$\ell\ge 3$.
Case~$3$ deals with the case that~$s$ is even and~$\ell=2$.
All three cases use the same vertex gadget. 
Only the edges between these gadgets, called \emph{connector edges}, differ.
The idea is to construct the vertex gadgets~$T^v$ in such a way that there are pairs of vertices  in~$T^v$ of distance~$2s^*$ which is almost~$s$.
Thus, in a vertex-$\ell$-triangle~$s$-club, the distance between two different vertex gadgets must be small.
The first part of the following construction describes the vertex gadget which is used in all three cases.
For an illustration of this construction see Fig.~\ref{fig-visulation-vertex-triangle}.

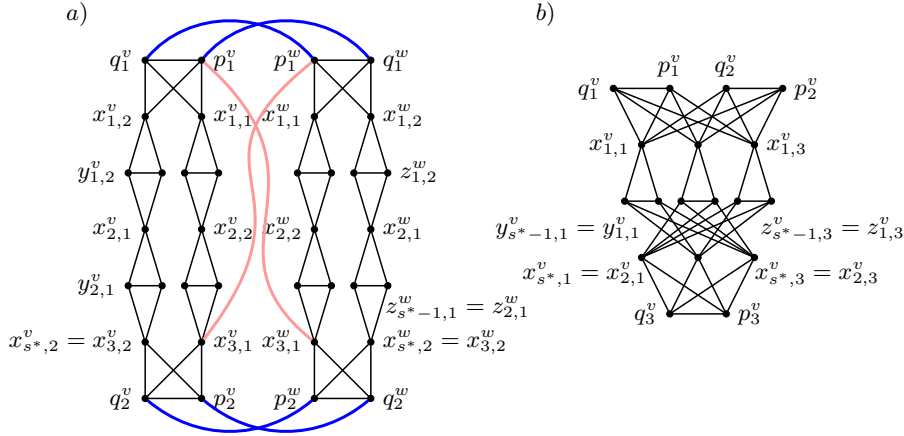
\begin{figure}[t]
\begin{tikzpicture}[scale=0.75]

\node[label=:{$a)$}](p) at (-1.2, 6.3) {};

\node[label=left:{\footnotesize $q^v_2$}](pv2) at (0, 0) [shape = circle, draw, fill=black, scale=0.07ex]{};
\node[label=right:{\footnotesize $p^v_2$}](qv2) at (1, 0) [shape = circle, draw, fill=black, scale=0.07ex]{};
\node[label=left:{\footnotesize $x^v_{s^*,2}=x^v_{3,2}$}](xv31) at (0, 1) [shape = circle, draw, fill=black, scale=0.07ex]{};
\node[label=right:{\footnotesize $x^v_{3,1}$}](xv32) at (1, 1) [shape = circle, draw, fill=black, scale=0.07ex]{};
\node[label=left:{\footnotesize $y^v_{2,1}$}](yv21) at (-0.3, 2) [shape = circle, draw, fill=black, scale=0.07ex]{};
\node(zv21) at (0.3, 2) [shape = circle, draw, fill=black, scale=0.07ex]{};
\node(yv22) at (0.7, 2) [shape = circle, draw, fill=black, scale=0.07ex]{};
\node(zv22) at (1.3, 2) [shape = circle, draw, fill=black, scale=0.07ex]{};
\node[label=left:{\footnotesize $x^v_{2,1}$}](xv21) at (0, 3) [shape = circle, draw, fill=black, scale=0.07ex]{};
\node[label=right:{\footnotesize $x^v_{2,2}$}](xv22) at (1, 3) [shape = circle, draw, fill=black, scale=0.07ex]{};
\node[label=left:{\footnotesize $y^v_{1,2}$}](yv11) at (-0.3, 4) [shape = circle, draw, fill=black, scale=0.07ex]{};
\node(zv11) at (0.3, 4) [shape = circle, draw, fill=black, scale=0.07ex]{};
\node(yv12) at (0.7, 4) [shape = circle, draw, fill=black, scale=0.07ex]{};
\node(zv12) at (1.3, 4) [shape = circle, draw, fill=black, scale=0.07ex]{};
\node[label=left:{\footnotesize $x^v_{1,2}$}](xv11) at (0, 5) [shape = circle, draw, fill=black, scale=0.07ex]{};
\node[label=right:{\footnotesize $x^v_{1,1}$}](xv12) at (1, 5) [shape = circle, draw, fill=black, scale=0.07ex]{};
\node[label=left:{\footnotesize $q^v_1$}](pv1) at (0, 6) [shape = circle, draw, fill=black, scale=0.07ex]{};
\node[label=right:{\footnotesize $p^v_1$}](qv1) at (1, 6) [shape = circle, draw, fill=black, scale=0.07ex]{};

\path [-,line width=0.2mm] (pv2) edge (qv2);
\path [-,line width=0.2mm] (pv2) edge (xv31);
\path [-,line width=0.2mm] (pv2) edge (xv32);
\path [-,line width=0.2mm] (qv2) edge (xv31);
\path [-,line width=0.2mm] (qv2) edge (xv32);
\path [-,line width=0.2mm] (pv1) edge (qv1);
\path [-,line width=0.2mm] (pv1) edge (xv11);
\path [-,line width=0.2mm] (pv1) edge (xv12);
\path [-,line width=0.2mm] (qv1) edge (xv11);
\path [-,line width=0.2mm] (qv1) edge (xv12);

\path [-,line width=0.2mm] (zv11) edge (xv11);
\path [-,line width=0.2mm] (yv11) edge (xv11);
\path [-,line width=0.2mm] (zv12) edge (xv12);
\path [-,line width=0.2mm] (yv12) edge (xv12);
\path [-,line width=0.2mm] (zv11) edge (yv11);
\path [-,line width=0.2mm] (yv12) edge (zv12);
\path [-,line width=0.2mm] (zv11) edge (xv21);
\path [-,line width=0.2mm] (yv11) edge (xv21);
\path [-,line width=0.2mm] (zv12) edge (xv22);
\path [-,line width=0.2mm] (yv12) edge (xv22);

\path [-,line width=0.2mm] (zv21) edge (xv21);
\path [-,line width=0.2mm] (yv21) edge (xv21);
\path [-,line width=0.2mm] (zv22) edge (xv22);
\path [-,line width=0.2mm] (yv22) edge (xv22);
\path [-,line width=0.2mm] (zv21) edge (yv21);
\path [-,line width=0.2mm] (yv22) edge (zv22);
\path [-,line width=0.2mm] (zv21) edge (xv31);
\path [-,line width=0.2mm] (yv21) edge (xv31);
\path [-,line width=0.2mm] (zv22) edge (xv32);
\path [-,line width=0.2mm] (yv22) edge (xv32);

\node[label=left:{\footnotesize $p^w_2$}](pw2) at (3, 0) [shape = circle, draw, fill=black, scale=0.07ex]{};
\node[label=right:{\footnotesize $q^w_2$}](qw2) at (4, 0) [shape = circle, draw, fill=black, scale=0.07ex]{};
\node[label=left:{\footnotesize $x^w_{3,1}$}](xw31) at (3, 1) [shape = circle, draw, fill=black, scale=0.07ex]{};
\node[label=right:{\footnotesize $x^w_{s^*,2}=x^w_{3,2}$}](xw32) at (4, 1) [shape = circle, draw, fill=black, scale=0.07ex]{};
\node(yw21) at (2.7, 2) [shape = circle, draw, fill=black, scale=0.07ex]{};
\node(zw21) at (3.3, 2) [shape = circle, draw, fill=black, scale=0.07ex]{};
\node(yw22) at (3.7, 2) [shape = circle, draw, fill=black, scale=0.07ex]{};
\node(zw22) at (4.3, 2) [shape = circle, draw, fill=black, scale=0.07ex]{};
\node[label=right:{\footnotesize $z^w_{s^*-1,1}=z^w_{2,1}$}](zw22a) at (3.95, 1.6) {};
\node[label=left:{\footnotesize $x^w_{2,2}$}](xw21) at (3, 3) [shape = circle, draw, fill=black, scale=0.07ex]{};
\node[label=right:{\footnotesize $x^w_{2,1}$}](xw22) at (4, 3) [shape = circle, draw, fill=black, scale=0.07ex]{};
\node(yw11) at (2.7, 4) [shape = circle, draw, fill=black, scale=0.07ex]{};
\node(zw11) at (3.3, 4) [shape = circle, draw, fill=black, scale=0.07ex]{};
\node(yw12) at (3.7, 4) [shape = circle, draw, fill=black, scale=0.07ex]{};
\node[label=right:{\footnotesize $z^w_{1,2}$}](zw12) at (4.3, 4) [shape = circle, draw, fill=black, scale=0.07ex]{};
\node[label=left:{\footnotesize $x^w_{1,1}$}](xw11) at (3, 5) [shape = circle, draw, fill=black, scale=0.07ex]{};
\node[label=right:{\footnotesize $x^w_{1,2}$}](xw12) at (4, 5) [shape = circle, draw, fill=black, scale=0.07ex]{};
\node[label=left:{\footnotesize $p^w_1$}](pw1) at (3, 6) [shape = circle, draw, fill=black, scale=0.07ex]{};
\node[label=right:{\footnotesize $q^w_1$}](qw1) at (4, 6) [shape = circle, draw, fill=black, scale=0.07ex]{};

\path [-,line width=0.2mm] (pw2) edge (qw2);
\path [-,line width=0.2mm] (pw2) edge (xw31);
\path [-,line width=0.2mm] (pw2) edge (xw32);
\path [-,line width=0.2mm] (qw2) edge (xw31);
\path [-,line width=0.2mm] (qw2) edge (xw32);
\path [-,line width=0.2mm] (pw1) edge (qw1);
\path [-,line width=0.2mm] (pw1) edge (xw11);
\path [-,line width=0.2mm] (pw1) edge (xw12);
\path [-,line width=0.2mm] (qw1) edge (xw11);
\path [-,line width=0.2mm] (qw1) edge (xw12);

\path [-,line width=0.2mm] (zw11) edge (xw11);
\path [-,line width=0.2mm] (yw11) edge (xw11);
\path [-,line width=0.2mm] (zw12) edge (xw12);
\path [-,line width=0.2mm] (yw12) edge (xw12);
\path [-,line width=0.2mm] (zw11) edge (yw11);
\path [-,line width=0.2mm] (yw12) edge (zw12);
\path [-,line width=0.2mm] (zw11) edge (xw21);
\path [-,line width=0.2mm] (yw11) edge (xw21);
\path [-,line width=0.2mm] (zw12) edge (xw22);
\path [-,line width=0.2mm] (yw12) edge (xw22);

\path [-,line width=0.2mm] (zw21) edge (xw21);
\path [-,line width=0.2mm] (yw21) edge (xw21);
\path [-,line width=0.2mm] (zw22) edge (xw22);
\path [-,line width=0.2mm] (yw22) edge (xw22);
\path [-,line width=0.2mm] (zw21) edge (yw21);
\path [-,line width=0.2mm] (yw22) edge (zw22);
\path [-,line width=0.2mm] (zw21) edge (xw31);
\path [-,line width=0.2mm] (yw21) edge (xw31);
\path [-,line width=0.2mm] (zw22) edge (xw32);
\path [-,line width=0.2mm] (yw22) edge (xw32);

\path [-,myblue,line width=0.4mm,bend left=50] (pv1) edge (pw1);
\path [-,myblue,line width=0.4mm,bend left=50] (qv1) edge (qw1);
\path [-,myblue,line width=0.4mm,bend right=40] (pv2) edge (pw2);
\path [-,myblue,line width=0.4mm,bend right=40] (qv2) edge (qw2);
\draw [-,myred,line width=0.4mm] (qv1) .. controls (3.5,3.8) and (0.9,2.7) .. (xw31);
\draw [-,myred,line width=0.4mm] (pw1) .. controls (0.5,4) and (3.2,3.8) .. (xv32);


\node[label=:{$b)$}](ap) at (7.1, 6.3) {};

\node[label=left:{\footnotesize $q^v_3$}](apv3) at (9.3, 1.5) [shape = circle, draw, fill=black, scale=0.07ex]{};
\node[label=right:{\footnotesize $p^v_3$}](aqv3) at (10.3, 1.5) [shape = circle, draw, fill=black, scale=0.07ex]{};
\node(axv21) at (8.8, 2.5) [shape = circle, draw, fill=black, scale=0.07ex]{};
\node[label=left:{\footnotesize $x^v_{s^*,1}=x^v_{2,1}$}](axv21a) at (9.2, 2.2) {};
\node(axv22) at (9.8, 2.5) [shape = circle, draw, fill=black, scale=0.07ex]{};
\node(axv23) at (10.8, 2.5) [shape = circle, draw, fill=black, scale=0.07ex]{};
\node[label=right:{\footnotesize $x^v_{s^*,3}=x^v_{2,3}$}](axv23s) at (10.5, 2.2) {};

\node(ayv11) at (8.5, 3.5) [shape = circle, draw, fill=black, scale=0.07ex]{};
\node[label=left:{\footnotesize $y^v_{s^*-1,1}=y^v_{1,1}$}](ayv11s) at (9.1, 3.0) {};
\node(azv11) at (9.1, 3.5) [shape = circle, draw, fill=black, scale=0.07ex]{};
\node(ayv12) at (9.5, 3.5) [shape = circle, draw, fill=black, scale=0.07ex]{};
\node(azv12) at (10.1, 3.5) [shape = circle, draw, fill=black, scale=0.07ex]{};
\node(ayv13) at (10.5, 3.5) [shape = circle, draw, fill=black, scale=0.07ex]{};
\node(azv13) at (11.1, 3.5) [shape = circle, draw, fill=black, scale=0.07ex]{};
\node[label=right:{\footnotesize $z^v_{s^*-1,3}=z^v_{1,3}$}](azv13s) at (10.6, 3.0) {};

\node[label=right:{\footnotesize $x^v_{1,3}$}](axv13) at (10.8, 4.5) [shape = circle, draw, fill=black, scale=0.07ex]{};
\node(axv11) at (9.8, 4.5) [shape = circle, draw, fill=black, scale=0.07ex]{};
\node[label=left:{\footnotesize $x^v_{1,1}$}](axv12) at (8.8, 4.5) [shape = circle, draw, fill=black, scale=0.07ex]{};
\node[label=left:{\footnotesize $q^v_1$}](apv1) at (8.3, 5.5) [shape = circle, draw, fill=black, scale=0.07ex]{};
\node[label=above:{\footnotesize $p^v_1$}](aqv1) at (9.3, 5.5) [shape = circle, draw, fill=black, scale=0.07ex]{};
\node[label=above:{\footnotesize $q^v_2$}](apv2) at (10.3, 5.5) [shape = circle, draw, fill=black, scale=0.07ex]{};
\node[label=right:{\footnotesize $p^v_2$}](aqv2) at (11.3, 5.5) [shape = circle, draw, fill=black, scale=0.07ex]{};

\path [-,line width=0.2mm] (apv1) edge (aqv1);
\path [-,line width=0.2mm] (apv2) edge (aqv2);
\path [-,line width=0.2mm] (apv3) edge (aqv3);
\path [-,line width=0.2mm] (apv3) edge (axv21);
\path [-,line width=0.2mm] (apv3) edge (axv22);
\path [-,line width=0.2mm] (apv3) edge (axv23);
\path [-,line width=0.2mm] (aqv3) edge (axv21);
\path [-,line width=0.2mm] (aqv3) edge (axv22);
\path [-,line width=0.2mm] (aqv3) edge (axv23);

\path [-,line width=0.2mm] (apv1) edge (axv11);
\path [-,line width=0.2mm] (apv1) edge (axv12);
\path [-,line width=0.2mm] (apv1) edge (axv13);
\path [-,line width=0.2mm] (aqv1) edge (axv11);
\path [-,line width=0.2mm] (aqv1) edge (axv12);
\path [-,line width=0.2mm] (aqv1) edge (axv13);

\path [-,line width=0.2mm] (apv2) edge (axv11);
\path [-,line width=0.2mm] (apv2) edge (axv12);
\path [-,line width=0.2mm] (apv2) edge (axv13);
\path [-,line width=0.2mm] (aqv2) edge (axv11);
\path [-,line width=0.2mm] (aqv2) edge (axv12);
\path [-,line width=0.2mm] (aqv2) edge (axv13);

\path [-,line width=0.2mm] (ayv11) edge (axv12);
\path [-,line width=0.2mm] (azv11) edge (axv12);
\path [-,line width=0.2mm] (azv11) edge (ayv11);
\path [-,line width=0.2mm] (azv12) edge (ayv12);
\path [-,line width=0.2mm] (azv13) edge (ayv13);
\path [-,line width=0.2mm] (ayv12) edge (axv11);
\path [-,line width=0.2mm] (azv12) edge (axv11);
\path [-,line width=0.2mm] (ayv13) edge (axv13);
\path [-,line width=0.2mm] (azv13) edge (axv13);

\path [-,line width=0.2mm] (ayv11) edge (axv22);
\path [-,line width=0.2mm] (azv11) edge (axv22);
\path [-,line width=0.2mm] (ayv11) edge (axv23);
\path [-,line width=0.2mm] (azv11) edge (axv23);
\path [-,line width=0.2mm] (ayv12) edge (axv21);
\path [-,line width=0.2mm] (azv12) edge (axv21);
\path [-,line width=0.2mm] (ayv12) edge (axv23);
\path [-,line width=0.2mm] (azv12) edge (axv23);
\path [-,line width=0.2mm] (ayv13) edge (axv22);
\path [-,line width=0.2mm] (azv13) edge (axv22);
\path [-,line width=0.2mm] (ayv13) edge (axv21);
\path [-,line width=0.2mm] (azv13) edge (axv21);
\end{tikzpicture}
\caption{$a)$ The vertex gadgets~$T^v$ and~$T^w$ for~$s\in\{7,8\}$ and~$\ell=2$. 
The blue lines are only added if~$s$ is odd and the red lines are only added if~$s$ is even.
$b)$ The vertex gadget~$T^v$ for~$s\in\{5,6\}$ and~$\ell=3$.}
\label{fig-visulation-vertex-triangle}
\end{figure}

\begin{const}
\label{const-l-triangle-3-or-4-club}
We set~$s^*\coloneqq \lfloor (s-1)/2\rfloor$.
Let~$(G,k)$ be an instance of \textsc{Clique}. 
We construct an equivalent instance~$(G',3\ell ks^*)$ of \textsc{Vertex Triangle~$s$-Club}.
Recall that~$\ell\ge 1$ and~$s=3$, or~$\ell\ge 2$ and~$s\ge 4$.
For each vertex~$v\in V(G)$ we construct a vertex gadget~$T^v$.
This vertex gadget is used for each reduction of the three cases.
\begin{itemize}
\item We add the vertices~$p^v_i$, and~$q^v_i$, and an edge~$p^v_iq^v_i$ for each~$i\in[\ell]$ to~$G'$.
\item We add a vertex~$x^v_{j,i}$ for each~$i\in[\ell]$ and each~$j\in[s^*]$ to~$G'$.
\item We add an edge~$y^v_{t,i}z^v_{t,i}$ for each~$i\in[\ell]$ and each~$t\in[s^*-1]$ to~$G'$.
Note that these vertices only exist, if~$s\ge 5$.
\end{itemize}

The vertices~$x^v_{j,i}$,~$y^v_{t,i}$, and~$z^v_{t,i}$ are referred to as the \emph{cascading vertices}. 
They are used to ensure that all vertices in~$T^v$ are in exactly~$\ell$ triangles and that there are vertex pairs of distance~$2s^*$ within~$T^v$.
Note that since~$s\ge 3$ we create at least~$\ell$ many~$x$-vertices.
We connect these vertices as follows:
\begin{itemize}
\item We add the edges~$p^v_ix^v_{1,j}$ and~$q^v_ix^v_{1,j}$ for each~$i\in[\ell-1]$ and each~$j\in[\ell]$ to~$G'$.
\item We add the edges~$p^v_\ell x^v_{s^*,j}$ and~$q^v_\ell x^v_{s^*,j}$ for each~$j\in[\ell]$ to~$G'$.
\item We add the edges~$y^v_{t,i}x^v_{t,i}$ and~$z^v_{t,i}x^v_{t,i}$ for each~$i\in[\ell]$, and each~$t\in[s^*-1]$ to~$G'$.
\item We add the edges~$y^v_{t,i}x^v_{t+1,j}$ and~$z^v_{t,i}x^v_{t+1,j}$ for each~$i\in[\ell]$, each~$j\in[\ell]\setminus\{i\}$, and each~$t\in[s^*-1]$ to~$G'$.
\end{itemize}

Note that if~$s=3$ or~$s=4$, the graph~$G'$ is a non-induced biclique where one partite set consists of the vertices~$\{x^v_{1,j}\mid j\in[\ell]\}$ and the other partite set consists of the vertices~$\{\{p^v_i,q^v_i\}\mid i\in[\ell]\}$.
Furthermore, the additional edges are~$p^v_iq^v_i$ for each~$i\in[\ell]$.
Also, observe that each vertex gadget~$T^v$ consists of exactly~$3\ell s^*$ vertices.

From now on, the construction for the three cases differs.
We now connect these vertex gadgets by introducing the \emph{connector edges}:
For each edge~$vw\in E(G)$ we add edges between the vertex gadgets~$T^v$ and~$T^w$ of the corresponding vertices. 
We distinguish the three cases. 
\begin{enumerate}[label=\textbf{Case \Roman*:},leftmargin=*]
\item \textbf{$s$ is odd.} We add the edges~$p^v_iq^w_i$ and~$q^v_ip^w_i$ to~$G'$ for each~$i\in[\ell]$ to~$G'$.
\item \textbf{$s$ is even and~$\ell\ge 3$.} We add the edges~$p^v_1q^w_1$,~$q^v_1p^w_1$,~$p^v_\ell q^w_\ell$, and~$q^v_\ell p^w_\ell$ to~$G'$.
\item \textbf{$s$ is even and~$\ell=2$.} We add the edges~$p^v_1x^w_{s^*,1}$, and~$p^w_1x^v_{s^*,1}$ to~$G'$.
\end{enumerate}
\end{const}

We make the following observation about the connector edges between different vertex gadgets:
If~$s$ is odd (Case~I), or if~$s$ is even and~$\ell\ge 3$ (Case~II), we  have~$N(p^v_i)\setminus T^v\subseteq\{q^w_i\mid vw\in E(G)\}$ and also~$N(q^v_i)\setminus T^v\subseteq\{p^w_i\mid vw\in E(G)\}$ for each~$v\in V(G)$ and each~$i\in[\ell]$.
Otherwise, if~$s$ is even and~$\ell=2$ (Case~III), observe that we have~$N(p^v_1)\setminus T^v=\{x^w_{s*,1}\mid vw\in E(G)\}$ and also~$N(x^v_{s^*,1})\setminus T^v=\{p^w_{1}\mid vw\in E(G)\}$ for each~$v\in V(G)$.
Since only the connector edges have endpoints in different vertex gadgets, we thus observe the following.

\begin{observation}
\label{obs-l-vertex-triangle-s-club-triangles-in-vertex-gadgets}
All~$3$ endpoints of each triangle in~$G'$ are contained in exactly one vertex gadget.
\end{observation}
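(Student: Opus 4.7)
The plan is to prove the contrapositive: suppose a triangle $\{a,b,c\}$ in $G'$ has vertices in more than one vertex gadget, and derive a contradiction in each of the three cases of Construction~\ref{const-l-triangle-3-or-4-club}. Since all edges within a single gadget stay inside that gadget, any such triangle must contain at least one connector edge. I would fix such a connector edge, say $ab$, with $a\in T^v$ and $b\in T^w$ for $v\neq w$, and then argue about the location of the third vertex $c$: it either lies in $T^v$, in $T^w$, or in a third gadget $T^u$.

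The first step is to record, for each case, exactly which vertices of a gadget can be endpoints of connector edges, and what their external neighborhoods look like. In Cases~I and~II this is $N(p^v_i)\setminus T^v\subseteq\{q^w_i:vw\in E(G)\}$ and symmetrically for $q^v_i$ (with $i\in[\ell]$ in Case~I and $i\in\{1,\ell\}$ in Case~II); in Case~III it is $N(p^v_1)\setminus T^v=\{x^w_{s^*,1}:vw\in E(G)\}$ and $N(x^v_{s^*,1})\setminus T^v=\{p^w_1:vw\in E(G)\}$. The crucial structural feature is that connector edges always join vertices of two different \emph{types} (a $p$-vertex to a $q$-vertex, or a $p$-vertex to an $x$-vertex).

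Given this, I would then rule out the three possible positions of $c$. If $c\in T^u$ for a third gadget, then both $ac$ and $bc$ must be connector edges, so by the type restriction $c$ would have to be simultaneously of the type opposite to $a$'s and opposite to $b$'s, which contradicts the fact that $a$ and $b$ themselves have opposite types. If $c\in T^v$, then $bc$ is a connector edge, which forces $c$ to be of the same type as $a$ (namely the type opposite to $b$'s); but $c$ must also lie in $N(a)\cap T^v$, and a quick inspection of the internal edges of $T^v$ (for example $p^v_iq^v_i$, $p^v_ix^v_{1,j}$, etc.) shows that no vertex in $N(a)\cap T^v$ has the type required for $bc$ to be a connector edge. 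The case $c\in T^w$ is symmetric.

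The only mildly delicate step is the last one, because the internal neighborhoods of the $p$- and $q$-vertices do contain several vertices ($q^v_i$, $p^v_i$, and some $x^v_{\cdot,\cdot}$'s), so one must verify that none of them appears in the explicit connector-edge lists. This is a routine check per case, and once it is done, no triangle can span two gadgets, which is precisely the observation.
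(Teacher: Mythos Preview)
Your proposal is correct and follows essentially the same approach as the paper: the paper's entire justification is the paragraph immediately preceding the observation, which records exactly the external neighborhoods you list and then declares the observation immediate. Your case analysis on the location of the third vertex~$c$ and the ``opposite type'' argument simply spell out what the paper leaves implicit; there is no genuine difference in method.
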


Next, we show that each vertex in a vertex gadget~$T^v$ is contained in exactly~$\ell$ triangles.
Together with Observation~\ref{obs-l-vertex-triangle-s-club-triangles-in-vertex-gadgets} this implies that each vertex in~$G'$ is contained in exactly~$\ell$ triangles.

\begin{lemma}
\label{lem-l-vertex-triangle-s-club-each-vertex-in-l-triangles}
Let~$T^v$ be a vertex gadget. 
Each vertex in~$T^v$ is contained in exactly~$\ell$ triangles.
\end{lemma}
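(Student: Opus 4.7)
The plan is to invoke Observation~\ref{obs-l-vertex-triangle-s-club-triangles-in-vertex-gadgets} in order to reduce the claim to the local statement that, inside $G'[T^v]$, every vertex lies in exactly~$\ell$ triangles. Fixing $u \in T^v$, I would read off its neighbourhood in $G'[T^v]$ directly from Construction~\ref{const-l-triangle-3-or-4-club} and identify the edges among those neighbours; the triangles through~$u$ are then in bijection with those inner edges. A case distinction on the type of~$u$ handles all vertices.

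For a pair vertex $p^v_i$ (and symmetrically $q^v_i$), its neighbours inside $T^v$ are $q^v_i$ together with $\ell$ cascading vertices, namely $x^v_{1,j}$ for $j \in [\ell]$ when $i < \ell$ and $x^v_{s^*, j}$ for $j \in [\ell]$ when $i = \ell$. The cascading vertices form an independent set in $T^v$ and each is adjacent to~$q^v_i$, giving exactly the $\ell$ triangles $\{p^v_i, q^v_i, x^v_{1,j}\}$ (respectively with $x^v_{s^*, j}$). For a cascading vertex $x^v_{t,i}$, its neighbourhood decomposes into a collection of $\ell$ pairs: when $t < s^*$ the single pair $\{y^v_{t,i}, z^v_{t,i}\}$ appears ``from above'', and $\ell - 1$ further pairs $\{y^v_{t-1, j}, z^v_{t-1, j}\}$ with $j \neq i$ appear ``from below''; in the extremal layers $t = 1$ and $t = s^*$ one of the two sides is replaced by a corresponding pair (or union of pairs) from the $p, q$ vertices. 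Since the only edges among these neighbours are the intra-pair edges, each of the $\ell$ pairs contributes exactly one triangle, summing to~$\ell$. For a companion vertex $y^v_{t,i}$ (and symmetrically $z^v_{t,i}$), its neighbours in $T^v$ are $z^v_{t,i}$, $x^v_{t,i}$, and $x^v_{t+1, j}$ for $j \neq i$; since $z^v_{t,i}$ is adjacent to each of the $\ell$ cascading neighbours, we again obtain exactly~$\ell$ triangles.

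The main obstacle is keeping the boundary configurations tidy. When $s^* = 1$, i.e., $s \in \{3, 4\}$, the companion vertices $y^v_{t,i}, z^v_{t,i}$ do not exist, and both rules adding edges between $p, q$-vertices and cascading vertices collapse onto the same layer $\{x^v_{1, j} : j \in [\ell]\}$; the count still works out because $x^v_{1, i}$ is then adjacent to all $\ell$ pairs $\{p^v_j, q^v_j\}$, yielding exactly $\ell$ triangles via the edges $p^v_j q^v_j$. I would therefore isolate $s^* = 1$ and the boundary layers $t \in \{1, s^*\}$ as explicit sub-cases, verify in each that the neighbourhood in $T^v$ decomposes into exactly~$\ell$ pairs joined by their intra-pair edge, and conclude that every vertex of $T^v$ lies in exactly~$\ell$ triangles of~$G'$.
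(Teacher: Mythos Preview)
Your proposal is correct and follows essentially the same approach as the paper: both arguments invoke Observation~\ref{obs-l-vertex-triangle-s-club-triangles-in-vertex-gadgets} to restrict attention to $G'[T^v]$, then perform a case analysis over the vertex types ($p^v_i$/$q^v_i$, the cascading $x^v_{t,i}$, and the companions $y^v_{t,i}$/$z^v_{t,i}$), explicitly reading off the neighbourhood from Construction~\ref{const-l-triangle-3-or-4-club} and counting the intra-neighbourhood edges. Your ``$\ell$ disjoint pairs'' framing is a mild repackaging of the same computation the paper carries out case by case, including the boundary layers $t\in\{1,s^*\}$ and the degenerate situation $s^*=1$.
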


\begin{proof}
We make a case distinction, that is, for each vertex~$a\in T^v$ we present exactly~$\ell$ triangles containing vertex~$a$.
We distinguish the different vertices of a vertex gadget.

\textbf{Case 1:} Consider vertex~$p^v_d$ for some~$d\in[\ell-1]$.
Because of Observation~\ref{obs-l-vertex-triangle-s-club-triangles-in-vertex-gadgets} we only have to consider the neighbors of~$p^v_d$ in~$T^v$.
By construction we have~$N(p^v_d)\cap T^v=\{q^v_d\}\cup\{x^v_{1,j}\mid j\in[\ell]\}$.
Observe that the only edges within~$N(p^v_d)\cap T^v$ are the edges~$q^v_dx^v_{1,j}$ for each~$j\in[\ell]$.
Thus,~$p^v_d$ is contained in the~$\ell$ triangles~$\{\{p^v_d,q^v_d,x^v_{1,i}\}\mid i\in[\ell]\}$. 
By similar arguments the same is true for vertex~$q^v_d$.

\textbf{Case 2:} Consider vertex~$p^v_\ell$.
By construction we have~$N(p^v_\ell)= \{q^v_\ell\}\cup\{x^v_{s^*,i}\mid i\in[\ell]\}\cup\{q^w_d\mid vw\in E(G)\}$ if~$s$ is odd or~$s$ is even and~$\ell\ge 3$.
If~$s$ is even and~$\ell=2$ we have~$N(p^v_\ell)=\{q^v_\ell\}\cup\{x^v_{s^*,i}\mid i\in[\ell]\}$.
Observe that the only edges within~$N(p^v_\ell)$ are the edges~$q^v_\ell x^v_{s^*,i}$ for each~$i\in[\ell]$.
Thus,~$p^v_\ell$ is contained in the~$\ell$ triangles~$\{\{p^v_\ell,q^v_\ell,x^v_{s^*,i}\}\mid i\in[\ell]\}$. 
By similar arguments the same is true for vertex~$q^v_\ell$.

\textbf{Case 3:} Now, consider vertices~$x^v_{1,i}$ and~$x^v_{s^*,i}$ for some~$i\in[\ell]$.
Here we have to distinguish if~$s\in\{3,4\}$ or~$s\ge 5$ since~$y^v_{t,i}$ exists only in the second case.

First, consider the case~$s=3$ or~$s=4$.
Note that we now have~$x^v_{1,i}=x^v_{s^*,i}$.
By construction we have~$N(x^v_{1,i})=\{p^v_j\mid j\in[\ell]\}\cup\{q^v_j\mid j\in[\ell]\}$.
Note that the only edges within~$N(x^v_{1,i})$ have the form~$p^v_jq^v_j$ for each~$j\in[\ell]$.
Thus,~$x^v_{1,i}$ is contained in the~$\ell$ triangles~$\{\{x^v_{1,i},p^v_j,q^v_j\}\mid j\in[\ell]\}$.

Second, consider the case~$s\ge 5$. 
First, we investigate vertex~$x^v_{1,i}$.
By construction we have~$N(x^v_{1,i})=\{p^v_j\mid j\in[\ell-1]\}\cup\{q^v_j\mid j\in[\ell-1]\}\cup\{y^v_{1,i},z^v_{1,i}\}$.
The only edges within~$N(x^v_{1,i})$ are the edge~$p^v_jq^v_j$ for each~$j\in[\ell-1]$ and the edge~$y^v_{1,i}z^v_{1,i}$.
Thus,~$x^v_{1,i}$ is contained in the~$\ell-1$ triangles~$\{\{x^v_{1,i},p^v_j,q^v_j\}\mid j\in[\ell-1]\}$, and in the triangle~$\{x^v_{1,i},y^v_{1,i},z^v_{1,i}\}$.

Now, consider vertex~$x^v_{s^*,i}$.
Because of Observation~\ref{obs-l-vertex-triangle-s-club-triangles-in-vertex-gadgets} we only have to consider the neighbors of~$x^v_{s^*,i}$ in~$T^v$.
By construction we observe that~$N(x^v_{s^*,i})\cap T^v\subseteq \{p^v_\ell,q^v_\ell\}\cup\{y^v_{s^*-1,j},z^v_{s^*-1,j}\mid j\in[\ell]\setminus\{i\}\}$.
Note that the only edges within~$N(x^v_{s^*,i})$ are the edge~$p^v_\ell q^v_\ell$ and the edge~$y^v_{s^*-1,j}z^v_{s^*-1,j}$ for each~$j\in[\ell]\setminus\{i\}$.
Thus,~$x^v_{s^*,i}$ is contained in the triangle~$\{x^v_{s^*,i},p^v_\ell,q^v_\ell\}$ and in the~$\ell-1$ triangles $\{\{x^v_{s^*,i},y^v_{s^*-1,j},z^v_{s^*-1,j}\}\mid j\in[\ell]\setminus\{i\}\}$.

\textbf{Case 4:} Consider vertex~$x^v_{r,i}$ for some~$i\in[\ell]$ and some~$r\in[2,s^*-1]$.
Recall that these vertices only exist if~$s\ge 7$.
By construction~$N(x^v_{r,i})=\{y^v_{r,i},z^v_{r,i}\}\cup\{y^v_{r-1,j},z^v_{r-1,j}\mid j\in[\ell]-1\}$.
The only edges within~$N(x^v_{r,i})$ have the form~$y^v_{r,i}z^v_{r,i}$ and~$y^v_{r-1,j}z^v_{r-1,j}$ for each~$j\in[\ell]\setminus\{i\}$.
Thus,~$x^v_{r,i}$ is contained in the triangle~$\{x^v_{r,i},y^v_{r,i},z^v_{r,i}\}$ and in the~$\ell-1$ triangles~$\{\{x^v_{r,i},y^v_{r-1,j},z^v_{r-1,j}\}\mid j\in[\ell]\setminus\{i\}\}$.

\textbf{Case 5:} Finally, consider vertex~$y^v_{t,i}$ for some~$i\in[\ell]$ and some~$t\in[s^*-1]$.
Recall that these vertices only exist if~$s\ge 5$.
By construction we have~$N(y^v_{t,i})=\{x^v_{t,i},z^v_{t,i}\}\cup\{x^v_{t+1,j}\mid j\in[\ell]\setminus\{i\}\}$.
The only edges within~$N(y^v_{t,i})$ is the edge~$x^v_{t,i}z^v_{t,i}$ and the edge~$x^v_{t+1,j}z^v_{t,i}$ for each~$j\in[\ell]\setminus\{i\}$.
Thus,~$y^v_{t,i}$ is contained in the triangle~$\{y^v_{t,i},z^v_{t,i},x^v_{t,i}\}$ and in the~$\ell-1$ triangles~$\{\{y^v_{t,i},z^v_{t,i}, x^v_{t+1,j}\}\mid j\in[\ell]\setminus\{i\}\}$.
By similar arguments the same holds for vertex~$z^v_{t,i}$.

Thus, each vertex in~$T^v$ is contained in exactly~$\ell$ triangles.\qed
\end{proof}

From Lemma~\ref{lem-l-vertex-triangle-s-club-each-vertex-in-l-triangles} and Observation~\ref{obs-l-vertex-triangle-s-club-triangles-in-vertex-gadgets}, we conclude the following.

\begin{observation}
\label{obs-l-triangle-3-or-4-club-empty-or-complete}
Let~$S$ be a vertex-$\ell$-triangle~$s$-club for~$s=3$ and~$\ell\ge 1$ and or for for~$s\ge 4$ and~$\ell\ge 2$ in~$G'$. 
Then,~$S\cap T^v\neq\emptyset\Leftrightarrow T^v\subseteq S$.
\end{observation}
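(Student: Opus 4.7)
The ``$\Leftarrow$'' direction is immediate. For ``$\Rightarrow$'', my plan is to show that if~$S$ contains any single vertex~$a \in T^v$, then the constraint forces the triangles through~$a$ to lie entirely in~$S$, and that starting from~$a$ and iterating this observation propagates across all of~$T^v$.

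First I would invoke Lemma~\ref{lem-l-vertex-triangle-s-club-each-vertex-in-l-triangles}, which states that every vertex of~$T^v$ lies in \emph{exactly}~$\ell$ triangles of~$G'$, together with Observation~\ref{obs-l-vertex-triangle-s-club-triangles-in-vertex-gadgets}, which places all three endpoints of each such triangle inside~$T^v$. Since~$S$ is a vertex-$\ell$-triangle~$s$-club, vertex~$a$ must be contained in at least~$\ell$ triangles of~$G'[S]$; but it has only~$\ell$ triangles available in~$G'$, so all of them must be fully contained in~$S$. Consequently, every vertex~$b$ that appears with~$a$ in some triangle of~$G'$ also lies in~$S \cap T^v$.

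The key step is therefore to show that the ``triangle-adjacency'' relation on~$T^v$ (where two vertices are related whenever they jointly belong to some triangle of~$G'$) is connected. Here I would simply inspect the five cases of the proof of Lemma~\ref{lem-l-vertex-triangle-s-club-each-vertex-in-l-triangles}. For~$s\in\{3,4\}$ the gadget is a near-biclique and each triangle~$\{x^v_{1,i}, p^v_j, q^v_j\}$ directly links an~$x$-vertex with a~$p,q$-pair, so starting from any one vertex we immediately reach every~$x^v_{1,i}$ and every pair~$\{p^v_j,q^v_j\}$. For~$s \ge 5$, the triangles enumerated in Cases~1--5 link: each~$p^v_d$ ($d < \ell$) and~$q^v_d$ to every~$x^v_{1,j}$; each~$p^v_\ell, q^v_\ell$ to every~$x^v_{s^*,j}$; each~$x^v_{r,i}$ to~$\{y^v_{r,i},z^v_{r,i}\}$ and to~$\{y^v_{r-1,j},z^v_{r-1,j}\}$ for all~$j\neq i$; and each~$y^v_{t,i}, z^v_{t,i}$ to~$x^v_{t,i}$ and to~$x^v_{t+1,j}$ for~$j\neq i$. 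Walking along these triangles from any starting vertex reaches every~$x^v_{r,i}$ across the whole range~$r \in [s^*]$, every~$y,z$-pair on each ``layer'', and then the top and bottom~$p,q$-pairs.

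The main obstacle is just making this connectivity traversal precise across the different cases (particularly~$s \ge 5$, where the cascading~$y,z$-layers have to be crossed in order). I would package it as a short induction on the layer index~$r$, using Case~4 of Lemma~\ref{lem-l-vertex-triangle-s-club-each-vertex-in-l-triangles} to move from layer~$r$ to layer~$r+1$ via a shared~$y,z$-edge, and Cases~1--3 to close off the extremal layers at the~$p^v_1,q^v_1$ and~$p^v_\ell, q^v_\ell$ pairs. Since every vertex of~$T^v$ reached in this walk has all of its~$\ell$ triangles lying in~$T^v$, the same forced-inclusion argument propagates, and we conclude~$T^v \subseteq S$.
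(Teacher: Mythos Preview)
Your proposal is correct and follows the same approach as the paper, which simply states that the observation is a consequence of Lemma~\ref{lem-l-vertex-triangle-s-club-each-vertex-in-l-triangles} and Observation~\ref{obs-l-vertex-triangle-s-club-triangles-in-vertex-gadgets} without further argument. You supply the explicit propagation/connectivity step that the paper leaves implicit; this step is indeed needed (and does rely on~$\ell\ge 2$ when~$s\ge 5$ to move between layers via an index~$j\neq i$), so your write-up is actually more complete than the paper's.
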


Now, we prove the correctness of Construction~\ref{const-l-triangle-3-or-4-club}.

\begin{lemma}
\label{lem-l-triangle-3-club-hardness}
For each~$s=3$ and~$\ell\ge 1$, and also for each~$s\ge 4$ and~$\ell\ge 2$ the \textsc{Vertex Triangle~$s$-Club} problem parameterized by~$k$ is W[1]-hard, even if each vertex in the input graph is contained in exactly~$\ell$ triangles.
\end{lemma}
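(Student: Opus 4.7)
The plan is to prove the usual forward-and-backward equivalence ``$G$ has a clique of size $k$ iff $G'$ has a vertex-$\ell$-triangle $s$-club of size $3\ell k s^*$''. Most of the bookkeeping is already in place: Observation~\ref{obs-l-vertex-triangle-s-club-triangles-in-vertex-gadgets} and Lemma~\ref{lem-l-vertex-triangle-s-club-each-vertex-in-l-triangles} together say that every vertex of $G'$ lies in \emph{exactly} $\ell$ triangles, all of them inside a single gadget, and Observation~\ref{obs-l-triangle-3-or-4-club-empty-or-complete} says that any vertex-$\ell$-triangle $s$-club is a disjoint union of full vertex gadgets. What remains is a distance analysis in $G'$, which I would split along the three connector types of Construction~\ref{const-l-triangle-3-or-4-club}.

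For the forward direction, starting from a clique $C$ in $G$ of size $k$ I set $S := \bigcup_{v \in C} T^v$. Then $|S| = 3\ell k s^*$ and Lemma~\ref{lem-l-vertex-triangle-s-club-each-vertex-in-l-triangles} takes care of the triangle condition. The real task is to verify that $\diam(G'[S]) \le s$. I would first bound the diameter of a single gadget $T^v$ level by level: the top level $\{x^v_{1,i}\}_i$ is collapsed to pairwise distance $2$ through $p^v_1$ and $q^v_1$, the bottom level $\{x^v_{s^*,i}\}_i$ is collapsed to pairwise distance $2$ through $p^v_\ell$ and $q^v_\ell$, and consecutive levels $x^v_{t,\cdot}$ and $x^v_{t+1,\cdot}$ are connected at distance at most $2$ through the corresponding $y,z$-vertices. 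This yields a within-gadget diameter of about $2s^*$, i.e., $s-1$ for odd $s$ and $s-2$ for even $s$. For cross-gadget pairs the connector edges supply a single-edge bridge: in Case~I any $p^v_i$ is adjacent to $q^w_i$, in Case~II the same holds at $i\in\{1,\ell\}$, and in Case~III the edge $p^v_1 x^w_{s^*,1}$ links the far ends of the two gadgets. In each case one checks that any pair of vertices is joined by a within-gadget leg, a single connector edge, and another within-gadget leg of total length at most $s$.

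For the backward direction, suppose $S$ is a vertex-$\ell$-triangle $s$-club of size at least $3\ell k s^*$. By Observation~\ref{obs-l-triangle-3-or-4-club-empty-or-complete} we have $S = \bigcup_{v \in P} T^v$ for some $P \subseteq V(G)$, and since $|T^v| = 3\ell s^*$ it follows that $|P| \ge k$. I would show that $P$ induces a clique in $G$ by contradiction: if $vw \notin E(G)$ for some $v, w \in P$, every path from $T^v$ to $T^w$ must leave $T^v$ through a connector endpoint, and since $vw$ itself contributes no connector, the path must first traverse a third gadget $T^u$, which costs at least two extra edges. The witnesses in $T^v$ and $T^w$ should be the vertices farthest from their connector endpoints: $p^v_1$ and $q^w_\ell$ (or $p^w_\ell$) in Cases~I and~II, and the symmetric ``deep'' pair in Case~III. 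Since the within-gadget distance from a witness to the nearest connector endpoint is already on the order of $2s^*$, the two additional edges through $T^u$ push the total length past $s$, yielding the required contradiction.

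The main obstacle is Case~III ($\ell = 2$ with even $s$), where the connector is asymmetric and attaches an $x$-vertex rather than a matched $p,q$-pair, and the smaller value of $\ell$ leaves fewer $y,z$-shortcuts inside a gadget. In this case both the forward-direction upper bound and the backward-direction lower bound become tight, so the length estimates must be carried out with care, tracking exactly which vertices at each end of $T^v$ are reachable from outside $T^v$ within $s-1$ steps. Once the three case analyses are verified, the lemma follows, and the ``exactly $\ell$ triangles'' strengthening is immediate from Lemma~\ref{lem-l-vertex-triangle-s-club-each-vertex-in-l-triangles}.
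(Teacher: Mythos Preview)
Your forward direction is essentially the paper's argument and is fine. The backward direction, however, has a genuine gap: your proposed witness vertices do not work.

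You suggest taking $p^v_1$ and $q^w_\ell$ as the pair forced to distance $>s$ when $vw\notin E(G)$. But in Case~I the connector edges are added for \emph{every} $i\in[\ell]$, so $p^v_1$ and $q^w_\ell$ are themselves connector endpoints, not ``vertices farthest from their connector endpoints''. Concretely, take $s=3$, $\ell=1$ and suppose $v,w$ have a common neighbour $u$ in $G$. Then
\[
p^v_1 \;\to\; q^u_1 \;\to\; p^w_1 \;\to\; q^w_1
\]
is a path of length $3=s$ in $G'$, so $\dist_{G'}(p^v_1,q^w_\ell)\le s$ and no contradiction arises. The same failure occurs in Case~II: for even $s$ and $\ell\ge 3$ one gets $p^v_1\to q^u_1\to p^w_1$ in two steps and then $p^w_1$ to $q^w_\ell$ inside $T^w$ in $2s^*=s-2$ steps, for a total of exactly $s$. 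Your lower-bound template ``(distance to connector in $T^v$) $+\,2\,+$ (distance to connector in $T^w$)'' collapses because the first and last summands are $0$ with your witnesses.

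The paper's fix is to pick witnesses that are simultaneously far from \emph{all} connector endpoints of their gadget. In Case~I these are the ``middle'' cascading vertices $x^v_{(s+1)/4,1}$ or $y^v_{(s-1)/4,1}$, sitting at distance $(s-1)/2$ from both $T^v_0$ and $T^v_\ell$; then the detour through a third gadget costs $(s-1)/2+2+(s-1)/2=s+1$. In Case~II the pair is asymmetric, $p^v_2$ and $x^w_{s^*,1}$, chosen so that the two possible routes (via index~$1$ or via index~$\ell$) both sum to $s+1$; note that $p^v_2$ is \emph{not} a connector endpoint precisely because $\ell\ge 3$. Case~III needs yet another bespoke choice depending on $s\bmod 8$. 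Once you replace your witnesses by these, your outline goes through; the rest of your plan matches the paper.
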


\begin{proof}
We show that~$G$ contains a clique of size at least~$k$ if and only if~$G'$ contains a vertex-$\ell$-triangle~$s$-club of size at least~$3\ell ks^*$.

Let~$K$ be a clique of size at least~$k$ in~$G$.
We argue that~$S\coloneqq \bigcup_{v\in K}T^v$ is a vertex-$\ell$-triangle~$s$-club of size at least~$3\ell ks^*$ in~$G'$.
The size bound follows from the fact that each~$T^v$ consists of exactly~$3\ell s^*$ vertices. 
Furthermore, by Lemma~\ref{lem-l-vertex-triangle-s-club-each-vertex-in-l-triangles} each vertex in~$T^v$  for some~$v\in V(G)$ is contained in exactly~$\ell$ triangles in~$G'[T^v]$.
Hence, it remains to show that~$S$ is an~$s$-club.

To do so, we first prove the following two claims.
To formulate the claims, we need some further notation.
We define~$T^v_0\coloneqq \{p^v_1, \ldots, p^v_{\ell-1}\}\cup\{q^v_1, \ldots, q^v_{\ell-1}\}$ and~$T^v_\ell\coloneqq \{p^v_\ell,q^v_\ell\}$.
Recall that if~$\ell=1$, then~$T^v_0=\emptyset$.
Otherwise, both sets are nonempty.
This claim

\begin{claim}
\label{claim-distances-in-tv}
For~$\ell\ge 2$, we have~$\dist_{G'}(u,a)+\dist_{G'}(u,b)=2s^*$ for each vertex~$a\in T^v_0$, each vertex~$b\in T^v_\ell$, and for each vertex~$u\in T^v\setminus(T^v_0\cup T^v_\ell)$.
\end{claim}

\begin{claimproof}
By~$X^v_1\coloneqq \{x^v_{1,j}\mid j\in[\ell]\}$ we denote the neighbors of~$T^v_0$.
We define the sets~$X^v_2, \ldots, X^v_{s^*}$ and~$Y^v_1, Z^v_1, \ldots, Y^v_{s^*-1},Z^v_{s^*-1}$ analogue.
We first make some observations about the neighborhoods of these vertex sets which then allows us to obtain upper and lower bounds on these paths.
\begin{itemize}
\item $T^v_0\cup\{y^v_{1,i},z^v_{1,i}\}\subseteq N(x^v_{1,i})$ for each~$i\in[\ell]$.
Hence~$N(X^v_1)=T^v_0\cup Y^v_1\cup Z^v_1$.
\item $T^v_\ell\cup\{y^v_{s^*-1,j},z^v_{s^*-1,j}\}\subseteq N(x^v_{s^*,i})$ for each~$i\in[\ell]$ and each~$j\in[\ell]\setminus\{i\}$. 
Hence,~$N(X^v_{s^*})=T^v_\ell\cup Y^v_{s^*-1}\cup Z^v_{s^*-1}$.
\item $\{y^v_{t,i},y^v_{t-1,j},z^v_{t,i},z^v_{t-1,j}\}\subseteq N(x^v_{t,i})$ for each~$t\in[2,s^*-1]$, each~$i\in[\ell]$ and each~$j\in[\ell]\setminus\{i\}$.
Hence,~$N(X^v_t)=Y^v_t\cup Z^v_t\cup Y^v_{t-1}\cup Z^v_{t-1}$ for~$t\in[2,s^*-1]$.
\item $x^v_{t,i}, x^v_{t+1,j},z^v_{t,i}\in N(y^v_{t,i})$ and also~$x^v_{t,i}, x^v_{t+1,j},y^v_{t,i}\in N(z^v_{t,i})$ for each~$t\in[s^*-1]$, each~$i\in[\ell]$ and each~$j\in[\ell]\setminus\{i\}$.
Hence,~$N(Y^v_t)=X^v_t\cup Z^v_t\cup Y^v_{t-1}$ and~$N(Z^v_t)=X^v_t\cup Y^v_t\cup Y^v_{t-1}$ for each~$t\in[s^*-1]$.
\end{itemize}

From the above and a vertex~$u\in X^v_z$ we obtain a path to~$a$ of length~$2z-1$ by using subsequent vertices from~$Y_{z-1},X_{z-1},\ldots, Y_1,X_1$.
A similar observation can be made for a path from~$u$ to~$b$.
The cases~$u\in Y_z\cup Z_z$ is treated similar. 
This implies that~$\dist_{G'}(u,a)+\dist_{G'}(u,b)\le 2s^*$ for each vertex~$a\in T^v_0$, each vertex~$b\in T^v_\ell$, and for each vertex~$u\in T^v\setminus(T^v_0\cup T^v_\ell)$.
Furthermore, note that in each two steps on any path we can increase/decrease the first sub-index by one.
Hence~$\dist_{G'}(u,a)+\dist_{G'}(u,b)\ge 2s^*$.
Thus,~$\dist_{G'}(u,a)+\dist_{G'}(u,b)=2s^*$.
\end{claimproof}

Now, we use Claim~\ref{claim-distances-in-tv} to show that~$T^v$ is a~$(2s^*)$-club.

\begin{claim}
\label{claim-tv-is-club}
If~$s$ is odd, then~$T^v$ is an~$(s-1)$-club and if~$s$ is even, then~$T^v$ is an~$(s-2)$-club for each vertex~$v\in V(G)$.
\end{claim}

\begin{claimproof}
Recall that for~$s\in\{3,4\}$ the gadget~$T^v$ is a non-induced biclique. 
Hence, the statement is true in this case.
In the following, we assume that~$s\ge 5$.
Note that this implies that~$\ell\ge 2$.
We only consider the case that~$s$ is odd. 
The case that~$s$ is even follows analogously.

Consider a pair of vertices~$u,\widetilde{u}$ of~$T^v\setminus(T^v_0\cup T^v_\ell)$.
To bound the distance of~$u$ and~$\widetilde{u}$, we consider one path via a vertex in~$T^v_0$ and one via a vertex in~$T^v_\ell$.
We have $\dist_{G'}(u,\widetilde{u})\le\min(\dist_{G'}(u,a)+\dist_{G'}(a,\widetilde{u}),\dist_{G'}(u,b)+\dist_{G'}(b,\widetilde{u}))$ for each vertex~$a\in T^v_0$, and each vertex~$b\in T^v_\ell$.
From Claim~\ref{claim-distances-in-tv} we know that~$\dist_{G'}(u,a)+\dist_{G'}(u,b)=2s^*$ and that~$\dist_{G'}(\widetilde{u},a)+\dist_{G'}(\widetilde{u},b)=2s^*$.
Hence,~$\dist_{G'}(u,\widetilde{u})\le 2s^*\le s-1$.
Thus,~$T^v$ is indeed an~$(s-1)$-club. 
\end{claimproof}

Now, we show that for two vertices~$v$ and~$w$ in~$K$, a vertex~$u\in T^v$, and a vertex~$\widetilde{u}\in T^w$ we have~$\dist_{G'}(u,\widetilde{u})\le s$.
We consider the three cases:

\textbf{Case I:~$s$ is odd.}
Observe that since~$vw\in E(G)$, each vertex~$u\in (T^v_0\cup T^v_\ell)$ has one neighbor in~$T^w$.
Since~$T^w$ is an~$(s-1)$-club by Claim~\ref{claim-tv-is-club}, we obtain that for each vertex~$\widetilde{u}\in T^w$ we have~$\dist_{G'}(u,\widetilde{u})\le s$ if~$u\in T^v_0\cup T^v_\ell$.
Hence, it remains to consider the case that~$u\in T^v\setminus(T^v_0\cup T^v_\ell)$ and that~$\widetilde{u}\in T^w\setminus(T^w_0\cup T^w_\ell)$.
For this, let~$u^v_1\coloneqq \dist_{G'}(u,p^v_1), \widetilde{u}^w_1\coloneqq \dist_{G'}(q^w_1,\widetilde{u}), u^v_\ell\coloneqq \dist_{G'}(u,p^v_\ell)$, and~$\widetilde{u}^w_\ell\coloneqq \dist_{G'}(q^w_\ell,\widetilde{u})$.
Note that 
\begin{align*}
\dist_{G'}(u,\widetilde{u}) &\le\min(u^v_1+1+\widetilde{u}^w_1,u^v_\ell+1+\widetilde{u}^w_\ell)\\
&=1+\min(u^v_1+\widetilde{u}^w_1,u^v_\ell+\widetilde{u}^w_\ell).
\end{align*}

In this inequality, the '$+1$' is the result of the fact that we have to use an edge to get from gadget~$T^v$ to gadget~$T^w$.
By Claim~\ref{claim-distances-in-tv} we know that~$u^v_1+u^v_\ell= 2s^*$ and that~$\widetilde{u}^w_1+\widetilde{u}^w_\ell= 2s^*$.
Since~$s$ is odd, we obtain that~$\dist_{G'}(u,\widetilde{u})\le 1 + 2s^*=s$.
Thus,~$S$ is a vertex-$\ell$-triangle~$s$-club of size~$3\ell ks^*$.

\textbf{Case II:~$s$ is even and~$\ell\ge 3$.} For each vertex pair~$u\in T^v\setminus T^v_0$ and~$\widetilde{u}\in T^w\setminus T^w_0$ the proof of~$\dist_{G'}(u,\widetilde{u})\le s$ is analogous to the proof in Case~$1$ handling odd values of~$s$.
Hence, it remains to show that each vertex~$u\in T^v_0$ has distance at most~$s$ to each vertex~$\widetilde{u}\in T^w$.
Observe that~$\dist_{G'}(u,p^v_1)\le 2$ since~$T^v_0\subseteq N(x^v_{1,1})$ and~$x_{1,1}^*$ is a neighbor of~$p^v_1$.
Thus,~$\dist_{G'}(u,q^w_1)=3$ since~$vw\in E(G)$.
Furthermore, for each vertex~$\widetilde{u}\in T^w\setminus T^w_\ell$ we have~$\dist_{G'}(\widetilde{u},q^w_1)\le s-3$ by the proof of Claim~\ref{claim-tv-is-club}.
Hence,~$\dist(u,\widetilde{u})\le s$.
Thus, it remains to consider the case that~$u\in T^v_0$ and that~$\widetilde{u}\in T^w_\ell=\{p^w_\ell,q^w_\ell\}$.
Since~$vw\in E(G)$ we conclude that~$\widetilde{u}$ has a neighbor in~$T^v$.
Since~$T^v$ is an~$(s-2)$-club by Claim~\ref{claim-tv-is-club} we conclude that~$\dist_{G'}(u,\widetilde{u})=s-1$.
Hence,~$S$ is a vertex-$\ell$-triangle~$s$-club of size~$3\ell ks^*$.

\textbf{Case III:~$s$ is even and~$\ell=2$.}
Let~$x^w\coloneqq x_{s^*,1}^w$ and~$x^v\coloneqq x_{s^*,1}^v$. 
Furthermore, let~$u^v_1\coloneqq \dist_{G'}(u,p^v_1), \widetilde{u}^w_x\coloneqq \dist_{G'}(x^w,\widetilde{u}), u^v_x\coloneqq \dist_{G'}(u,x^v)$, and~$\widetilde{u}^w_1\coloneqq \dist_{G'}(p^w_1,\widetilde{u})$.
We have
\begin{align*}
\dist_{G'}(u,\widetilde{u})&\le \min(u^v_1+1+\widetilde{u}^w_x,u^v_x+1+\widetilde{u}^w_1)\\
&=1+\min(u^v_1+\widetilde{u}^w_x,u^v_x+\widetilde{u}^w_1).
\end{align*}

Again, the '$+1$' is the result that we have to use an edge to go from~$T^v$ to gadget~$T^w$.
Consider the following claim.

\begin{claim}
\label{claim-s-even-distances}
For each vertex~$u\in T^v$ we have~$u^v_1+u^v_x\le s-1$.
\end{claim}

Claim~\ref{claim-s-even-distances} directly implies that $$\min(u^v_1+\widetilde{u}^w_x,u^v_x+\widetilde{u}^w_1)\le s-1$$ and thus~$\dist_{G'}(u,\widetilde{u})\le s$.
This in turn implies that ~$S$ is a vertex-$\ell$-triangle~$s$-club of size~$3\ell ks^*$.
Thus, it remains to prove Claim~\ref{claim-s-even-distances}.

\begin{claimproof}
Since~$T^v$ is an~$(s-2)$-club by Claim~\ref{claim-tv-is-club}, we conclude that each vertex~$u\in (N[p^v_1]\cup N[x^v_{s^*,1}])\cap T^v$ has distance at most~$s$ to each vertex in~$T^v$.
Hence, it remains to show that each vertex~$u\in T^v\setminus(N[p^v_1]\cup N[x^v_{s^*,1}]))$ has distance at most~$s$ to each vertex~$u'\in T^v$.
If~$s=4$, then~$T^v\subseteq (N[p^v_1]\cup N[x^v_{s^*,1}])$ and hence the claim is proven.
Hence, in the following, we consider the case~$s\ge 6$.

Next, we consider the case that~$u\in T^v_0$.
Since~$T^v_0\subseteq N(x^v_{1,1})$ we conclude that~$u^v_1\le 2$.
By the proof of Claim~\ref{claim-tv-is-club} we conclude that~$u^v_x=s-3$. 
Hence, Claim~\ref{claim-s-even-distances} is true in this case.

For all remaining cases, it is sufficient to prove the claim for each vertex~$u\in T^v\setminus(T^v_0\cup T^v_\ell)$.
First, we consider the case that~$u\coloneqq x^v_{t,i}$ for some~$t\in[s^*]$ and some~$i\in[\ell]$. 
Then we have~$\dist_{G'}(x^v_{t,i},p^v_1)=2t-1$ and~$\dist_{G'}(x^v_{t,i},x^v_{s^*,1})\le 2(s^*-t)+2$ and hence~$\dist_{G'}(x^v_{t,i},p^v_1)+\dist_{G'}(x^v_{t,i},x^v_{s^*,1})\le 2t-1+s^*-2t+2=s^*+1\le s-1$.

Second, we consider the case that~$u\coloneqq y^v_{t,i}$ for some~$t\in[s^*-1]$ and some~$i\in[\ell]$.
Then we have~$\dist_{G'}(y^v_{t,i},p^v_1)=2t$ and~$\dist_{G'}(y^v_{t,i},x^v_{s^*,1})\le 2(s^*-t)+1$ and hence we obtain that~$\dist_{G'}(y^v_{t,i},p^v_1)+\dist_{G'}(y^v_{t,i},x^v_{s^*,1})\le 2t+s^*-2t+1=s^*+1\le s-1$.
The case that~$u\coloneqq z^v_{t,i}$ follows by the same argumentation.
\end{claimproof}

Conversely, suppose that~$G'$ contains a vertex-$\ell$-triangle~$s$-club  of size at least $3\ell ks^*$.
Because of Observation~\ref{obs-l-triangle-3-or-4-club-empty-or-complete} for each vertex-gadget~$T^v$ we either have $T^v\subseteq S$ or~$T^v\cap S=\emptyset$.
Hence,~$S$ contains at least~$k$ vertex gadgets.
We assume towards a contradiction that~$S$ contains two vertex gadgets~$T^v$ and~$T^w$ such that~$vw\notin E(G)$. 
In each case, we will determine a vertex~$u_v\in T^v$ and a vertex~$u_w\in T^w$ such that~$\dist_{G'}(u_v,u_w)\ge s+1$.
This contradiction to the $s$-club property allows us to conclude that the set~$\{v \mid T^v\subseteq S\}$ is a clique of size at least~$k$ in~$G$.

\textbf{Case I:~$s$ is odd.}
We define vertex~$u_v$ as follows. 
The vertex~$u_w$ is defined analogously.
Recall that in this case we have~$s^*=(s-1)/2$. 

\begin{itemize}
\item If~$s\equiv 3\mod 4$, we set~$u_v\coloneqq x^v_{(s+1)/4,1}$. 
\item Otherwise, if~$s\equiv 1\mod 4$, we set~$u_v\coloneqq y^v_{(s-1)/4,1}$.
\end{itemize}

Observe that for each vertex~$u\in T^v_0\cup T^v_\ell$ we have~$\dist_{G'}(u_v,u)=(s-1)/2$.
Furthermore, recall that the vertices in~$T^v_0\cup T^v_\ell$ are the only vertices in~$T^v$ with neighbors in other vertex gadgets.

Similarly, for each vertex~$u'\in T^w_0\cup T^w_\ell$ we have~$\dist_{G'}(u_w,u')=(s-1)/2$.
But since~$vw\notin E(G)$ there are no edges between~$T^v$ and~$T^w$, we obtain for each choice of~$u$ and~$u'$ that~$\dist_{G'}(u_v,u_w)\ge\dist_{G'}(u_v,u)+2+\dist_{G'}(u',u_w)=(s-1)/2+2+(s-1)/2=s+1$, a contradiction.

\textbf{Case II:~$s$ is even and~$\ell\ge 3$.}
We set~$u_v\coloneqq p^v_2$ and~$u_w\coloneqq x^w_{s^*,1}$.
Note that~$u_v\in T^v_0$ since~$\ell\ge 3$.
By the construction we obtain that for each vertex~$u_0\in\{p^v_1,q^v_1\}$ we have~$\dist_{G'}(u_v,u_0)=2$ since~$T^v_0\subseteq N(x^v_{1,1})$.
Furthermore, for each vertex~$u_\ell\in\{p^v_\ell, q^v_\ell\}$ we have~$\dist_{G'}(u_v,u_\ell)=s-2$.
Next, observe that for each vertex~$u'_0\in\{p^w_1, q^w_1\}$ we have~$\dist_{G'}(u_w,u'_0)=s-3$ and for each vertex~$u'_\ell\in\{p^w_\ell, q^w_\ell\}$ we have~$\dist(u_w,u'_\ell)=1$.

Since~$vw\notin E(G)$, there are no edges between the gadgets~$T^v$ and~$T^w$. 
Hence, $\dist(u_v,u_w)\ge\min(\dist(u_v, u_0)+2+\dist_{G'}(u'_0,u_w),\dist_{G'}(u_v, u_\ell)+2+\dist_{G'}(u'_\ell,u_w))$ for each vertex~$u_0\in\{p^v_1, q^v_1\}$, each~$u'_0\in\{p^w_1,q^w_1\}$, each~$u_\ell\in T^v_\ell$, and each~$u'_\ell\in T^w_\ell$.
By the above argumentation we obtain~$\dist_{G'}(u_v,u_w)\ge\min(2+2+s-3,s-2+2+1)=s+1$, a contradiction.

\textbf{Case III:~$s$ is even and~$\ell=2$.}
We define the vertices~$u_v$ and~$u_w$ as follows:

\begin{itemize}
\item If~$s=4$, we set~$u_v\coloneqq x^v_{1,2}$ and~$u_w\coloneqq p^w_2$.
\item If~$s\equiv 0\mod 4$ and~$s\ge 8$, we set~$u_v\coloneqq x^v_{s/4,1}$ and~$u_w\coloneqq y^w_{s/4,1}$.
\item If~$s\equiv 2\mod 8$, we set~$u_v\coloneqq y^v_{(s-2)/4,2}$ and~$u_w\coloneqq x^w_{(s+2)/4,1}$.
\item If~$s\equiv 6\mod 8$, we set~$u_v\coloneqq y^v_{(s-2)/4,1}$ and~$u_w\coloneqq x^w_{(s+2)/4,2}$.
\end{itemize}

From the definition of these vertices we obtain that~$\dist_{G'}(u_v,p^v_1)=s/2-1$, $\dist_{G'}(u_v,x^v_{s^*,1})=s/2$,~$\dist_{G'}(u_w,p^w_1)=s/2$, and that~$\dist_{G'}(u_w,x^w_{s^*,1})=s/2-1$.
Recall that the vertices~$p^u_1$ and~$x^u_{s^*,1}$ are the only vertices in~$T^u$ which have neighbors outside~$T^u$ for each~$u\in V(G)$.
Furthermore, observe that all neighbors of~$p^u_1$ which are not contained in~$T^u$ are the vertices~$x^b_{s^*,1}$ where~$ub\in E(G)$.
Similar, all neighbors of~$x^u_{s^*,1}$ which are not in~$T^u$ are of the form~$p^b_1$ where~$ub\in E*G)$.
We conclude that~$\dist_{G'}(u_v,u_w)\ge \dist(u_v,p^v_1)+3+\dist(x^w_{s^*,1},u_w)$
Here, the '$+3$' results from the fact that at least~$3$ edges to switch the vertex gadgets have to be used: one is not sufficient since~$uw\notin E(G)$ and also two are not sufficient since in two steps one can only reach a vertex~$p^c_1$ for~$c\in V(G)$ from~$p^v_1$ but no vertex~$x^d_{s^*,1}$ for~$d\in V(G)$.
Hence,~$\dist_{G'}(u_v,u_w)\ge s/2-1+3+s/2-1=s+1$, a contradiction.\qed
\end{proof}

\section{Edge Triangle~\texorpdfstring{$s$}{s}-Club}
In this section we settle the parameterized complexity of \textsc{Edge Triangle~$s$-Club} with respect to the solution size~$k$.
Recall that a vertex set~$S$ is an edge-$\ell$-triangle~$s$-club if~$G[S]$ contains a spanning subgraph~$G'=(S,E')$ such that each edge in~$E(G')$ is contained in at least~$\ell$ triangles within~$G'$ and the diameter of~$G'$ is at most~$s$.
First, we show that \textsc{Edge Triangle~$s$-Club} is FPT with respect to~$k$ when~$\ell=1$ irrespective of the value of~$s$ by providing a Turing kernel. 
To show this, it is sufficient to delete edges which are not part of a triangle. 
Afterwards, we prove W[1]-hardness of \textsc{Edge Triangle~$s$-Club} with respect to~$k$ for all fixed~$\ell\ge 2$.

\subsection{Edge Triangle~\texorpdfstring{$s$}{s}-Club with~\texorpdfstring{$\ell=1$}{\ell=1}}

Now, we prove that \textsc{Edge Triangle~$s$-Club} for~$\ell=1$ admits a Turing kernel with respect to~$k$ implying that the problem is FPT. 
To obtain the kernel we need the following reduction rule which removes edges which are in no triangle.

\begin{rrule}
\label{rr-1-edge-triangle-s-club-remove-edges-in-no-triangle}
Let~$(G,k)$ be an instance of \textsc{Edge Triangle~$s$-Club}. 
Delete all edges from~$G$ which are not part of any triangle.
\end{rrule}

It is clear that Reduction Rule~\ref{rr-1-edge-triangle-s-club-remove-edges-in-no-triangle} is correct and can be applied in polynomial time.
The idea that after Reduction Rule~\ref{rr-1-edge-triangle-s-club-remove-edges-in-no-triangle} is applied, we can bound the size of the neighborhood of each vertex.
Next, we prove that after the application of Reduction Rule~\ref{rr-1-edge-triangle-s-club-remove-edges-in-no-triangle} each edge with both endpoints in the closed neighborhood of a vertex in contained in a triangle.

\begin{lemma}
\label{lem-1-edge-triangle-2-club-neighborhood}
Let~$(G,k)$ be an instance of \textsc{Edge Triangle~$s$-Club} with~$\ell=1$ to which Reduction Rule~\ref{rr-1-edge-triangle-s-club-remove-edges-in-no-triangle} is applied.
For each vertex~$v\in V(G)$ each edge in~$G[N[v]]$ is contained in at least one triangle in~$G[N[v]]$.
\end{lemma}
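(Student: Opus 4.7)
The plan is to prove the lemma by a direct case distinction on whether the edge of interest has~$v$ as an endpoint. Fix a vertex~$v \in V(G)$ and an arbitrary edge~$xy \in E(G[N[v]])$, so that both~$x$ and~$y$ lie in~$N[v]$.

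First I would handle the easy case where one endpoint equals~$v$, say~$x = v$. Then~$y \in N(v)$ and the edge~$vy$ itself appears in~$G[N[v]]$; but since Reduction Rule~\ref{rr-1-edge-triangle-s-club-remove-edges-in-no-triangle} has been applied, the edge~$vy$ still lies in at least one triangle~$\{v,y,z\}$ of~$G$. Now the key observation is that~$z$ must be adjacent to~$v$, hence~$z \in N(v) \subseteq N[v]$, so the triangle~$\{v,y,z\}$ is entirely contained in~$G[N[v]]$.

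Next I would treat the main case,~$x,y \in N(v) \setminus \{v\}$. Here I do not need to appeal to the reduction rule at all: by the definition of~$N(v)$, both~$vx$ and~$vy$ are edges of~$G[N[v]]$, and together with the given edge~$xy$ they form the triangle~$\{v,x,y\}$, which clearly lies in~$G[N[v]]$.

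Since these two cases exhaust all possibilities for an edge in~$G[N[v]]$, every such edge is in at least one triangle of~$G[N[v]]$, proving the lemma. No real obstacle arises; the only subtlety is remembering that in the first case the third vertex of the triangle given by the reduction rule is automatically a neighbor of~$v$ and therefore sits inside~$N[v]$, so that the triangle witnessing the edge lies entirely in the induced subgraph we care about.
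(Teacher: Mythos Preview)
Your proof is correct and follows essentially the same two-case split as the paper: first treat edges incident to~$v$ using the reduction rule to find a third vertex that is necessarily in~$N(v)$, and then observe that any edge between two neighbors of~$v$ forms a triangle with~$v$ itself. The paper's presentation is slightly terser but the argument is identical.
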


\begin{proof}
First, we consider edges of the form~$uv$ where~$u$ is a neighbor of~$v$.
Since Reduction Rule~\ref{rr-1-edge-triangle-s-club-remove-edges-in-no-triangle} is applied, there exists another vertex~$w\in V(G)$ such that $G[\{u,v,w\}]$ is a triangle. 
Observe that~$w\in N(v)$.

Second, each edge~$uw$ with~$u,w\in N(v)$ is in a triangle with vertex~$v$.\qed
\end{proof}

Next, we show that any instance such that sufficiently many vertices are close to some vertex~$v$ is a yes-instance.

\begin{lemma}
\label{lem-1-edge-triangle-2-club-neighborhood-s-half}
Let~$(G,k)$ be an instance of \textsc{Edge Triangle~$s$-Club} with~$\ell=1$ to which Reduction Rule~\ref{rr-1-edge-triangle-s-club-remove-edges-in-no-triangle} is applied.
Then,~$(G,k)$ is a yes-instance if~$|N_{\lfloor s/2\rfloor}[v]|\ge k$ for some vertex~$v\in V(G)$.
\end{lemma}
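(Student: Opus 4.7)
The plan is to set $S \coloneqq N_{\lfloor s/2\rfloor}[v]$, which already has size at least~$k$ by hypothesis, and to exhibit a spanning subgraph $G'$ of $G[S]$ that certifies $S$ as an edge-$1$-triangle $s$-club. The natural candidate is $G' \coloneqq (S, E')$ where $E'$ consists of precisely those edges of $G[S]$ that lie in at least one triangle of $G[S]$.

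With this choice the edge-triangle condition is automatic: if $uw \in E'$ is witnessed by a triangle $\{u,w,x\}$ in $G[S]$, then the two other edges $ux$ and $wx$ of that triangle also belong to $E'$ by the same witness, so the entire triangle is contained in $G'$ and $uw$ is in a triangle of $G'$.

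The main work is to verify that $\diam(G') \le s$. I would prove the stronger claim that $\dist_{G'}(u,v) \le \lfloor s/2\rfloor$ for every $u \in S$, from which $\diam(G') \le 2\lfloor s/2\rfloor \le s$ follows by concatenating two such paths at~$v$. To show this, take a shortest $u$-$v$ path $v = u_0, u_1, \ldots, u_r = u$ in~$G$, where $r \le \lfloor s/2\rfloor$. For each $i \in [r]$, applying Reduction Rule~\ref{rr-1-edge-triangle-s-club-remove-edges-in-no-triangle} yields a triangle $\{u_{i-1}, u_i, x_i\}$ in~$G$, and since $u_{i-1}$ is at distance $i-1 \le \lfloor s/2\rfloor - 1$ from~$v$, its neighbor $x_i$ satisfies $\dist_G(x_i, v) \le \lfloor s/2\rfloor$, so $x_i \in S$. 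Thus the whole triangle lies in $G[S]$, the edge $u_{i-1}u_i$ belongs to $E'$, and the shortest $u$-$v$ path survives in~$G'$.

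Combining both parts, $G'$ is a spanning subgraph of $G[S]$ in which every edge lies in a triangle and whose diameter is at most~$s$, so $S$ is an edge-$1$-triangle $s$-club of size at least~$k$. I do not expect a serious obstacle; the only delicate point is ensuring that each witness vertex~$x_i$ stays inside~$S$, which is exactly where the hypothesis $\lfloor s/2\rfloor$ (rather than $\lfloor s/2\rfloor - 1$, as in the vertex variant) gives the needed slack of one extra BFS layer.
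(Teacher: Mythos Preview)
Your proposal is correct and follows essentially the same approach as the paper: take $S=N_{\lfloor s/2\rfloor}[v]$ and verify that a suitable spanning subgraph of $G[S]$ has the triangle property and diameter at most~$s$ via shortest paths to~$v$. The only cosmetic difference is that the paper packages the observation ``the triangle witness for an edge incident to a vertex of $N_{\lfloor s/2\rfloor-1}[v]$ stays inside $S$'' into the preceding Lemma~\ref{lem-1-edge-triangle-2-club-neighborhood} and phrases the spanning subgraph as the union $\bigcup_{w\in N_{\lfloor s/2\rfloor-1}[v]} G[N[w]]$, whereas you argue it directly along the BFS path; your choice of $E'$ is the maximal such subgraph and contains the paper's.
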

\begin{proof}
By Lemma~\ref{lem-1-edge-triangle-2-club-neighborhood} each edge in~$N_{\lfloor s/2\rfloor}[v]$ is contained in at least one triangle since~$N_{\lfloor s/2\rfloor}[v]= \bigcup_{w\in N_{\lfloor s/2\rfloor-1}[v]}N[w]$.
Furthermore, each vertex in~$N_{\lfloor s/2\rfloor}[v]$ has distance at most~$\lfloor s/2\rfloor$ to vertex~$v$.
Hence,~$N_{\lfloor s/2\rfloor}[v]$ is an~$s$-club and by definition~$|N_{\lfloor s/2\rfloor}[v]|\ge k$.\qed
\end{proof}

Lemma~\ref{lem-1-edge-triangle-2-club-neighborhood-s-half} implies a Turing kernel for~$k$ which implies that the problem is fixed-parameter tractable. 

\begin{theorem}
\label{thm-1-edge-triangle-s-club-fpt}
\textsc{Edge Triangle~$s$-Club} for~$\ell=1$ admits a~$k^2$-vertex Turing kernel if~$s$ is even and  a~$k^3$-vertex Turing kernel if~$s$ is odd and~$s\ge 3$.
\end{theorem}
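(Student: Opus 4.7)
The plan is to combine Reduction Rule~\ref{rr-1-edge-triangle-s-club-remove-edges-in-no-triangle} with Lemma~\ref{lem-1-edge-triangle-2-club-neighborhood-s-half} to bound $|N_s[v]|$ for every $v\in V(G)$, and then build a Turing kernel by querying the oracle on each such ball. Concretely, I would first exhaustively apply Reduction Rule~\ref{rr-1-edge-triangle-s-club-remove-edges-in-no-triangle} in polynomial time. By Lemma~\ref{lem-1-edge-triangle-2-club-neighborhood-s-half}, if some vertex~$v$ satisfies $|N_{\lfloor s/2\rfloor}[v]|\ge k$ then $(G,k)$ is a trivial yes-instance and we are done. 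So we may henceforth assume $|N_{\lfloor s/2\rfloor}[v]|<k$ for every $v\in V(G)$.

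The key step is then to bound $|N_s[v]|$ by iterating the $\lfloor s/2\rfloor$-neighborhood inclusion $N_{a+b}[v]\subseteq N_b[N_a[v]]$. If $s$ is even, then $s=2\lfloor s/2\rfloor$, so
\[
 N_s[v]\;\subseteq\;N_{\lfloor s/2\rfloor}\!\bigl[N_{\lfloor s/2\rfloor}[v]\bigr],
\]
and the bound $|N_{\lfloor s/2\rfloor}[u]|<k$ for every~$u$ yields $|N_s[v]|<k^2$. If $s$ is odd and $s\ge 3$, then $\lfloor s/2\rfloor\ge 1$ and hence $3\lfloor s/2\rfloor\ge s$, so
\[
 N_s[v]\;\subseteq\;N_{\lfloor s/2\rfloor}\!\Bigl[N_{\lfloor s/2\rfloor}\!\bigl[N_{\lfloor s/2\rfloor}[v]\bigr]\Bigr],
\]
which gives $|N_s[v]|<k^3$.

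To obtain the Turing kernel I would then, for each vertex $v\in V(G)$, consider the instance $(G[N_s[v]],k)$ and query the oracle on it. Any edge-$1$-triangle $s$-club $S$ of size at least $k$ with $v\in S$ is contained in $N_s[v]$, because the diameter constraint forces every vertex of~$S$ to be within distance~$s$ of~$v$ in the spanning subgraph witnessing the property, which is a subgraph of $G$. Conversely, any solution in $G[N_s[v]]$ is a solution in $G$. Hence $(G,k)$ is a yes-instance iff at least one of these $|V(G)|$ oracle queries returns yes, and each query has at most $k^2$ or $k^3$ vertices as required.

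There is no serious obstacle: the neighborhood size lemma does all the real work, and the only mild subtlety is verifying the inclusion $3\lfloor s/2\rfloor\ge s$ for odd $s\ge 3$, together with the observation that the spanning subgraph $G'$ witnessing the edge-$1$-triangle $s$-club property satisfies $N_s^{G'}[v]\subseteq N_s^{G}[v]$, so restricting to $G[N_s[v]]$ loses no candidate solutions.
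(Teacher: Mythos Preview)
Your proposal is correct and follows essentially the same approach as the paper: apply Reduction Rule~\ref{rr-1-edge-triangle-s-club-remove-edges-in-no-triangle}, use Lemma~\ref{lem-1-edge-triangle-2-club-neighborhood-s-half} to assume $|N_{\lfloor s/2\rfloor}[v]|<k$ for all~$v$, and then bound $|N_s[v]|$ by two (even~$s$) or three (odd~$s\ge 3$) iterations of the $\lfloor s/2\rfloor$-neighborhood. If anything, you spell out a bit more than the paper does, namely the check $3\lfloor s/2\rfloor\ge s$ and the correctness of restricting each oracle query to $G[N_s[v]]$.
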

\begin{proof}
First, we apply Reduction Rule~\ref{rr-1-edge-triangle-s-club-remove-edges-in-no-triangle} exhaustively.
Because of Lemma~\ref{lem-1-edge-triangle-2-club-neighborhood-s-half} we conclude that~$(G,k)$ is a trivial yes-instance, if~$|N_{\lfloor s/2\rfloor}[v]|\ge k$ for some~$v\in V(G)$.
Hence, in the following we can assume that~$k> |N_{\lfloor s/2\rfloor}[v]|$ for each vertex~$v\in V(G)$.

First, we consider the case that~$s$ is even. 
Then~$\lfloor s/2\rfloor=s/2$ and we obtain that~$N_s[v]\subseteq N_{s/2}[N_{s/2}[v]]$ for each~$v\in V(G)$.
Thus,~$|N_s[v]|\le k^2$.

Second, we consider the case that~$s$ is odd.
Observe that we have $N_s[v]\subseteq N_{\lfloor s/2\rfloor}[N_{\lfloor s/2\rfloor}[N_{\lfloor s/2\rfloor}[v]]]$ for each~$v\in V(G)$.
Thus,~$|N_s[v]|\le k^3$.\qed
\end{proof}

\subsection{Edge Triangle~\texorpdfstring{$s$}{s}-Club for~\texorpdfstring{$\ell\ge 2$}{\ell\ge 2}}

Now we show W[1]-hardness for the remaining cases.

\begin{theorem}
\label{thm-ell-edge-triangle-s-club-w-hard}
\textsc{Edge Triangle~$s$-Club} is W[1]-hard for parameter~$k$ if~$\ell\ge 2$.
\end{theorem}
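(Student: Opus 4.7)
I would reduce from \textsc{Clique}. Given $(G,k)$, I build an instance $(G',k')$ of \textsc{Edge Triangle $s$-Club} in which each vertex $v\in V(G)$ is replaced by a vertex gadget $T^v$ of fixed size $c=c(s,\ell)$, and set $k'=ck$. The gadgets are connected by ``connector edges'' between designated terminal vertices whenever $uv\in E(G)$. The two design goals are: (i) any edge-$\ell$-triangle $s$-club $S$ with $S\cap T^v\neq\emptyset$ must satisfy $T^v\subseteq S$; and (ii) for $uv\notin E(G)$ the gadgets $T^u$ and $T^v$ contain a pair of vertices at distance strictly more than $s$ in $G'$, while for $uv\in E(G)$ every pair of vertices of $T^u\cup T^v$ is at distance at most $s$.

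\textbf{Gadget and connector design.} For goal (i), the core of $T^v$ is built from cliques of size $\ell+2$, inside each of which every edge lies in exactly $\ell$ triangles. I arrange these cliques in a cascading chain, in the spirit of Construction~\ref{const-l-triangle-3-or-4-club}, so that every vertex has all of its $\ell$ triangle partners inside $T^v$ and so that any spanning subgraph witnessing the edge-$\ell$-triangle property must keep every edge that is internal to the chain (otherwise some vertex loses all triangle support). This mirrors the ``all-or-nothing'' behavior captured by Observation~\ref{obs-l-triangle-3-or-4-club-empty-or-complete} in the vertex case. For goal (ii), the cascading chain is long enough that $T^v$ already contains pairs of vertices whose pairwise distance inside $T^v$ is $2s^\ast$ for $s^\ast=\lfloor(s-1)/2\rfloor$, which is the tightness needed so that the ``$+1$'' or ``$+3$'' edge-crossings between gadgets used in Cases I--III of Lemma~\ref{lem-l-triangle-3-club-hardness} still push the distance past $s$ whenever $uv\notin E(G)$.

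\textbf{Making connector edges carry $\ell$ triangles.} The crucial new difficulty, compared to the vertex case, is that every connector edge $e$ must itself lie in at least $\ell$ triangles of the spanning subgraph $G''$. I plan to equip each connector edge with $\ell$ \emph{private helper vertices}: for each edge $uv\in E(G)$ and each connector edge $e$ introduced between a terminal of $T^u$ and a terminal of $T^v$, I add $\ell$ fresh vertices each adjacent only to the two endpoints of $e$, forming $\ell$ triangles on $e$. These helpers are included in $S$ by inflating $k'$ accordingly, and because each helper has only two neighbors (the endpoints of $e$), it contributes no shortcut path between any other pair of gadgets.

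\textbf{Main obstacle.} The main obstacle, exactly as the authors flag for the vertex version, is to endow every connector edge with $\ell$ triangles without creating short ``side paths'' that would bring non-adjacent gadgets to within distance $s$. Using degree-$2$ private helpers localizes the triangle support to the pair $(T^u,T^v)$ and sidesteps this issue; it is however delicate to count helpers so that they also satisfy the edge-$\ell$-triangle condition on the three helper-incident edges (each such edge needs $\ell$ triangles too), which I would handle by letting the $\ell$ helpers for one connector edge form a clique among themselves of size $\ell+1$ with the connector's two endpoints playing the role of the two ``outside'' clique vertices. With the gadget and connector in place, the proof proceeds just as in Lemma~\ref{lem-l-triangle-3-club-hardness}: in the forward direction, a $k$-clique of $G$ yields $S=\bigcup_{v\in K}T^v$ together with the helpers of the connectors used; in the backward direction, the all-or-nothing property forces $S$ to be a union of $k$ gadgets whose vertices must be pairwise adjacent in $G$, recovering a clique.
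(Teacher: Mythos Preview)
Your plan diverges from the paper's construction, and it contains a genuine gap that breaks the forward direction.

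\textbf{The helper-distance problem.} You attach to every connector edge a private clique of $\ell$ helpers, and you explicitly include these helpers in $S$. But the vertex-triangle gadgets you are recycling are calibrated so that, for $uv\in E(G)$, the \emph{maximum} distance between a vertex of $T^u$ and a vertex of $T^v$ is exactly~$s$ (this tightness is what makes non-adjacent pairs exceed~$s$). A helper $h$ sits one step away from a terminal $p\in T^u$; to reach a ``far'' vertex $x\in T^w$ with $uw\in E(G)$ the best route is $h\to p\to q_1^w\to\cdots\to x$, and the segment $p\to x$ already has length~$s$ in the worst case. Concretely, in the odd-$s$ gadget of Construction~\ref{const-l-triangle-3-or-4-club}, $\dist(p_1^v,p_\ell^w)=s$ whenever $vw\in E(G)$, so a helper on the edge $p_1^v q_1^u$ is at distance $s+1$ from $p_\ell^w$. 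Hence the intended solution $S=\bigcup_{v\in K}T^v\cup(\text{helpers})$ is \emph{not} an $s$-club, and the forward direction fails. You cannot simply omit the helpers from $S$, because then the connector edges in the witnessing spanning subgraph would lack their $\ell$ triangles, and without those edges the subgraph no longer has diameter~$\le s$.

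\textbf{The all-or-nothing argument is also off.} Your justification (``otherwise some vertex loses all triangle support'') is the vertex-triangle mechanism, not the edge-triangle one. In the edge model the spanning subgraph may drop edges freely; what forces cascading is that an edge kept in $\widetilde G$ must retain all $\ell$ of its triangles. With your helper cliques, a terminal $p\in T^u$ can satisfy this via the helper clique alone: the set $\{p,q\}\cup\{h_1,\dots,h_\ell\}$ is itself an $(\ell{+}2)$-clique in which every edge lies in $\ell$ triangles, so containing $p$ does \emph{not} force the rest of $T^u$ into $S$. Thus the property ``$S\cap T^v\neq\emptyset\Rightarrow T^v\subseteq S$'' fails as stated.

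\textbf{How the paper avoids this.} The paper does not use helpers at all. Instead it builds each gadget from two (or three) cyclic tracks $A_v,B_v$ (and $C_v$) indexed modulo a carefully chosen $x$, with edges between indices that differ by at most~$3\ell^*$. For $uv\in E(G)$, the connector edges go from $A_v$ to $B_u$ with a small index shift, and the $\ell$ triangles on each connector edge are supplied by \emph{other gadget vertices} in $A_v\cup B_u$ rather than by fresh helpers. This simultaneously (i) keeps every vertex of the intended solution within distance~$s$ of every other, and (ii) yields a genuine edge-cascading lemma (Lemma~\ref{lem-ell-triangle-s-club-av-also-bu}): if any single connector edge of $E_{uv}$ lies in $\widetilde G$, then all of $A_v,B_u$ (and $C_v$) are forced into $S$. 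The cyclic index arithmetic then gives the distance lower bound for $uv\notin E(G)$. The point is that the triangle support for connector edges must come from within the gadgets themselves; outsourcing it to pendant helper cliques both destroys the diameter bound and severs the cascading link between connectors and gadgets.
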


Next, we describe the construction of the reduction to prove Theorem~\ref{thm-ell-edge-triangle-s-club-w-hard}.
We reduce from \textsc{Clique}.
The idea is to construct one vertex gadget for each vertex of the \textsc{Clique} instance and to add edges between two different vertex gadgets if and only of the two corresponding vertices are adjacent in such a way that all these edges are in exactly~$\ell$ triangles.
For an illustration of this construction see Fig.~\ref{fig-visulation-edge-triangle}.

\begin{const}
\label{const-ell-triangle-s-club-w-hardness}
Let~$(G,k)$ be an instance of \textsc{Clique} with~$k\ge 3$.
We construct an equivalent instance~$(G',k')$ of \textsc{Edge-Triangle~$s$-Club} for some fixed~$\ell\ge 2$ as follows.
Let~$\ell^*\coloneqq \lceil\ell/2\rceil$ and let~$x\coloneqq 6\cdot \ell^* (s-1)+\lfloor\ell/2\rfloor$.
For each vertex~$v\in V(G)$, we construct the following vertex gadget~$T^v$. 
For better readability, all sub-indices of the vertices in~$T^v$ are considered modulo~$x$.
Our construction distinguishes even and odd values of~$\ell$.
First, we describe the part of the construction which both cases have in common.

\begin{enumerate}
\item \label{item-edge-variant-1}
We add vertex sets~$A_v\coloneqq \{a^v_i\mid i\in[0,x]\}$ and~$B_v\coloneqq \{b^v_i\mid i\in[0,x]\}$ to~$G'$.

\item \label{item-edge-variant-2}
We add the edges~$a^v_ia^v_{i+j}$, and~$b^v_ib^v_{i+j}$ for each~$i\in[0,x]$ and each~$j\in[-3\ell^*,3\ell^*]\setminus\{0\}$ to~$G'$.

\item \label{item-edge-variant-3}
We add the edge~$a^v_ib^v_{i+j}$ for each~$i\in[0,x]$ and each~$j\in[-3\ell^*,3\ell^*]$ to~$G'$.
\end{enumerate}

In other words, an edge~$a^v_i b^v_j$ is added if the indices differ by at most~$3\ell^*$.
For even~$\ell$, this completes the construction of~$T^v$.
For odd~$\ell$, we extend~$T^v$ as follows:

\begin{enumerate}[label=0-\arabic*.,leftmargin=*]
\item \label{item-edge-variant-4}
We add the vertex set~$C_v\coloneqq \{c^v_{i}\mid i\in[0,x] \text{ and } i\equiv 0\mod\ell^*\}$ to~$G'$.
Note that~$C_v$ consists of exactly~$6s-5$ vertices.

\item \label{item-edge-variant-5}
We add the edges~$c^v_ia^v_{i+j}$ and~$c^v_ib^v_{i+j}$ for each~$i\in[0,x]$ such that~$i\equiv 0\mod\ell^*$ and each~$j\in[-3\ell^*,3\ell^*]$ to~$G'$.

\item \label{item-edge-variant-6}
Also, we add the edge~$c^v_ic^v_{i+j}$ to~$G'$ for each~$i\in[0,x]$ such that~$i\equiv 0\mod\ell^*$ and each~$j\in[-3\ell^*,3\ell^*]\setminus\{0\}$ to~$G'$ if the corresponding vertex~$c^v_{i+j}$ exists.
\end{enumerate}

In other words, an edge between~$c^v_i$ and~$a^v_j$,~$b^v_j$, or~$c^v_j$ is added if the indices differ by at most~$3\ell^*$.
Now, for each edge~$uv\in E(G)$, we add the following to~$G'$:
\begin{enumerate}[label=0-\arabic*.,leftmargin=*, start=4]
\item \label{item-edge-variant-7} 
We add the edges~$a^v_ib^u_{i+j}$ and~$a^u_ib^v_{i+j}$ for each~$i\in[0,x]$ and~$j\in[0,\lfloor \ell/2\rfloor]$.

\item \label{item-edge-variant-8} 
If~$\ell$ is odd, we also add the edges~$c^v_ib^u_{i+j}$ and~$c^u_ib^v_{i+j}$ for each~$i\in[0,x]$ such that~$i\equiv 0\mod\ell^*$ and each~$j\in[0,\lfloor \ell/2\rfloor]$ to~$G'$.

Observe that each vertex~$b^u_{i+j}$ is adjacent to \emph{exactly} one vertex in~$C_v$.
\end{enumerate}

In other words, an edge between~$a^v_i$ or~$c^v_i$ and~$b^v_j$ is added if~$j$ exceeds~$i$ by at most~$\lfloor\ell/2\rfloor$.
Finally, if~$\ell$ is even, we set~$k'\coloneqq 2(x+1)k=(\ell(6s-5)+2)\cdot k$, and if~$\ell$ is odd, we set~$k'\coloneqq (2(x+1)+6s-5)k=(\ell+2)(6s-5)\cdot k$.
\end{const}

\begin{figure}[t]
\begin{tikzpicture}[scale=0.85]
\node[label=above:{$a^v_0$}](av0) at (0, 0) [shape = circle, draw, fill=black, scale=0.07ex]{};
\node[label=above:{$a^v_1$}](av1) at (1, 0) [shape = circle, draw, fill=black, scale=0.07ex]{};
\node[label=above:{$a^v_2$}](av2) at (2, 0) [shape = circle, draw, fill=black, scale=0.07ex]{};
\node[label=above:{$a^v_3$}](av3) at (3, 0) [shape = circle, draw, fill=black, scale=0.07ex]{};
\node[label=above:{$a^v_4$}](av4) at (4, 0) [shape = circle, draw, fill=black, scale=0.07ex]{};
\node[label=above:{$a^v_5$}](av5) at (5, 0) [shape = circle, draw, fill=black, scale=0.07ex]{};
\node[label=above:{$a^v_6$}](av6) at (6, 0) [shape = circle, draw, fill=black, scale=0.07ex]{};
\node[label=above:{$a^v_7$}](av7) at (7, 0) [shape = circle, draw, fill=black, scale=0.07ex]{};
\node[label=above:{$a^v_8$}](av8) at (8, 0) [shape = circle, draw, fill=black, scale=0.07ex]{};
\node[label=above:{$a^v_9$}](av9) at (9, 0) [shape = circle, draw, fill=black, scale=0.07ex]{};
\node[label=above:{$a^v_{10}$}](av10) at (10, 0) [shape = circle, draw, fill=black, scale=0.07ex]{};
\node[label=above:{$a^v_{11}$}](av11) at (11, 0) [shape = circle, draw, fill=black, scale=0.07ex]{};
\node[label=above:{$a^v_{12}$}](av12) at (12, 0) [shape = circle, draw, fill=black, scale=0.07ex]{};
\node[label=above:{$a^v_{13}$}](av13) at (13, 0) [shape = circle, draw, fill=black, scale=0.07ex]{};
\draw[draw=black] (-0.3,-0.2) rectangle ++(13.6,0.8);
\node[label=right:{$A_v$}](av) at (13.1, 0.2) {};

\draw[draw=blue] (1,0) circle (4pt);
\draw[draw=blue] (2,0) circle (4pt);
\draw[draw=blue] (3,0) circle (4pt);
\draw[draw=blue] (11,0) circle (4pt);
\draw[draw=blue] (12,0) circle (4pt);
\draw[draw=blue] (13,0) circle (4pt);
\draw[draw=red] (4,0) circle (4pt);
\draw[draw=red] (5,0) circle (4pt);
\draw[draw=red] (6,0) circle (4pt);
\draw[draw=red] (8,0) circle (4pt);
\draw[draw=red] (9,0) circle (4pt);
\draw[draw=red] (10,0) circle (4pt);
\draw[draw=black] (7,0) circle (4pt);

\node(bw0) at (0, -1) [shape = circle, draw, fill=black, scale=0.07ex]{};
\node(bw1) at (1, -1) [shape = circle, draw, fill=black, scale=0.07ex]{};
\node(bw2) at (2, -1) [shape = circle, draw, fill=black, scale=0.07ex]{};
\node(bw3) at (3, -1) [shape = circle, draw, fill=black, scale=0.07ex]{};
\node(bw4) at (4, -1) [shape = circle, draw, fill=black, scale=0.07ex]{};
\node(bw5) at (5, -1) [shape = circle, draw, fill=black, scale=0.07ex]{};
\node(bw6) at (6, -1) [shape = circle, draw, fill=black, scale=0.07ex]{};
\node(bw7) at (7, -1) [shape = circle, draw, fill=black, scale=0.07ex]{};
\node(bw8) at (8, -1) [shape = circle, draw, fill=black, scale=0.07ex]{};
\node(bw9) at (9, -1) [shape = circle, draw, fill=black, scale=0.07ex]{};
\node(bw10) at (10, -1) [shape = circle, draw, fill=black, scale=0.07ex]{};
\node(bw11) at (11, -1) [shape = circle, draw, fill=black, scale=0.07ex]{};
\node(bw12) at (12, -1) [shape = circle, draw, fill=black, scale=0.07ex]{};
\node(bw13) at (13, -1) [shape = circle, draw, fill=black, scale=0.07ex]{};
\draw[draw=black] (-0.3,-1.15) rectangle ++(13.6,0.3);
\node[label=right:{$B_w$}](av) at (13.1, -1) {};

\draw[draw=blue] (0,-1) circle (4pt);
\draw[draw=blue] (1,-1) circle (4pt);
\draw[draw=red] (2,-1) circle (4pt);
\draw[draw=red] (3,-1) circle (4pt);
\draw[draw=red] (4,-1) circle (4pt);
\draw[draw=red] (11,-1) circle (4pt);
\draw[draw=red] (12,-1) circle (4pt);
\draw[draw=red] (13,-1) circle (4pt);
\draw[draw=black] (5,-1) circle (4pt);
\draw[draw=black] (6,-1) circle (4pt);
\draw[draw=black] (7,-1) circle (4pt);
\draw[draw=black] (8,-1) circle (4pt);
\draw[draw=black] (9,-1) circle (4pt);
\draw[draw=black] (10,-1) circle (4pt);

\node(aw0) at (0, -2) [shape = circle, draw, fill=black, scale=0.07ex]{};
\node(aw1) at (1, -2) [shape = circle, draw, fill=black, scale=0.07ex]{};
\node(aw2) at (2, -2) [shape = circle, draw, fill=black, scale=0.07ex]{};
\node(aw3) at (3, -2) [shape = circle, draw, fill=black, scale=0.07ex]{};
\node(aw4) at (4, -2) [shape = circle, draw, fill=black, scale=0.07ex]{};
\node(aw5) at (5, -2) [shape = circle, draw, fill=black, scale=0.07ex]{};
\node(aw6) at (6, -2) [shape = circle, draw, fill=black, scale=0.07ex]{};
\node(aw7) at (7, -2) [shape = circle, draw, fill=black, scale=0.07ex]{};
\node(aw8) at (8, -2) [shape = circle, draw, fill=black, scale=0.07ex]{};
\node(aw9) at (9, -2) [shape = circle, draw, fill=black, scale=0.07ex]{};
\node(aw10) at (10, -2) [shape = circle, draw, fill=black, scale=0.07ex]{};
\node(aw11) at (11, -2) [shape = circle, draw, fill=black, scale=0.07ex]{};
\node(aw12) at (12, -2) [shape = circle, draw, fill=black, scale=0.07ex]{};
\node(aw13) at (13, -2) [shape = circle, draw, fill=black, scale=0.07ex]{};
\draw[draw=black] (-0.3,-2.15) rectangle ++(13.6,0.3);
\node[label=right:{$A_w$}](av) at (13.1, -2) {};

\draw[draw=red] (0,-2) circle (4pt);
\draw[draw=red] (1,-2) circle (4pt);
\draw[draw=red] (2,-2) circle (4pt);
\draw[draw=red] (3,-2) circle (4pt);
\draw[draw=red] (4,-2) circle (4pt);
\draw[draw=red] (11,-2) circle (4pt);
\draw[draw=red] (12,-2) circle (4pt);
\draw[draw=red] (13,-2) circle (4pt);
\draw[draw=black] (5,-2) circle (4pt);
\draw[draw=black] (6,-2) circle (4pt);
\draw[draw=black] (7,-2) circle (4pt);
\draw[draw=black] (8,-2) circle (4pt);
\draw[draw=black] (9,-2) circle (4pt);
\draw[draw=black] (10,-2) circle (4pt);

\node[label=below:{$b^u_0$}](bu0) at (0, -3) [shape = circle, draw, fill=black, scale=0.07ex]{};
\node[label=below:{$b^u_1$}](bu1) at (1, -3) [shape = circle, draw, fill=black, scale=0.07ex]{};
\node[label=below:{$b^u_2$}](bu2) at (2, -3) [shape = circle, draw, fill=black, scale=0.07ex]{};
\node[label=below:{$b^u_3$}](bu3) at (3, -3) [shape = circle, draw, fill=black, scale=0.07ex]{};
\node[label=below:{$b^u_4$}](bu4) at (4, -3) [shape = circle, draw, fill=black, scale=0.07ex]{};
\node[label=below:{$b^u_5$}](bu5) at (5, -3) [shape = circle, draw, fill=black, scale=0.07ex]{};
\node[label=below:{$b^u_6$}](bu6) at (6, -3) [shape = circle, draw, fill=black, scale=0.07ex]{};
\node[label=below:{$b^u_7$}](bu7) at (7, -3) [shape = circle, draw, fill=black, scale=0.07ex]{};
\node[label=below:{$b^u_8$}](bu8) at (8, -3) [shape = circle, draw, fill=black, scale=0.07ex]{};
\node[label=below:{$b^u_9$}](bu9) at (9, -3) [shape = circle, draw, fill=black, scale=0.07ex]{};
\node[label=below:{$b^u_{10}$}](bu10) at (10, -3) [shape = circle, draw, fill=black, scale=0.07ex]{};
\node[label=below:{$b^u_{11}$}](bu11) at (11, -3) [shape = circle, draw, fill=black, scale=0.07ex]{};
\node[label=below:{$b^u_{12}$}](bu12) at (12, -3) [shape = circle, draw, fill=black, scale=0.07ex]{};
\node[label=below:{$b^u_{13}$}](bu13) at (13, -3) [shape = circle, draw, fill=black, scale=0.07ex]{};
\draw[draw=black] (-0.3,-3.6) rectangle ++(13.6,0.8);
\node[label=right:{$B_u$}](av) at (13.1, -3.2) {};

\draw[draw=black] (0,-3) circle (4pt);
\draw[draw=black] (1,-3) circle (4pt);
\draw[draw=black] (2,-3) circle (4pt);
\draw[draw=black] (3,-3) circle (4pt);
\draw[draw=black] (4,-3) circle (4pt);
\draw[draw=black] (5,-3) circle (4pt);
\draw[draw=black] (11,-3) circle (4pt);
\draw[draw=black] (12,-3) circle (4pt);
\draw[draw=black] (13,-3) circle (4pt);

\path [-,line width=0.2mm] (av0) edge (bw0);
\path [-,line width=0.2mm] (av0) edge (bw1);
\path [-,line width=0.2mm] (av1) edge (bw1);
\path [-,line width=0.2mm] (av1) edge (bw2);
\path [-,line width=0.2mm] (av2) edge (bw2);
\path [-,line width=0.2mm] (av2) edge (bw3);
\path [-,line width=0.2mm] (av3) edge (bw3);
\path [-,line width=0.2mm] (av3) edge (bw4);
\path [-,line width=0.2mm] (av4) edge (bw4);
\path [-,line width=0.2mm] (av4) edge (bw5);
\path [-,line width=0.2mm] (av5) edge (bw5);
\path [-,line width=0.2mm] (av5) edge (bw6);
\path [-,line width=0.2mm] (av6) edge (bw6);
\path [-,line width=0.2mm] (av6) edge (bw7);
\path [-,line width=0.2mm] (av7) edge (bw7);
\path [-,line width=0.2mm] (av7) edge (bw8);
\path [-,line width=0.2mm] (av8) edge (bw8);
\path [-,line width=0.2mm] (av8) edge (bw9);
\path [-,line width=0.2mm] (av9) edge (bw9);
\path [-,line width=0.2mm] (av9) edge (bw10);
\path [-,line width=0.2mm] (av10) edge (bw10);
\path [-,line width=0.2mm] (av10) edge (bw11);
\path [-,line width=0.2mm] (av11) edge (bw11);
\path [-,line width=0.2mm] (av11) edge (bw12);
\path [-,line width=0.2mm] (av12) edge (bw12);
\path [-,line width=0.2mm] (av12) edge (bw13);
\path [-,line width=0.2mm] (av13) edge (bw13);
\path [-,line width=0.2mm] (av13) edge (bw0);

\path [-,line width=0.2mm] (aw0) edge (bu0);
\path [-,line width=0.2mm] (aw0) edge (bu1);
\path [-,line width=0.2mm] (aw1) edge (bu1);
\path [-,line width=0.2mm] (aw1) edge (bu2);
\path [-,line width=0.2mm] (aw2) edge (bu2);
\path [-,line width=0.2mm] (aw2) edge (bu3);
\path [-,line width=0.2mm] (aw3) edge (bu3);
\path [-,line width=0.2mm] (aw3) edge (bu4);
\path [-,line width=0.2mm] (aw4) edge (bu4);
\path [-,line width=0.2mm] (aw4) edge (bu5);
\path [-,line width=0.2mm] (aw5) edge (bu5);
\path [-,line width=0.2mm] (aw5) edge (bu6);
\path [-,line width=0.2mm] (aw6) edge (bu6);
\path [-,line width=0.2mm] (aw6) edge (bu7);
\path [-,line width=0.2mm] (aw7) edge (bu7);
\path [-,line width=0.2mm] (aw7) edge (bu8);
\path [-,line width=0.2mm] (aw8) edge (bu8);
\path [-,line width=0.2mm] (aw8) edge (bu9);
\path [-,line width=0.2mm] (aw9) edge (bu9);
\path [-,line width=0.2mm] (aw9) edge (bu10);
\path [-,line width=0.2mm] (aw10) edge (bu10);
\path [-,line width=0.2mm] (aw10) edge (bu11);
\path [-,line width=0.2mm] (aw11) edge (bu11);
\path [-,line width=0.2mm] (aw11) edge (bu12);
\path [-,line width=0.2mm] (aw12) edge (bu12);
\path [-,line width=0.2mm] (aw12) edge (bu13);
\path [-,line width=0.2mm] (aw13) edge (bu13);
\path [-,line width=0.2mm] (aw13) edge (bu0);

\path [-,line width=0.2mm] (aw3) edge (bw0);
\path [-,line width=0.2mm] (aw3) edge (bw1);
\path [-,line width=0.2mm] (aw3) edge (bw2);
\path [-,line width=0.2mm] (aw3) edge (bw3);
\path [-,line width=0.2mm] (aw3) edge (bw4);
\path [-,line width=0.2mm] (aw3) edge (bw5);
\path [-,line width=0.2mm] (aw3) edge (bw6);

\path [-,line width=0.2mm] (aw4) edge (bw1);
\path [-,line width=0.2mm] (aw4) edge (bw2);
\path [-,line width=0.2mm] (aw4) edge (bw3);
\path [-,line width=0.2mm] (aw4) edge (bw4);
\path [-,line width=0.2mm] (aw4) edge (bw5);
\path [-,line width=0.2mm] (aw4) edge (bw6);
\path [-,line width=0.2mm] (aw4) edge (bw7);

\path [-,line width=0.2mm] (aw9) edge (bw6);
\path [-,line width=0.2mm] (aw9) edge (bw7);
\path [-,line width=0.2mm] (aw9) edge (bw8);
\path [-,line width=0.2mm] (aw9) edge (bw9);
\path [-,line width=0.2mm] (aw9) edge (bw10);
\path [-,line width=0.2mm] (aw9) edge (bw11);
\path [-,line width=0.2mm] (aw9) edge (bw12);

\path [-,line width=0.2mm] (aw10) edge (bw7);
\path [-,line width=0.2mm] (aw10) edge (bw8);
\path [-,line width=0.2mm] (aw10) edge (bw9);
\path [-,line width=0.2mm] (aw10) edge (bw10);
\path [-,line width=0.2mm] (aw10) edge (bw11);
\path [-,line width=0.2mm] (aw10) edge (bw12);
\path [-,line width=0.2mm] (aw10) edge (bw13);

\end{tikzpicture}
\caption{Construction for Theorem~\ref{thm-ell-edge-triangle-s-club-w-hard} when~$s=3$ and~$\ell=2$ and~$G$ is a~$P_3$ on~$\{u,v,w\}$ with~$uv\notin E(G)$. 
Only the gadgets~$A_v, B_w, A_w$, and~$B_u$ are shown.
For simplicity, no edges within~$A_v$, $B_w$, $A_w$, and~$B_u$ are drawn and edges between~$B_w$ and~$A_w$ are only drawn if one endpoint is~$a^w_3$, $a^w_4$, $a^w_9$, or~$a^w_{10}$.\newline
Blue encircled vertices are neighbors of~$a^v_0$, red encircled vertices have distance~$2$ to~$a^v_0$, and black encircled vertices have distance~$3$ to~$a^v_0$.
Thus,~$a^v_0$ and~$b^u_7=b^u_{0+1+3\cdot 1\cdot 2}=b^u_{0+\lfloor\ell/2\rfloor+3\ell^*(s-1)}$ have distance at least~$4$.}
\label{fig-visulation-edge-triangle}
\end{figure}
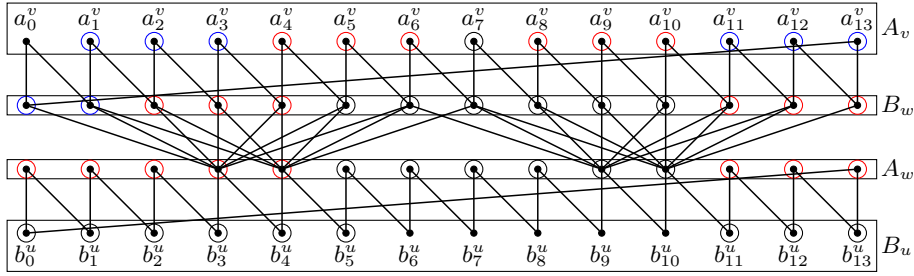

Construction~\ref{const-ell-triangle-s-club-w-hardness} has two key mechanisms:
First, if~$uv\notin E(G)$ then for each vertex~$a\in A_v$ there is at least one vertex~$b\in B_u$ such that~$\dist(a,b)>s$. 
Second, each edge with one endpoint in~$A_v$ and one endpoint in~$B_u$ is contained in \emph{exactly}~$\ell$ triangles.
Furthermore, if~$\ell$ is odd, then this also holds for each edge with one endpoint in~$C_v$ and one in~$B_u$.
Consider an edge-$\ell$-triangle~$s$-club~$S$ and let~$\widetilde{G}=(S,\widetilde{E})$ be a spanning subgraph of~$G[S]$ with the maximal number of edges, such that each edge of~$\widetilde{E}$ is contained in at least~$\ell$ triangles in~$\widetilde{G}$ and the diameter of~$\widetilde{G}$ is~$s$.
As we will show, the two mechanics ensure that an edge with one endpoint in~$A_v$ (or~$C_v$) and the other endpoint in~$B_u$ is contained in~$\widetilde{E}$ if and only if~$S$ contains all vertices of~$A_v$ (and~$C_v$) and~$B_u$.
We call this the \emph{enforcement property}. 
Next, we formalize this property.
To this end, we introduce the following notation.
By~$E_{uv}$ we denote the set of all edges with one endpoint in~$A_v$ (or~$C_v$ if~$\ell$ is odd) and the other endpoint in~$B_u$.

\begin{lemma}
\label{lem-ell-triangle-s-club-av-also-bu}
Let~$S$ be an edge-$\ell$-triangle~$s$-club~$G'$ constructed in Construction~\ref{const-ell-triangle-s-club-w-hardness}.
More precisely, let~$\widetilde{G}=(S,\widetilde{E})$ be a maximal subgraph of~$G[S]$ such that each edge in~$E(\widetilde{G})$ is contained in at least~$\ell$ triangles within~$\widetilde{G}$ and the diameter of~$\widetilde{G}$ is at most~$s$. 
Let~$e\in E_{uv}$.
Then~$e\in E(\widetilde{G})$ if and only if~$A_v,B_u\subseteq S$ (and~$C_v\subseteq S$, if~$\ell$ is odd).
\end{lemma}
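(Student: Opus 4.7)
My approach relies on the key numerical observation implicit in the construction, namely that every edge $e \in E_{uv}$ is contained in \emph{exactly} $\ell$ triangles in $G'$. More precisely, if $e = a^v_ib^u_{i+j}$ with $j\in[0,\lfloor\ell/2\rfloor]$, then the common neighbors of $a^v_i$ and $b^u_{i+j}$ in $G'$ are precisely the $\lfloor\ell/2\rfloor$ vertices $\{a^v_k : k\in[i+j-\lfloor\ell/2\rfloor,i+j]\setminus\{i\}\}$, the $\lfloor\ell/2\rfloor$ vertices $\{b^u_k : k\in[i,i+\lfloor\ell/2\rfloor]\setminus\{i+j\}\}$, and, when $\ell$ is odd, a unique $c^v_k$ with $k\in[i+j-\lfloor\ell/2\rfloor,i+j]$ and $k\equiv 0\pmod{\ell^*}$. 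I would open the proof with a careful case analysis over construction steps \ref{item-edge-variant-2},~\ref{item-edge-variant-3},~\ref{item-edge-variant-5},~\ref{item-edge-variant-7}, and~\ref{item-edge-variant-8} to verify this count; the symmetric statement for edges of the form $c^v_ib^u_{i+j}$ (odd $\ell$) is analogous.

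For the ``only if'' direction, suppose $e = a^v_ib^u_{i+j}\in\widetilde{E}$. Since $\widetilde{G}$ is a subgraph of $G'[S]$ in which every edge must lie in $\ell$ triangles, and since $e$ has exactly $\ell$ triangles in all of $G'$, every such triangle must be realized inside $\widetilde{G}$. Hence all common neighbors listed above lie in $S$, and the two triangle-edges joining $\{a^v_i,b^u_{i+j}\}$ to each common neighbor lie in $\widetilde{E}$. In particular, $\{a^v_k : k\in[i+j-\lfloor\ell/2\rfloor,i+j]\}\cup\{b^u_k : k\in[i,i+\lfloor\ell/2\rfloor]\}\subseteq S$, and for every $k$ in the latter range the edge $a^v_ib^u_k$ is a \emph{new} member of $E_{uv}\cap\widetilde{E}$. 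I would then iterate the same argument on these newly discovered edges: each application shifts the known index window by at least one in both directions along $A_v$ and $B_u$. Because all indices are taken modulo $x$, the cascade eventually sweeps through all of $A_v$ and~$B_u$; for odd~$\ell$ the single $c^v_k$ found at each step similarly propagates through $C_v$ via the edges of step~\ref{item-edge-variant-8}.

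For the ``if'' direction, assume $A_v,B_u\subseteq S$ (and $C_v\subseteq S$ when $\ell$ is odd), and suppose towards a contradiction that $e\notin\widetilde{E}$. Set $\widetilde{E}':=\widetilde{E}\cup\{e\}$. Adding an edge can only decrease~$\diam(\widetilde{G})$ and only increase the triangle count of every pre-existing edge, so every edge of $\widetilde{E}$ still enjoys at least $\ell$ triangles in $\widetilde{G}'$ and $\diam(\widetilde{G}')\le s$. Maximality of $\widetilde{E}$ therefore forces $e$ itself to belong to fewer than $\ell$ triangles in $\widetilde{G}'$. I aim to contradict this by exhibiting the required $\ell$ triangles: all $\ell$ common neighbors computed above lie in $S$ by assumption, and it remains to verify that the two triangle-edges to each such neighbor sit in $\widetilde{E}$. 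These edges are either intra-gadget (within $A_v$, $B_u$, or between $A_v$--$C_v$) — where each one lies in many triangles inside its dense gadget by steps~\ref{item-edge-variant-2},~\ref{item-edge-variant-3},~\ref{item-edge-variant-5},~\ref{item-edge-variant-6} — or further members of $E_{uv}$, which are handled by the same cascading maximality argument. In both cases the maximality of $\widetilde{E}$ forces inclusion.

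The main obstacle is precisely this simultaneous bootstrapping between the two directions: verifying in the ``if'' part that the intra-gadget and $E_{uv}$ edges adjacent to $e$ really lie in $\widetilde{E}$ requires its own cascading maximality argument, which must be organized to avoid circularity. The choice $x=6\ell^*(s-1)+\lfloor\ell/2\rfloor$ is precisely what makes the cyclic cascade cover all indices without running into boundary artefacts, and the even/odd parity split has to be handled cleanly so that $C_v$ is implicated exactly when $\ell$ is odd.
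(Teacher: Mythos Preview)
Your ``only if'' direction via cascading is essentially the paper's argument: each edge of $E_{uv}$ lies in exactly $\ell$ triangles in $G'$, all of whose third vertices sit in $A_v\cup B_u$ (and $C_v$ when $\ell$ is odd), so membership in $\widetilde{E}$ propagates through all of $E_{uv}$ and forces $A_v,B_u\ (\cup\, C_v)\subseteq S$.

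For the ``if'' direction, however, your plan of adding a \emph{single} edge $e$ and then arguing that its $\ell$ triangle-partners already sit in $\widetilde{E}$ runs into exactly the circularity you flag, and you do not actually break it. The paper sidesteps the issue entirely. The cascading claim yields a dichotomy: either $E_{uv}\subseteq\widetilde{E}$ or $E_{uv}\cap\widetilde{E}=\emptyset$. In the second case one adds \emph{all} of $E_{uv}$ to $\widetilde{E}$ simultaneously. Now every newly added edge $a^v_ib^u_j$ finds its $\ell$ triangles using (i) other edges of $E_{uv}$, which are all present at once, and (ii) intra-gadget edges within $A_v$, within $B_u$, or between $A_v$ and $C_v$. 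The edges in (ii) can be placed in $\widetilde{E}$ by a separate, non-circular maximality step: since $A_v\subseteq S$, adding every intra-$A_v$ edge at once yields a subgraph in which each such edge has $\ge 3\ell^*-1\ge\ell$ triangles using only intra-$A_v$ edges, so maximality forces them all into $\widetilde{E}$; similarly for $B_u$ and for $A_v\cup C_v$. With (i) and (ii) in hand, the enlarged graph still satisfies the edge-$\ell$-triangle condition and has diameter at most $s$, contradicting maximality of $\widetilde{E}$.

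In short, the missing idea is to exploit the all-or-nothing consequence of your own cascading argument and add $E_{uv}$ in one shot rather than one edge at a time; this is what dissolves the bootstrapping problem you identify.
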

\begin{proof}
Before we show the two implications, we prove the following cascading property of edge-$\ell$-triangle~$s$-clubs which contain at least one edge of~$E_{uv}$.

\begin{claim}
\label{claim-edge-variant-cascading-av-bu}
If~$a^v_ib^u_j\in E(\widetilde{G})$ or~$c^v_ib^u_j\in E(\widetilde{G})$, then,$E_{uv}\subseteq E(\widetilde{G})$.
\end{claim}

\begin{claimproof}
First, we consider even values of~$\ell$.
Note that~$\ell^*=\lfloor\ell/2\rfloor=\ell/2=\lceil\ell/2\rceil$.
By construction we have
$$N[a^v_i]= \{a^v_{i+i'}, b^v_{i+i'}\mid i'\in[-3\ell/2,3\ell/2]\}\cup\{b^w_{i+i'}\mid i'\in[0,\ell/2] \text{ and }vw\in E(G)\},$$ 
and similar
$$N[b^u_j]=\{a^u_{j+i'},b^u_{j+i'}\mid i'\in[-3\ell/2,3\ell/2]\}\cup\{a^w_{j-i'}\mid i'\in[0,\ell/2] \text{ and }uw\in E(G)\}.$$

Since~$a^v_ib^u_j\in E(G')$ we obtain by Part~\ref{item-edge-variant-7} of Construction~\ref{const-ell-triangle-s-club-w-hardness} that~$j=i+z$ for some~$z\in[0,\ell/2]$.
Let~$y\coloneqq\ell/2-z$ and observe that~$N(a^v_i)\cap N(b^u_j)=\{a^v_{i+i'}\mid i'\in[-y,z]\setminus\{0\}\}\cup\{b^u_{j+i'} \mid i'\in[-z,y]\setminus{\{0\}}\}$.
Thus, the edge~$a^v_ib^u_j$ is contained in exactly~$\ell$ triangles whose vertex sets are all contained in~$G'[A_v\cup B_u]$.
Since~$a^v_ib^u_j\in E(\widetilde{G})$, we thus conclude that all vertices and edges which form these $\ell$~triangles are contained in~$\widetilde{G}$.
In particular, we obtain that~$a^v_{i+1}b^u_{j+1}\in E(\widetilde{G})$ and thus also that~$a^v_{i+1},b^u_{j+1}\in V(\widetilde{G})$.
Now, since~$a^v_{i+1}b^u_{j+1}\in E(\widetilde{G})$ we can repeat the above argumentation for the edge~$a^v_{i+1}b^u_{j+1}$ and inductively for the edge~$a^v_{i+q}b^v_{i+q}$ for all~$q\in [x]$.
We then have verified that~$A_v\cup B_u\subseteq V(\widetilde{G})$ and that each edge in~$E_{uv}$ is contained in~$E(\widetilde{G})$.
Recall that~$A_v$ and~$B_u$ have size~$x+1$.

Second, we consider odd values of~$\ell$, that is~$\ell=2t+1$ for some integer~$t$.
Note that~$\lfloor\ell/2\rfloor=t$ and that~$\ell^*=\lceil\ell/2\rceil=t+1$.
Furthermore, observe that vertex~$b^u_j$ has exactly one neighbor~$c^v_{j+i'}$ in~$C_v$ for some~$i'\in[0,t]$ such that~$j+i'\mod (t+1)=0$.

Hence, by construction we have
\begin{itemize}
\item $N[a^v_i]=N[c^v_i]=\{a^v_{i+i'},b^v_{i+i'}\mid i'\in[-3(t+1),3(t+1)]\}\cup\{c^v_{i+i'}\mid i'\in[-3(t+1),3(t+1)] \text{ and } (i+i')\mod (t+1)=0\}\cup\{b^w_{i+i'}\mid i'\in[0,t] \text{ and }vw\in E(G)\}$,
\item $N[b^u_j]=\{a^u_{j+i'},b^u_{j+i'}\mid i'\in[-3(t+1),3(t+1)]\}\cup\{c^u_{j+i'}\mid i'\in[-3(t+1),3(t+1)] \text{ and } (j+i')\mod (t+1)=0\}\cup\{a^w_{j-i'}\mid i'\in[0,t] \text{ and }uw\in E(G)\}\cup \{c^w_{j-i'}\mid i'\in [0,t] \text{ and } (j-i')\mod (t+1)=0 \text{ and }uw\in E(G)\}$, and
\end{itemize}

Since~$a^v_ib^u_j\in E(G')$ we obtain by Part~\ref{item-edge-variant-7} of Construction~\ref{const-ell-triangle-s-club-w-hardness} that~$j=i+z$ for some~$z\in[0,t]$.
Now, let~$y\coloneqq t-z$ and let~$c^v_{i'}$ be the unique neighbor of~$b^u_j$ in~$C_v$.
We conclude that~$N(a^v_i)\cap N(b^u_j)=\{a^v_{i+j'}\mid j'\in[-y,z]\setminus\{0\}\}\cup\{b^u_{j+j'} \mid j'\in[-z,y]\setminus{0}\}\cup\{c^v_{i'}\}$.

Hence, both vertices have exactly~$t+t+1=\ell$ common neighbors and thus~$N(a^v_i)\cap N(b^u_j)\subseteq S$.
By similar arguments a similar statement can be shown for the edge~$c^v_ib^u_j$.
Thus, the edges~$a^v_ib^u_j$ and~$c^v_ib^u_j$ are contained in exactly~$\ell$ triangles whose vertex sets are contained in~$G[A_v\cup C_v\cup B_u]$.

Since~$a^v_ib^u_j\in E(\widetilde{G})$ we thus conclude that all vertices and edges which form these~$\ell$ triangles are contained in~$\widetilde{G}$.
In particular, we obtain that~$a^v_{i+1}b^u_{j+1}\in E(\widetilde{G})$ and also that~$a^v_{i+1},b^u_{j+1}\in V(\widetilde{G})$.
Now, since~$a^v_{i+1}b^u_{j+1}\in E(\widetilde{G})$ we can repeat the above argumentation for the edge~$a^v_{i+1}b^u_{j+1}$ and inductively for the edge~$a^v_{i+q}b^v_{i+q}$ for all~$q\in [x]$.
We then have verified that~$A_v\cup B_u\cup C_v\subseteq V(\widetilde{G})$ and that each edge in~$E_{uv}$ is contained in~$E(\widetilde{G})$.
\end{claimproof}

Now, we are ready to prove the two implications.
The implication that if~$e\in E(\widetilde{G})$ then~$A_v,B_u\subseteq S$ (and~$C_v\subseteq S$, if~$\ell$ is odd) directly follows from Claim~\ref{claim-edge-variant-cascading-av-bu}.
It remains to show the other implication.
From Claim~\ref{claim-edge-variant-cascading-av-bu} we conclude that either~$E_{uv}\subseteq E(\widetilde{G})$ or~$E_{uv}\cap E(\widetilde{G})=\emptyset$.
If~$E_{uv}\subseteq E(\widetilde{G})$, then we are done, so assume towards a contradiction that~$E_{uv}\cap E(\widetilde{G})=\emptyset$.
Recall that~$A_v\cup B_u\subseteq S$ (and also~$C_v\subseteq S$ is~$\ell$ is odd).
Furthermore, recall that~$\widetilde{G}$ is maximal, that is, there exists no spanning subgraph of~$G[S]$ which has more edges than~$\widetilde{G}$.
Hence, adding all edges in~$E_{uv}$ to~$\widetilde{G}$ is still an edge-$\ell$-triangle~$s$-club, a contradiction to the maximality of~$\widetilde{G}$.\qed
\end{proof}

Now, we prove the correctness of the reduction for Theorem~\ref{thm-ell-edge-triangle-s-club-w-hard}.

\begin{proof}[of Theorem~\ref{thm-ell-edge-triangle-s-club-w-hard}]
We show that~$G$ contains a clique of size at least~$k$ if and only if~$G'$ contains an edge-$\ell$-triangle~$s$-club of size at least~$k'$.

Let~$K$ be a clique of size~$k$ in~$G$.
Recall that~$T^v$ is the gadget of vertex~$v\in V(G)$.
We verify that~$S\coloneqq\{u\in V(T^v)\mid v\in K\}$ is an edge-$\ell$-triangle~$s$-club of size at least~$k'$.
More precisely, we show that~$\widetilde{G}\coloneqq G[S]$ fulfills all properties of being an edge-$\ell$-triangle~$s$-club.
Since~$|T^v|=2\ell^*(6s-5)+2$ if~$\ell$ is even and~$|T^v|=(2\ell^*+1)(6s-5)$ if~$\ell$ is odd for each~$v\in K$ and since~$|K|\ge k$, we have~$|S|=|V(\widetilde{G})|\ge k'$.
It remains to show that~$\widetilde{G}$ is an edge-$\ell$-triangle~$s$-club.

\textbf{Next, we show that~$\widetilde{G}$ is an~$s$-club.}

First, we show that~$T^v$ is an~$s$-club.
Therefore, consider the vertex pair $\{a^v_i,a^v_j\}$ for some~$v\in K$. 
Observe that~$P\coloneqq (a^v_i,a^v_{i+1},\ldots, a^v_{i+p})$ for~$i+p=j$ is a path of length~$p$ from~$a^v_i$ to~$a^v_j$ and that~$Q\coloneqq (a^v_i,a^v_{i-1},\ldots, a^v_{i-q})$ for~$i-q=j$ is a path of length~$q$ from~$a^v_i$ to~$a^v_j$.
Clearly,~$p+q=x+1$.
Hence,~$\min(p,q)\le (x+1)/2\le 3\ell^*(s-1)+\lfloor\ell/2\rfloor$.
Without loss of generality, assume that the minimum is achieved by path~$P$ and assume that~$p=\alpha \cdot(3\ell^*)+\beta$ for some~$\alpha\in[s-1]$ and some~$\beta<3\ell^*$.
Recall that by Part~\ref{item-edge-variant-2} of Construction~\ref{const-ell-triangle-s-club-w-hardness},~$a^v_{i'}a^v_{j'}\in E(G')$ if and only if~$j'=i'+z$ for some~$z\in[-3\ell^*,3\ell^*]\setminus\{0\}$.
Hence, $$(a^v_i,a^v_{i+1\cdot(3\ell^*)}, \ldots, a^v_{i+\alpha\cdot(3\ell^*)}, a^v_{i+\alpha\cdot(3\ell^*)+\beta})$$ is a path of length at most~$(s-1)+1=s$ from~$a^v_i$ to~$a^v_j$.

These arguments also apply symmetrically to the vertex pairs~$\{b^v_i,b^v_j\}$ and $\{a^v_i,b^v_j\}$ for each~$v\in K$.
Furthermore, if~$\ell$ is odd, observe that the above argumentation can also be used to show that the vertex pairs~$\{c^v_i,a^v_j\}$,~$\{c^v_i,b^v_j\}$, and~$\{c^v_i,c^v_j\}$ have distance at most~$s$ to each other.

Second, we show that~$a^v_i$ has distance at most~$s$ to~$b^u_j$.
Note that by Part~\ref{item-edge-variant-7} of Construction~\ref{const-ell-triangle-s-club-w-hardness},~$a^v_i$ has neighbors~$b^u_i, \ldots , b^u_{i+\lfloor \ell/2\rfloor}$ since~$uv\in E(G)$.
In the following, we assume that~$j\ne i+z$ for all~$z\in[0,\lfloor \ell/2\rfloor]$.
Consider the paths~$P\coloneqq (a^v_i,b^u_{i+\lfloor \ell/2\rfloor},b^u_{i+\lfloor \ell/2\rfloor+1},\ldots, ,b^u_{i+\lfloor \ell/2\rfloor+p})$ for~$i+\lfloor \ell/2\rfloor+p= j$ of length~$p+1$ and~$Q\coloneqq (a^v_i,b^u_i,b^u_{i-1}, \ldots, b^u_{i-q})$ for~$i-q= j$ of length~$q+1$.
Observe that~$(p+1)+(q+1)=(x+3)-\lfloor \ell/2\rfloor$.
Thus,~$p+q=(x+1)-\lfloor \ell/2\rfloor=6\ell^*(s-1)+1$.
Since~$p$ and~$q$ are integers we have~$\min(p,q)\le ((x+1)-\lfloor \ell/2\rfloor)/2=3\ell^*(s-1)$.

Without loss of generality assume that the minimum is achieved by path~$P$ and assume that~$p=\alpha \cdot(3\ell^*)+\beta$ for some~$\alpha\in[s-2]$ and some~$\beta\le 3\ell^*$.
Recall that by Part~\ref{item-edge-variant-2} of Construction~\ref{const-ell-triangle-s-club-w-hardness}, we have~$b^u_{i'}b^u_{j'}\in E(G')$ if and only if~$j'=i'+z$ for some~$z\in[-3\ell^*,3\ell^*]\setminus\{0\}$.
Now, observe that $$(a^v_i,b^u_{i+\lfloor\ell/2\rfloor},b^u_{i+\lfloor\ell/2\rfloor+1\cdot(3\ell^*)}, \ldots, b^u_{i+\lfloor\ell/2\rfloor+\alpha\cdot(3\ell^*)}, b^u_{i+\lfloor\ell/2\rfloor+\alpha\cdot(3\ell^*)+\beta})$$ is a path of length at most~$1+(s-2)+1=s$ from~$a^v_i$ to~$b^u_j$.

Furthermore, if~$\ell$ is odd, observe that the above argumentation can also be used to show that the vertex pairs~$\{c^v_i,b^u_j\}$ have distance at most~$s$ to each other by replacing~$a^v_i$ with~$c^v_i$ in the paths~$P$ and~$Q$.

The fact that vertices~$a^v_i$ and~$a^u_j$, and~$b^v_i$ and~$b^u_j$, respectively, have distance at most~$s$ to each other can be proven similar as we showed that~$a^v_i$ and~$b^u_j$ have distance at most~$s$ by observing that~$a^v_i$ has distance~$2$ to each vertex~$a^u_{i+z}$ with~$z\in [-3\ell^*,\lfloor\ell/2\rfloor+3\ell^*]$ since~$a^v_i$ has neighbors~~$b^u_i,\ldots, b^u_{i+\lfloor\ell/2\rfloor}$ and since by Part~\ref{item-edge-variant-2} of Construction~\ref{const-ell-triangle-s-club-w-hardness}, we have~$b^u_{i'}a^u_{j'}\in E(G')$ if and only if~$j'=i'+z'$ for some~$z'\in[-3\ell^*,3\ell^*]$.

Furthermore, if~$\ell$ is odd, observe that the above argumentation can also be used to show that the vertex pairs~$\{c^v_i,a^u_j\}$ and~$\{c^v_i,c^u_j\}$ have distance at most~$s$ to each other by replacing~$a^v_i$ with~$c^v_i$ and replacing~$a^u_j$ with~$c^u_j$, respectively, in the paths~$P$ and~$Q$.

Hence,~$\widetilde{G}$ is indeed an~$s$-club.

\textbf{Next, we show that each edge in~$E(\widetilde{G})$ is contained in at least~$\ell$ triangles which are contained in~$\widetilde{G}$.}

Consider the edge~$a^v_ia^v_{i+j}$ for some~$j\in[-3\ell^*,3\ell^*]$.
Without loss of generality, assume that~$j>0$.
By Part~\ref{item-edge-variant-2} Construction~\ref{const-ell-triangle-s-club-w-hardness}, both vertices are adjacent to each vertex~$a^v_{i+i'}$ with~$i'\in[3\ell^*]\setminus\{j\}$. 
Hence, both vertices are in at least~$3\ell^*-1\ge\ell$ triangles.

Furthermore, the statement can be shown analogously for the edges~$b^v_ib^v_{i+j}$ (and $c^v_ic^v_{i+j}$ if~$\ell$ is odd) for some~$j\in[-3\ell^*,3\ell^*]$.

Also, the statement can be shown analogously for the edge~$a^v_ib^v_{i+j}$ for some $j\in[-3\ell^*,3\ell^*]$.
If~$j\ge 0$, then~$a^v_{i+z}$ for each~$z\in[3\ell^*]$ is a common neighbor of both vertices and thus the edge~$a^v_ib^v_{i+j}$ is contained in at least~$\ell$ triangles. 
The case~$j<0$ can be shown analogously.

For odd~$\ell$, the statement can be shown analogously for the edges $a^v_ic^v_{i+j}$, $b^v_ic^v_{i+j}$, and~$c^v_ic^v_{i+j}$ for some~$j\in[-3\ell^*,3\ell^*]$.

The fact that the edges~$a^v_ib^u_{i+j}$ and~$c^v_ib^u_{i+j}$ are contained in exactly~$\ell$ triangles follows from the proof of Lemma~\ref{lem-ell-triangle-s-club-av-also-bu}.

We conclude that each edge in~$E(\widetilde{G})$ is contained in at least~$\ell$ triangles in~$\widetilde{G}$.
Thus,~$S$ is indeed an edge-$\ell$-triangle~$s$-club of size~$k'$.

Conversely, let~$S$ be an edge-$\ell$-triangle~$s$-club of size at least~$k'$ in~$G'$.
More precisely, let~$\widetilde{G}$ be a maximal spanning subgraph of~$G[S]$ which has diameter at most~$s$ and such that each edge in~$E(\widetilde{G})$ is contained in at least~$\ell$ triangles is~$\widetilde{G}$.
We show that~$G$ contains a clique of size at least~$k$.

First, we show that for each vertex~$x\in A_v\cup B_v\cup C_v$ there exists a vertex~$y\in A_u\cup B_u\cup C_u$ such that~$\dist(x,y)\ge s+1$ if~$uv\notin E(G)$.
For this, recall that by construction each two vertices with sub-indices~$i'$ and~$j'$ are not adjacent if their difference (modulo~$x$) is larger than~$3\ell^*$.

\begin{claim}
\label{claim-ell-triangle-high-distance-uv-notin-eg'}
In~$G'$ we have~$\dist(x_i,y_{j})\ge s+1$ for each~$i\in[0,x]$,~$j\coloneqq i+\lfloor\ell/2\rfloor+3\ell^*(s-1)$,~$x_i\in\{a^v_i,b^v_i,c^v_i\}$, and~$y_j\in\{a^u_j,b^u_j,c^u_j\}$ if~$uv\notin E(G)$.
\end{claim}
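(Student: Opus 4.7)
The plan is to associate to each vertex of $G'$ its sub-index, regarded as an element of the cyclic group $\mathbb{Z}_{x+1}$, and to track how far this sub-index can advance along a single edge. Define $\phi(a^v_i)=\phi(b^v_i)=\phi(c^v_i)\coloneqq i$ for all $v\in V(G)$ and $i\in[0,x]$. Call an edge \emph{within-gadget} if both endpoints lie in the same gadget $T^w$, and \emph{inter-gadget} otherwise. A direct inspection of Construction~\ref{const-ell-triangle-s-club-w-hardness} shows that every within-gadget edge, introduced in Parts~\ref{item-edge-variant-2},~\ref{item-edge-variant-3},~\ref{item-edge-variant-5}, and~\ref{item-edge-variant-6}, changes $\phi$ by at most $3\ell^*$ in the cyclic metric on $\mathbb{Z}_{x+1}$, whereas every inter-gadget edge, introduced in Parts~\ref{item-edge-variant-7} and~\ref{item-edge-variant-8}, is of the form $a^v_ib^u_{i+j}$ or $c^v_ib^u_{i+j}$ with $j\in[0,\lfloor\ell/2\rfloor]$ and therefore changes $\phi$ by at most $\lfloor\ell/2\rfloor$.

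Consider now any $x_i$-$y_j$-path $P$ of length $p$ in $G'$. Since $uv\notin E(G)$, no edge of $G'$ joins $T^v$ to $T^u$, so $P$ must leave $T^v$ and later enter $T^u$; in particular $P$ uses at least $t\geq 2$ inter-gadget edges. The net change of $\phi$ along $P$ is $j-i\equiv \lfloor\ell/2\rfloor + 3\ell^*(s-1)\pmod{x+1}$. Using $x+1=6\ell^*(s-1)+\lfloor\ell/2\rfloor+1$ and $\lfloor\ell/2\rfloor\geq 1$, the two arcs of $\mathbb{Z}_{x+1}$ joining $\phi(x_i)$ and $\phi(y_j)$ have lengths $\lfloor\ell/2\rfloor+3\ell^*(s-1)$ and $3\ell^*(s-1)+1$, so the cyclic distance between them equals $3\ell^*(s-1)+1$. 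By the triangle inequality for the cyclic metric, the total $\phi$-change along $P$ is at least this value, yielding
\[
(p-t)\cdot 3\ell^* + t\cdot\lfloor\ell/2\rfloor \;\geq\; 3\ell^*(s-1)+1.
\]

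To conclude, I will assume toward a contradiction that $p\leq s$ and rearrange to obtain $t\bigl(3\ell^*-\lfloor\ell/2\rfloor\bigr)\leq 3\ell^*-1$. A short case distinction on the parity of $\ell$ shows that the ratio $(3\ell^*-1)/(3\ell^*-\lfloor\ell/2\rfloor)$ is strictly less than~$2$ for every $\ell\geq 2$, which forces $t\leq 1$ and contradicts $t\geq 2$. Therefore $p\geq s+1$, as claimed.

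The main obstacle I anticipate is the careful per-edge verification of the cyclic $\phi$-bounds across the five edge classes of Construction~\ref{const-ell-triangle-s-club-w-hardness}, with particular attention to Part~\ref{item-edge-variant-3} (the $A$-$B$ crossing edges) and Part~\ref{item-edge-variant-8} (the $C$-$B$ inter-gadget edges that only exist for odd~$\ell$). One further subtlety is confirming that the precise choice $x = 6\ell^*(s-1)+\lfloor\ell/2\rfloor$ is exactly what is needed to make the shorter arc have length $3\ell^*(s-1)+1$, since any slightly smaller $x$ would shorten the cyclic distance and invalidate the lower bound $p\geq s+1$.
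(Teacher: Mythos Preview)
Your proposal is correct and follows essentially the same approach as the paper: both arguments track sub-indices modulo $x+1$, observe that within-gadget edges move the index by at most $3\ell^*$ while inter-gadget edges move it by at most $\lfloor\ell/2\rfloor$, compute the shorter cyclic arc as $3\ell^*(s-1)+1$, and use the fact that $uv\notin E(G)$ forces at least two inter-gadget edges to derive the contradiction. Your formalization via the map $\phi$ and the cyclic-metric triangle inequality is somewhat more explicit than the paper's version, which simply bounds the total index change after $s$ steps with at least two inter-gadget edges by $2\lfloor\ell/2\rfloor + 3\ell^*(s-2)$ and checks directly that this is strictly less than $3\ell^*(s-1)+1$.
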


\begin{claimproof}
There are two possible paths from~$x_i$ to~$y_{i+\lfloor\ell/2\rfloor+3\ell^*(s-1)}$ with respect to the indices.
First, there is a subsequence of the indices which is increasing ($i,i+1,\ldots,i+\lfloor\ell/2\rfloor+3\ell^*(s-1)$).
This path has length~$\lfloor\ell/2\rfloor+3\ell^*(s-1)$.
Second, there is a subsequence of the indices which is decreasing ($i,i-1,\ldots,i-i'$, where~$-i'=i+\lfloor\ell/2\rfloor+3\ell^*(s-1)$).
This path has length~$3\ell^*(s-1)+1$.

Hence, each path from~$x_i$ to~$y_{i+\lfloor\ell/2\rfloor+3\ell^*(s-1)}$ has to overcome at least~$3\ell^*(s-1)+1$ indices.
Observe that whenever an edge between~$A_p$ (or~$C_p$) and~$B_q$ for~$p,q\in V(G)$ with~$pq\in E(G)$ is traversed, by construction the index can increase/decrease by at most~$\lfloor\ell/2\rfloor$.
Now, we use the fact that~$uv\notin E(G)$:
There are no edges between the vertex gadgets~$T^v$ and~$T^u$.
Thus, at least two times such a traversal of at most~$\lfloor\ell/2\rfloor$ indices has to be done. 
Hence, the index~$i$ can increase or decrease by at most~$2\cdot\lfloor\ell/2\rfloor+3\ell^*(s-2)<3\ell^*(s-1)+1$ if at least~$2$ edge traversals between different vertex gadgets are necessary.
Thus, both vertices have distance at least~$s+1$.
\end{claimproof}

The following statement directly follows from Claim~\ref{claim-ell-triangle-high-distance-uv-notin-eg'} and Lemma~\ref{lem-ell-triangle-s-club-av-also-bu}.

\begin{claim}
\label{claim-ell-triangle-high-distance-uv-notin-eg'-consequenz}
If~$A_v\subseteq S$ (or~$C_v\subseteq S$ if~$\ell$ is odd) and~$B_u\subseteq S$  then~$uv\in E(G)$.
\end{claim}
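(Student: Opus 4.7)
The plan is to prove the claim by contrapositive: assuming the hypothesis holds ($B_u \subseteq S$ together with either $A_v \subseteq S$, or $C_v \subseteq S$ when $\ell$ is odd) but $uv \notin E(G)$, I will exhibit a pair of vertices in $S$ whose distance in $\widetilde{G}$ exceeds $s$. This contradicts the diameter bound on $\widetilde{G}$ that is part of the edge-$\ell$-triangle $s$-club definition.

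Concretely, I would fix any $i \in [0,x]$ and set $j \coloneqq i + \lfloor \ell/2 \rfloor + 3\ell^*(s-1)$. Applying Claim~\ref{claim-ell-triangle-high-distance-uv-notin-eg'} with $uv \notin E(G)$, I pick $x_i \coloneqq a^v_i$ in the first case and $x_i \coloneqq c^v_i$ in the alternative odd-$\ell$ case, and $y_j \coloneqq b^u_j$ in both cases. The claim then yields $\dist_{G'}(x_i, y_j) \geq s+1$. By the hypothesis, $x_i \in A_v \cup C_v \subseteq S$ and $y_j \in B_u \subseteq S$, so both endpoints lie in $V(\widetilde{G}) = S$. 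Since $\widetilde{G}$ is a spanning subgraph of $G[S]$ and hence of $G'$, every $\widetilde{G}$-path is also a $G'$-path, and monotonicity of shortest-path distance under edge deletion gives
\[ \dist_{\widetilde{G}}(x_i, y_j) \geq \dist_{G'}(x_i, y_j) \geq s+1, \]
the sought contradiction.

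I do not foresee any genuine technical obstacle for this claim; the substantive work was already carried out in establishing Claim~\ref{claim-ell-triangle-high-distance-uv-notin-eg'}. The mention of Lemma~\ref{lem-ell-triangle-s-club-av-also-bu} in the preceding sentence is complementary rather than invoked inside this proof: that lemma explains why the regime $A_v, B_u \subseteq S$ (or $C_v, B_u \subseteq S$ for odd $\ell$) is exactly the one that matters, since any single edge of $E_{uv}$ lying in $\widetilde{G}$ already drags all of these sets into $S$. Hence the present claim is the bridge needed to translate ``some $E_{uv}$-edge is in $\widetilde{G}$'' into ``$uv \in E(G)$'', which in turn will let us read off a clique of $G$ from the vertex-gadgets selected by any edge-$\ell$-triangle $s$-club.
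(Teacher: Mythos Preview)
Your proposal is correct and follows essentially the same route as the paper, which simply states that the claim follows directly from Claim~\ref{claim-ell-triangle-high-distance-uv-notin-eg'} (and mentions Lemma~\ref{lem-ell-triangle-s-club-av-also-bu}, though as you rightly observe, the lemma is not actually needed inside this particular argument). Your contrapositive spelled out with the monotonicity inequality $\dist_{\widetilde{G}} \geq \dist_{G'}$ is exactly the intended one-line deduction.

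One small remark: in the odd-$\ell$ alternative where only $C_v \subseteq S$ is assumed, you cannot fix \emph{any} $i \in [0,x]$, since $c^v_i$ exists only for $i \equiv 0 \pmod{\ell^*}$; simply take $i=0$ (or any admissible index) and the argument goes through unchanged.
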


We now use Claims~\ref{claim-ell-triangle-high-distance-uv-notin-eg'} and~\ref{claim-ell-triangle-high-distance-uv-notin-eg'-consequenz} to show that~$G$ contains a clique of size at least~$k$.
We distinguish the cases whether~$S$ contains only parts of one of the gadgets~$A_v$,~$B_v$, or~$C_v$ or whether~$S$ contains all vertices of the gadgets~$A_v$ (or~$B_v$, or~$C_v$) completely.

First, assume that for some vertex~$v\in V(G)$ we have~$A_v\cap S\ne\emptyset$ and~$A_v\not\subseteq S$.
In the following, we show that~$S$ only contains vertices of gadget~$T^v$ and from gadgets~$T^u$ such that~$uv\in E(G)$.
Since~$A_v\not\subseteq S$, we conclude that in~$\widetilde{G}$ we have~$N_{\widetilde{G}}(A_v\cap S)\subseteq (B_v\cup C_v)$: 
Otherwise, vertex~$a_v$ has a neighbor~$b_u\in B_u$ and by Lemma~\ref{lem-ell-triangle-s-club-av-also-bu} we would obtain~$A_v\subseteq S$, a contradiction to the assumption~$A_v\not\subseteq S$.
If~$B_v\not\subseteq S$, then by Lemma~\ref{lem-ell-triangle-s-club-av-also-bu}  no vertex in~$B_v$ can have a neighbor~$a^w_i$ or~$c^w_i$ for some~$w\ne v$.
Hence,~$S\cap T^v$ would be a connected component of size at most~$3(x+1)$, a contradiction to the size of~$S$ since~$k\ge 3$.
Thus, we may assume that~$B_v\subseteq S$.
Observe that if~$a^w_i\in S$ or~$c^w_i\in S$ for some~$w\in V(G)$ such that~$vw\in E(\widetilde{G})$, that is, also~$vw\in E(G')$, then we have~$A_w\subseteq S$ and~$C_w\subseteq S$ by Lemma~\ref{lem-ell-triangle-s-club-av-also-bu} since each vertex~$a^w_i$ and~$c^w_i$ has a neighbor in~$B_v$.
Let~$W\coloneqq \{w_1,\ldots, w_t\}$ denote the set of vertices~$w_j$  such that~$vw_j\in E(G)$  and~$A_{w_j},C_{w_j}\subseteq S$.
If~$w_xw_y\notin E(G)$ for some~$x,y\in[t]$ with~$x\ne y$, then~$a^{w_x}_0$ and~$a^{w_y}_{\lfloor\ell/2\rfloor+3\ell^*(s-1)}$ have distance at least~$s+1$ by Claim~\ref{claim-ell-triangle-high-distance-uv-notin-eg'}.
Thus~$w_xw_y\in E(G)$ for each~$x,y\in[t]$ with~$x\ne y$.

Assume towards a contradiction that~$a^p_i\in S$  or~$c^p_i\in S$ for some~$p\in V(G)\setminus w$ with~$p\ne v$.
Note that~$pv\notin E(G)$ since otherwise~$p\in W$ by the definition of~$W$.
Observe that since~$B_v\subseteq S$ we also have~$b^v_{i+\lfloor\ell/2\rfloor+3\ell^*(s-1)}\in S$.
But since~$pv\notin E(G)$ we obtain from Claim~\ref{claim-ell-triangle-high-distance-uv-notin-eg'} that~$\dist(z_i,b^v_{i+\lfloor\ell/2\rfloor+3\ell^*(s-1)})\ge s+1$ for~$z_i=a_i$ or~$z_i=c_i$, a contradiction.
We conclude that~$S$ does not contain any vertex~$a^p_i$ or~$c^p_i$ with~$p\ne v$ or~$p\ne w_j$ for~$j\in[t]$.

Next, assume towards a contradiction that~$b^p_i\in S$ for some~$p\in V(G)$ with $p\ne v$ and~$p\notin W$.
If~$pv\notin E(G)$, then~$b^p_i$ and~$b^v_{i+\lfloor\ell/2\rfloor+3\ell^*(s-1)}$ have distance at least~$s+1$ again by Claim~\ref{claim-ell-triangle-high-distance-uv-notin-eg'}.
Thus, we can assume that~$pv\in E(G)$.
Recall that~$b^v_{i+\lfloor\ell/2\rfloor+3\ell^*(s-1)}\in S$.
As defined by Claim~\ref{claim-ell-triangle-high-distance-uv-notin-eg'}, each shortest path from~$b^p_i$ to~$b^v_{i+\lfloor\ell/2\rfloor+3\ell^*(s-1)}$ can swap at most once between different vertex gadgets. 
 In this case, there is exactly one swap from~$T^p$ to~$T^v$.
From the above we know that~$(A_p\cup C_p)\cap S=\emptyset$.
Thus, each shortest path from~$b^p_i$ to~$b^v_{i+\lfloor\ell/2\rfloor+3\ell^*(s-1)}$ uses at least one vertex  in~$A_v\cup C_v$.
Since at least one edge with endpoint~$A_v\cup C_v$ is contained in~$S$, we conclude from  Lemma~\ref{lem-ell-triangle-s-club-av-also-bu}  that~$A_v\cup C_v\subseteq S$, a contradiction to the assumption~$A_v\not\subseteq S$.

 Hence, there is no vertex~$p\ne v$ and~$p\notin W$ such that~$T^p\cap S\ne\emptyset$.
 In other words,~$S$ contains only vertices from the gadget~$T^v$ and from gadgets~$T^u$ with~$vu\in E(G)$.
 Thus,~$S\subseteq T^v\cup\bigcup_{j=1}^t T^{w_j}$.
By definition of~$k'$, we have~$t\ge k-1$ and we conclude that~$G$ contains a clique of size at least~$k$.
The case that we have~$B_v\cap S\ne\emptyset$ and~$B_v\not\subseteq S$ or~$C_v\cap S\ne\emptyset$ and~$C_v\not\subseteq S$ for some vertex~$v\in V(G)$ can be handled similarly.

Second, consider the case that for each set~$A_v$ with~$A_v\cap S\ne\emptyset$ we have~$A_v\subseteq S$, that for each set~$B_v$ with~$B_v\cap S\ne\emptyset$ we have~$B_v\subseteq S$ , and that for each set~$C_v$ with~$C_v\cap S\ne\emptyset$ we have~$C_v\subseteq S$.
Let~$W_A\coloneqq \{w_A^j\mid A_{w_j}\subseteq S\}$,~$W_B\coloneqq \{w_B^j\mid B_{w_j}\subseteq S\}$ ,and~$W_C\coloneqq \{w_C^j\mid C_{w_j}\subseteq S\}$.
If~$W_A=\emptyset$ or~$W_B=\emptyset$  or~$W_C=\emptyset$then each connected component in~$G'[S]$ has size at most~$2(x+1)<k'$.
Thus, we may assume that~$W_A\ne\emptyset$,~$W_B\ne\emptyset$ ,and~$W_C\ne\emptyset$. 
By Claim~\ref{claim-ell-triangle-high-distance-uv-notin-eg'-consequenz}, we have~$w^i_Aw^j_B\in E(G)$ for each~$w^i_A\in W_A$ and~$w^j_B\in W_B$ and also~$w^i_Cw^j_B\in E(G)$ for each~$w^i_C\in W_C$ and~$w^j_B\in W_B$.
Furthermore, by Claim~\ref{claim-ell-triangle-high-distance-uv-notin-eg'}, we have~$w^j_Bw^{j'}_B\in E(G)$ for~$w^j_B,w^{j'}_B\in W_B$,~$w^i_Aw^{i'}_A\in E(G)$ for~$w^i_A,w^{i'}_A\in W_A$,~$w^i_Cw^{i'}_C\in E(G)$ for~$w^i_C,w^{i'}_C\in W_C$, and also ,~$w^i_Aw^{i'}_C\in E(G)$ for~$w^i_A\in W_A$ and~$w^{i'}_C\in W_C$.
Hence, we obtain that~$\min(|W_A|,|W_B|,|W_C|)\ge k$ and thus~$G$ contains a clique of size~$k$.
\qed
\end{proof}

\section{Seeded~\texorpdfstring{$s$}{s}-Club}

In this section we study the parameterized complexity of \textsc{Seeded~$s$-Club} with respect to the standard parameter solution size~$k$.
Recall that is this problem we aim to find an $s$-club containing a given seed of vertices.
Here, we assume that~$|W|<k$ since otherwise the problem can be solved in polynomial time.

\subsection{Tractable Cases} 

 For clique seeds, we provide the following kernel.
 Note that here we present a kernel and not only a Turing kernel.
 
\begin{theorem}
\label{thm-w-seeded-tractable-cases}
\textsc{Seeded~$s$-Club} admits a kernel with $\Oh(k^{2|W|+1})$ vertices if~$W$ is a clique. 
\end{theorem}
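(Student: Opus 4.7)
The plan is to apply a trivial-yes reduction rule, restrict to the candidate set of vertices close to every seed, and then bound the number of survivors by a pivot-counting argument. I assume $s \ge 2$, as $s = 1$ reduces to a clique-extension problem that can be handled separately.

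The reduction rule fires if some vertex $v$ satisfies both $W \subseteq N_{\lfloor s/2 \rfloor}[v]$ and $|N_{\lfloor s/2 \rfloor}[v]| \ge k$, in which case $N_{\lfloor s/2 \rfloor}[v]$ is returned as a solution: it is an $s$-club because any two of its vertices are joined by a path of length at most $2\lfloor s/2 \rfloor \le s$ through $v$, it contains $W$ by hypothesis, and it has size at least $k$. Since $W$ is a clique and $s \ge 2$, every $w \in W$ has $W \subseteq N_1[w] \subseteq N_{\lfloor s/2 \rfloor}[w]$, so the rule applies at each $w$, and more generally to every vertex of $V_W := \bigcap_{w \in W} N_{\lfloor s/2 \rfloor}[w]$. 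After exhaustive application, $|N_{\lfloor s/2 \rfloor}[u]| < k$ for every $u \in V_W$. I then delete every vertex outside $V' := \bigcap_{w \in W} N_s[w]$, which is safe since any solution $S$ has $S \subseteq V'$.

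To bound $|V'| \le \Oh(k^{2|W|+1})$, I will describe each $v \in V'$ by a tuple of $2|W|$ pivot vertices drawn from small sets. For each seed $w \in W$, I fix a shortest $v$-$w$ path in $G$ and set $q_w(v) \in N_{\lfloor s/2 \rfloor}[w]$ to be the vertex on this path at distance exactly $\lfloor s/2 \rfloor$ from $w$ (or $v$ itself if $\dist_G(v, w) < \lfloor s/2 \rfloor$); then I pick a second pivot $r_w(v)$ on the $v$-$q_w(v)$ sub-path by the same midpoint construction. The first pivots $q_w(v)$ each lie in sets of size less than $k$. To show the same for the second pivots, I will need $|N_{\lfloor s/2 \rfloor}[q_w(v)]| < k$, which does not follow directly because $q_w(v)$ is only within distance $\lfloor s/2 \rfloor + 1$ (not $\lfloor s/2 \rfloor$) of every seed. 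I will therefore strengthen the reduction rule to fire for any vertex $u$ at distance at most $\lfloor s/2 \rfloor + 1$ from every seed with $|N_{\lfloor s/2 \rfloor - 1}[u]| \ge k$, using as witness the union of $N_{\lfloor s/2 \rfloor - 1}[u]$, the clique $W$, and shortest connecting paths from $u$ to $W$ routed through $W$ itself. With this stronger rule in place, the tuple of $2|W|$ pivots lies in a set of size at most $(k-1)^{2|W|}$, and within the neighborhood determined by the pivots, $v$ itself has $\Oh(k)$ choices, yielding $|V'| = \Oh(k^{2|W|+1})$.

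The hard part will be verifying correctness of the strengthened reduction rule: specifically, showing that the witness set is an $s$-club of diameter at most $s$ in its induced subgraph. The clique-shortcut trick caps the pairwise distance between any two connecting-path vertices by at most $2(\lfloor s/2 \rfloor - 1) + 2 = 2\lfloor s/2 \rfloor \le s$, but carefully checking all cross-type distances (path-vertex to neighborhood-vertex, seed to path-vertex, and so on) for both even and odd $s$, and then combining the resulting neighborhood bounds with the multi-pivot counting, is the technical bulk of the proof.
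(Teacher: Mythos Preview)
Your approach diverges from the paper's, and the final counting step has a gap. You want to conclude that once the $2|W|$ pivots $(q_w(v),r_w(v))_{w\in W}$ are fixed, only $\Oh(k)$ vertices $v$ can produce that tuple. But the only handle you have on $v$ is that it lies within distance one or two of each $r_w(v)$, and none of your reduction rules bounds $|N_2[r_w(v)]|$: the vertex $r_w(v)$ sits at distance roughly $2\lfloor s/2\rfloor-1$ from the seeds, well outside the range $\lfloor s/2\rfloor+1$ where your strengthened rule fires. So you cannot argue that the fibre of the pivot map is small, and the bound $|V'|=\Oh(k^{2|W|+1})$ does not follow from what you have set up. Adding yet another strengthened rule tailored to the position of $r_w(v)$ might patch this, but you have not done so, and it is not clear the cascade terminates cleanly.

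The paper avoids the whole two-layer pivot scheme by splitting the surviving graph into two shells. First it shows directly that $|N_{s-1}[W]|<k^2$ in any non-trivial instance, via two nested ``trivial yes'' lemmas analogous to your first rule. Then, for the outermost shell $N_s(W)$, every vertex $v$ there has distance \emph{exactly} $s$ to every seed $w_\ell$, hence has an actual \emph{neighbour} $u^\ell\in N_{s-1}(w_\ell)$. This is the crucial point: the pivots are adjacent to $v$, not merely close, so a \emph{single} layer of $|W|$ pivots---each drawn from a set of size $<k^2$---suffices. If $|N_s(W)|\ge k\cdot(k^2)^{|W|}=k^{2|W|+1}$, pigeonhole yields $\ge k$ vertices sharing the same pivot tuple; these are pairwise at distance $\le 2$ through the common neighbour $u^1$, and together with $W$ and the $|W|$ fixed paths through the shared pivots they form an $s$-club containing $W$. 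That is where the exponent $2|W|+1$ really comes from, and it sidesteps the need for your second pivot layer and its attendant difficulties.
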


Note that this kernel has polynomial size if~$W$ has constant size.
In the following, assume that~$G[W]$ is a clique. 
 To prove the kernel, we first remove all vertices with distance at least~$s+1$ to any vertex in~$W$. 
Second, we show that if the remaining graph, that is,~$N_s[W]$, is sufficiently large, then~$(G,W,k)$ is a trivial yes-instance.

\begin{rrule}
\label{rr-w-seesed-s-club-remove-vertices-large-distance}
Let~$(G,W,k)$ be an instance of \textsc{Seeded~$s$-Club}. 
If~$G$ contains a vertex~$u$ such that~$\dist(u,w)\ge s+1$ for some~$w\in W$, then remove~$u$.
\end{rrule}

Clearly, Reduction Rule~\ref{rr-w-seesed-s-club-remove-vertices-large-distance} is correct and can be applied in polynomial time.
Next, we show that if the remaining graph is sufficiently large then~$(G,W,k)$ is a yes-instance of \textsc{Seeded~$s$-Club}.

\begin{lemma}
\label{lem-w-seeded-bound-of-ti-small}
An instance~$(G,W,k)$ of \textsc{Seeded~$s$-Club} with ~$|N_{s-1}[W]|\ge k^2$ is a yes-instance.
\end{lemma}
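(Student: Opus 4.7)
The plan is a pigeonhole argument on the BFS layers from the seed~$W$. Set $t := \lfloor (s-1)/2 \rfloor$ and, for $i \ge 0$, let $L_i := \{u \in V(G) : \dist_G(u,W) = i\}$, so that $L_0 = W$ and $N_{s-1}[W] = L_0 \cup L_1 \cup \cdots \cup L_{s-1}$ has at least $k^2$ vertices. I fix an arbitrary BFS tree rooted at~$W$ and split into two cases depending on the size of the ``inner ball''~$N_t[W]$.

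If $|N_t[W]| \ge k$, I claim that $S := N_t[W]$ is itself an $s$-club of size at least~$k$ containing~$W$. For every $u \in S$, the BFS-tree path from $u$ to its root $w_u \in W$ has length at most~$t$ and stays inside~$N_t[W]$, so $\dist_{G[S]}(u, W) \le t$. Because $W$ is a clique, for any $u, v \in S$ one gets $\dist_{G[S]}(u,v) \le t + 1 + t = 2t + 1 \le s$, using the defining choice of~$t$.

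Otherwise $|N_t[W]| < k$, so $|L_{t+1} \cup \cdots \cup L_{s-1}| \ge k^2 - k + 1$. Each vertex in this union has a unique depth-$t$ ancestor in the BFS tree, which lies in $L_t \subseteq N_t[W]$ and hence takes one of at most $k-1$ possible values. (Note $L_t \neq \emptyset$, else no vertex would lie below depth~$t$.) Pigeonhole yields some $a \in L_t$ with at least~$k$ descendants in layers $L_{t+1}, \ldots, L_{s-1}$. I then take $S$ to consist of $W$, the BFS-tree path $P$ of length~$t$ from $a$ up to some $w \in W$, and $k$ such descendants~$u$ together with the BFS-tree paths from each of them down to~$a$ (each of length at most $s - 1 - t$). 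Then $|S| \ge k$, and one checks $\diam(G[S]) \le s$ by routing every pair through~$a$: two descendants of $a$ are at distance $\le 2(s-1-t) \le s$; a descendant and a vertex of $W$ are at distance $\le (s-1-t) + t + 1 = s$; remaining configurations (intermediate vertices on $P$, pairs inside $W$, etc.) yield strictly shorter distances.

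The main technical step is the distance bookkeeping in the second case: one must verify that \emph{every} pair of vertices in the assembled set admits a short enough path inside the induced subgraph~$G[S]$, which is why the specific BFS-tree paths (and not just the vertex set) are included explicitly. The key parameter choice $t = \lfloor (s-1)/2 \rfloor$ is precisely the unique value that simultaneously satisfies $2t + 1 \le s$ (needed in the first case) and $2(s-1-t) \le s$ (needed in the second case), so both cases land within the diameter budget.
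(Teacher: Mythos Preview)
Your proof is correct and follows essentially the same two-case pigeonhole structure as the paper: your threshold $t=\lfloor(s-1)/2\rfloor$ coincides with the paper's $\lfloor(s+1)/2\rfloor-1$, your Case~1 is the paper's Lemma~\ref{lem-seeded-s-club-clique-sge3-small}, and your Case~2 is the paper's Lemma~\ref{lem-seeded-s-club-clique-sge3-big} (the paper takes the ball $N_{\lfloor s/2\rfloor}[v]$ around the hub, you take BFS-tree descendants of $a$ together with their tree paths, which amounts to the same distance bookkeeping). The only cosmetic difference is that the paper treats $s=2$ separately whereas your argument handles all $s\ge 2$ uniformly.
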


To prove Lemma~\ref{lem-w-seeded-bound-of-ti-small} for~$s\ge 3$ we need the following technical lemmas.

\begin{lemma}
\label{lem-seeded-s-club-clique-sge3-small}
An instance~$(G,W,k)$ of \textsc{Seeded~$s$-Club} with~$s\ge 3$ is a yes-instance if~$|N_{\lfloor (s+1)/2\rfloor-1}[W]|\ge k$.
\end{lemma}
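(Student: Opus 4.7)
The plan is to take the set $S \coloneqq N_r[W]$ with $r \coloneqq \lfloor (s+1)/2\rfloor-1$ directly as the candidate solution and show that it is an $s$-club containing $W$. By hypothesis $|S|\ge k$, and $W\subseteq N_0[W]\subseteq S$ is immediate, so the only nontrivial step is to bound the diameter of $G[S]$ by $s$.

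For the diameter, I would fix two arbitrary vertices $u,v \in S$ and pick $w_u,w_v \in W$ with $\dist_G(u,w_u)\le r$ and $\dist_G(v,w_v)\le r$. The key observation is that any shortest $u$-$w_u$-path in $G$ stays inside $S$: every vertex on such a path has distance at most $r$ from $w_u \in W$, hence lies in $N_r[W]=S$. The same holds for $v$ to $w_v$. Since $W$ is a clique, either $w_u=w_v$ or $w_uw_v \in E(G[W])\subseteq E(G[S])$, so I can concatenate these two shortest paths through at most one additional edge inside $W$ to obtain a $u$-$v$-walk in $G[S]$ of length at most $r+1+r = 2r+1$.

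It then remains to check the arithmetic $2r+1\le s$. For odd $s=2t+1$ we have $r = t$ and $2r+1 = 2t+1 = s$, while for even $s=2t$ we have $r=t-1$ and $2r+1 = 2t-1 < s$. In both cases $\dist_{G[S]}(u,v)\le s$, so $G[S]$ has diameter at most $s$ and $S$ is an $s$-club of size at least $k$ with $W\subseteq S$, witnessing that $(G,W,k)$ is a yes-instance.

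I do not expect any real obstacle here; the only delicate point is arguing that the shortest paths from $u$ and $v$ to their respective closest seed vertex remain inside $N_r[W]$, which follows directly from the monotonicity of distance along a shortest path. The use of the clique structure of $W$ is essential, since it is what supplies the single ``free'' middle edge between $w_u$ and $w_v$; without this, the bound $2r+1$ would not suffice when $s$ is odd. \qed
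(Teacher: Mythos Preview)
Your proof is correct and follows essentially the same approach as the paper: take $S=N_r[W]$ with $r=\lfloor(s+1)/2\rfloor-1$, use that $W$ is a clique to route any two vertices of $S$ through $W$ with total length at most $2r+1\le s$, and verify the arithmetic case-by-case. In fact you are slightly more careful than the paper, which simply bounds $\dist_G(u,v)$ without explicitly arguing that the connecting path stays inside $S$; your observation that every vertex on a shortest $u$--$w_u$ path lies in $N_r[w_u]\subseteq S$ fills exactly this gap.
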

\begin{proof}
By definition~$W\subseteq N_{\lfloor (s+1)/2\rfloor-1}[W]$ and~$|N_{\lfloor (s+1)/2\rfloor-1}[W]|\ge k$.
Thus, it remains to show that~$N_{\lfloor (s+1)/2\rfloor-1}[W]$ is an~$s$-club.
Therefore, consider a pair of vertices~$u, v\in N_{\lfloor (s+1)/2\rfloor-1}[W]$.
Observe that by definition~$\dist(u,W)\le \lfloor (s+1)/2\rfloor-1$ and~$\dist(v,W)\le \lfloor (s+1)/2\rfloor-1$.
Since~$W$ is a clique, we have~$\dist(u,v)\le (\lfloor (s+1)/2\rfloor-1)+1+(\lfloor (s+1)/2\rfloor-1)\le s$.
Hence, the lemma follows.\qed
\end{proof}

Note that the assumption~$s\ge 3$ in Lemma~\ref{lem-seeded-s-club-clique-sge3-small} is necessary to guarantee that~$\lfloor (s+1)/2\rfloor-1\ge 1$.
Next, we show that if a vertex in~$N_{\lfloor (s+1)/2\rfloor-1}(W)$ has many vertices close to it, then~$(G,W,k)$ is a yes-instance.

\begin{lemma}
\label{lem-seeded-s-club-clique-sge3-big}
An instance~$(G,W,k)$ of \textsc{Seeded~$s$-Club} in which~$s\ge 3$ and $|N_{\lfloor s/2\rfloor}(v)|\ge k$ for some vertex~$v\in N_{\lfloor (s+1)/2\rfloor-1}(W)$ is a yes-instance.
\end{lemma}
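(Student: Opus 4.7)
The plan is to exhibit the $s$-club $S := W \cup N_{\lfloor s/2\rfloor}[v]$ and verify that it satisfies all required properties. First, to handle the size: since $v \in N_{\lfloor s/2\rfloor}[v]$ and $N_{\lfloor s/2\rfloor}(v) \subseteq N_{\lfloor s/2\rfloor}[v]$, the hypothesis gives $|S| \ge |N_{\lfloor s/2\rfloor}[v]| \ge |N_{\lfloor s/2\rfloor}(v)| + 1 > k$. Containment of the seed, $W \subseteq S$, is immediate from the definition of $S$.

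The bulk of the argument is checking that $\diam(G[S]) \le s$, which I would do by case analysis on where the two endpoints $x,y \in S$ lie. In the case $x,y \in N_{\lfloor s/2\rfloor}[v]$, concatenating two shortest paths to $v$ (each of length at most $\lfloor s/2\rfloor$) gives a walk of length at most $2\lfloor s/2\rfloor \le s$; all intermediate vertices lie in $N_{\lfloor s/2\rfloor}[v] \subseteq S$, since any prefix of a shortest path from $v$ stays within that ball. In the case $x,y \in W$, they are adjacent because $W$ is a clique. The mixed case, $x \in N_{\lfloor s/2\rfloor}[v]$ and $y \in W$, is the interesting one: I would pick a witness $w^\ast \in W$ with $\dist_G(v, w^\ast) = \lfloor(s+1)/2\rfloor - 1$ (which exists because $v \in N_{\lfloor(s+1)/2\rfloor - 1}(W)$), and chain shortest paths $x \to v \to w^\ast \to y$, using the edge $w^\ast y$ provided by the clique structure of $W$.

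The arithmetic then closes out the proof: the total length is at most
\[
\lfloor s/2\rfloor + \bigl(\lfloor(s+1)/2\rfloor - 1\bigr) + 1 = \lfloor s/2\rfloor + \lfloor(s+1)/2\rfloor = s,
\]
which equals $s$ for both parities. To confirm that this witnessing path stays inside $S$, I note that the vertices of a shortest $v$-to-$w^\ast$ path all lie within distance $\lfloor(s+1)/2\rfloor - 1 \le \lfloor s/2\rfloor$ of $v$ (equality for odd $s$, strict for even $s$), and hence in $N_{\lfloor s/2\rfloor}[v] \subseteq S$, together with the endpoint $w^\ast \in W \subseteq S$.

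The only mild subtlety, which I view as the main thing to be careful about, is insisting that all paths used to bound distances actually live in $G[S]$ rather than $G$. This is handled by the two observations above: the closure of $N_{\lfloor s/2\rfloor}[v]$ under prefixes of shortest paths from $v$, and the fact that the $v$-to-$W$ witness path stops inside the same ball (and then uses a clique edge of $W$ to reach an arbitrary $y \in W$). Assembling the three cases gives the diameter bound, so $S$ is an $s$-club of size at least $k$ containing $W$, as required.
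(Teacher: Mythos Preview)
Your proof is correct and follows essentially the same approach as the paper: both take $S$ to be $W$ together with the ball $N_{\lfloor s/2\rfloor}[v]$ (the paper additionally lists the shortest $v$--$W$ path $P$, but $P$ is already contained in this ball, so the sets coincide), and both bound distances by routing through $v$ and, for the mixed case, onward along a shortest $v$--$w^\ast$ path plus one clique edge of $W$. If anything, you are more explicit than the paper in verifying that all witnessing paths stay inside $G[S]$.
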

\begin{proof}
Let~$v$ be a vertex as specified in the lemma.
By definition of~$v$, there exists a path~$P\coloneqq (q_0,q_1,\ldots, q_{\lfloor (s+1)/2\rfloor-1})$ of length~$\lfloor (s+1)/2\rfloor-1$ in~$G$ such that~$q_{\lfloor (s+1)/2\rfloor-1}=v$,~$w_0\in W$, and~$q_i\in N_i(q_0)$.
We show that~$S\coloneqq N_{\lfloor s/2\rfloor}(v)\cup W\cup P$ is an~$s$-club of size~$k$ containing~$W$.
Clearly,~$W\subseteq S$ and~$|S|\ge k$.
Thus, it remains to show that~$S$ is an~$s$-club.

Consider a vertex~$w\in W$. 
Vertex~$w$ has distance at most~$i+1$ to vertex~$w_i$.
In particular,~$\dist(w,v)\le \lfloor (s+1)/2\rfloor$.
Since each vertex~$u\in N_{\lfloor s/2\rfloor}[v]$ has distance at most~$\lfloor s/2\rfloor$ to~$v$ we obtain that~$\dist(w,u)\le \lfloor (s+1)/2\rfloor+\lfloor s/2\rfloor=s$.
By similar arguments we can also show that vertex~$w_i$ for~$i\in [\lfloor (s+1)/2\rfloor-1]$ has distance at most~$s$ to each vertex in~$S$.

Finally, consider two vertices~$x,y\in N_{\lfloor s/2\rfloor}[v]$.
Note that~$\dist(x,v)\le \lfloor s/2\rfloor$ and also~$\dist(y,v)\le \lfloor s/2\rfloor$ and thus~$\dist(x,y)\le s$.

Thus,~$S$ is indeed an~$s$-club.\qed
\end{proof}

With those two lemmas we are now able to prove Lemma~\ref{lem-w-seeded-bound-of-ti-small}.

\begin{proof}[Proof of Lemma~\ref{lem-w-seeded-bound-of-ti-small}]
First, we consider the case~$s=2$.
It is sufficient to show that~$(G,W,k)$ is a yes-instance if~$|N[w]|\ge k$ for some~$w\in W$ since~$|W|\le k$ and since~$N[W]\ge k^2$ by our assumption.
Since all vertices in~$N(w)$ have the common neighbor~$w$, we conclude that~$N[w]$ is a~$2$-club.
Also, since~$W$ is a clique, we have~$W\subseteq N[w]$.
The size bound of~$N_{s-1}[W]$ follows from~$|N[w]|\ge k$.
Thus,~$(G,W,k)$ is a yes-instance.

Second, we consider the case~$s\ge 3$.
Observe that $$N_{s-1}[W]= N_{\lfloor (s+1)/2\rfloor-1}[W]\cup\bigcup_{v\in N_{\lfloor (s+1)/2\rfloor-1}[W]}N_{\lfloor s/2\rfloor}[v].$$

By Lemma~\ref{lem-seeded-s-club-clique-sge3-small},~$(G,W,k)$ is a yes-instance if~$|N_{\lfloor (s+1)/2\rfloor-1}[W]|\ge k$ and, by Lemma~\ref{lem-seeded-s-club-clique-sge3-big}, $(G,W,k)$ is a yes-instance if~$|N_{\lfloor s/2\rfloor}(v)|\ge k$ for some~$v\in N_{\lfloor (s+1)/2\rfloor-1}(W)$.
Thus, by the above equality we conclude that~$(G,W,k)$ is a yes-instance.\qed
\end{proof}

Finally, we bound the size of~$N_s(W)$.
There we assume that~$|N_{s-1}[W]|<k^2$ by Lemma~\ref{lem-w-seeded-bound-of-ti-small} and that Reduction Rule~\ref{rr-w-seesed-s-club-remove-vertices-large-distance} is applied.

\begin{lemma}
\label{lem-w-seeded-bound-of-ti-s}
An instance~$(G,W,k)$ of \textsc{Seeded~$s$-Club} with~$|N_s(W)|\ge k^{2|W|+1}$ which is reduced with respect to Reduction Rule~\ref{rr-w-seesed-s-club-remove-vertices-large-distance}  is a yes-instance.
\end{lemma}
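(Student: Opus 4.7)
The plan is to combine Lemma~\ref{lem-w-seeded-bound-of-ti-small} with a pigeonhole argument on shortest-path profiles of the vertices in~$N_s(W)$. If $|N_{s-1}[W]|\ge k^2$, Lemma~\ref{lem-w-seeded-bound-of-ti-small} already yields a yes-instance, so I may assume $|N_{s-1}[W]|<k^2$. Since Reduction Rule~\ref{rr-w-seesed-s-club-remove-vertices-large-distance} has been applied exhaustively, every vertex of~$G$ lies within distance~$s$ of every~$w\in W$, and $W$ is a clique by the hypothesis of the section.

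For each $u\in N_s(W)$ and each $w\in W$, I fix a neighbor $\pi_w(u)$ of~$u$ on a shortest $u$-to-$w$ path; then $\pi_w(u)\in N_{s-1}[W]$. The \emph{profile} $\Pi(u):=(\pi_w(u))_{w\in W}$ therefore takes at most $|N_{s-1}[W]|^{|W|}<k^{2|W|}$ values, so the assumption $|N_s(W)|\ge k^{2|W|+1}$ combined with pigeonhole produces $k$ vertices $u_1,\ldots,u_k\in N_s(W)$ sharing a common profile. A straightforward induction shows that if $\pi_w(u_i)=\pi_w(u_j)$, then the iterated images $\pi_w^{(t)}(u_i)=\pi_w^{(t)}(u_j)$ coincide for every valid~$t$; hence for each $w\in W$ the whole shortest path from $u_i$ to $w$ beyond $u_i$ is independent of~$i$, and I denote it by $P_w$. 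I set $S:=W\cup\{u_1,\ldots,u_k\}\cup\bigcup_{w\in W}V(P_w)$, which has size at least~$k$.

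What remains is to verify that $\diam(G[S])\le s$. Vertices in $W$ are pairwise at distance~$1$ since $W$ is a clique; two vertices $u_i,u_j$ are joined by the length-two path $u_i\,\pi_w(u_i)\,u_j$ through their common neighbor (which lies in $S$); and $u_i$ reaches any vertex on $P_w$ along $P_w\cup\{u_i\}$ in at most $\dist_G(u_i,w)\le s$ steps. The main obstacle and central case is the distance between two vertices $x\in V(P_w)$ and $y\in V(P_{w'})$ on distinct paths: writing $t,t'$ for the distances of $x,y$ to $u_i$ along their respective paths (so their distances to $w,w'$ are at most $s-t$ and $s-t'$), the two natural routes in $G[S]$ either pass through $u_i$ with length at most $t+t'$, or through the edge $ww'\in E(G[W])$ with length at most $(s-t)+1+(s-t')=2s-t-t'+1$; the minimum of these is at most $s$ regardless of whether $t+t'\le s$ or $t+t'\ge s+1$. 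An analogous split handles mixed pairs with one endpoint in $W$ and the other on some $P_w$. Hence $S$ is an $s$-club of size at least $k$ containing $W$, which proves the lemma.
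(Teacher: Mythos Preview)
Your proof is correct and follows essentially the same approach as the paper: both fix, for every $p\in N_s(W)$ and every $w\in W$, a neighbour of $p$ on a shortest $p$--$w$ path, apply pigeonhole over the resulting $|W|$-tuples (using $|N_{s-1}[W]|<k^2$ from Lemma~\ref{lem-w-seeded-bound-of-ti-small}) to find $k$ vertices with a common tuple, and take these $k$ vertices together with $W$ and one fixed shortest path per seed as the $s$-club. Your min-of-two-routes verification of the diameter is exactly the paper's $C_{2s+1}$-cycle argument rephrased.
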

\begin{proof}
Since Reduction Rule~\ref{rr-w-seesed-s-club-remove-vertices-large-distance} has been  applied exhaustively, each vertex~$p\in N_s(W)$ has distance exactly~$s$ to each vertex in~$W$.
In other words, for each vertex~$w_\ell\in W$ there exists a vertex~$u^{\ell}_{s-1}\in N_{s-1}(w_\ell)$ such that~$pu^\ell_{s-1}\in E(G)$.
Note that~$N_{s-1}(w_\ell)\subseteq N_{s-1}[W]$.
Moreover, by Lemma~\ref{lem-w-seeded-bound-of-ti-small} we may assume that $|N_{s-1}[W]|<k^2$.
In particular:~$|N_{s-1}(W)|<k^2$.
Since~$|N_s(W)|\ge k^{2|W|+1}$, by the pigeonhole principle there exists a set~$\{u^1_{s-1},u^2_{s-1},\ldots, u^{|W|}_{s-1}\}$ with $u^\ell_{s-1}\in N_{s-1}(w_\ell)$ for~$\ell\in[|W|]$ such that the set $P\coloneqq N_s(W)\cap\bigcap_{\ell\in[|W|]}N(u^\ell_{s-1})$ has size at least~$k$. 
The size bound follows from the observation that each~$N_{s-1}(w_\ell)$ has size at most~$k^2$ and we have exactly~$|W|$ many of these sets.
By the definition of vertex~$u^\ell_{s-1}$, there exists for each~$i\in[s-2]$ a vertex~$u_i^\ell\in N_i(w_\ell)$ such that~$w_\ell, u_1^\ell, \ldots , u_{s-1}^\ell$ is a path of length~$s-1$ in~$G$.
We define the set~$U\coloneqq \{u_i^\ell\mid\ell\in[|W|], i\in[s-1]\}$.
Next, we show that~$Z\coloneqq P\cup W\cup U$ induces an~$s$-club.

First, observe that all vertices in~$P$ have distance at most~$2$ to each other since they have the common neighbor~$u^1_{s-1}$.
Second, note that the vertices~$w_\ell$, $u^\ell_1,\ldots, u^\ell_{s-1}$,~$p,u^j_{s-1},\ldots, u^j_1,w_j$ form a cycle with~$2s+1$ vertices, for each~$p\in P$ and each two indices~$j,\ell\in[|W|]$.
Each vertex in this cycle has distance at most~$s$ to each other vertex in that cycle.
Hence,~$Z$ is indeed an~$s$-club.\qed
\end{proof}

Recall that Lemma~\ref{lem-w-seeded-bound-of-ti-small} showed that the number of vertices with distance \emph{at most}~$s-1$ to~$W$ is bounded by~$k^2$.
Together with Lemma~\ref{lem-w-seeded-bound-of-ti-s} now Theorem~\ref{thm-w-seeded-tractable-cases} is proven.

\subsection{Intractable Cases}
Now, we show hardness for some of the remaining cases.

\begin{theorem}
Let~$H$ be a fixed graph. \textsc{Seeded~$s$-Club} is W[1]-hard parameterized by~$k$ even if~$G[W]$ is isomorphic to~$H$, when
\begin{itemize}
\item $s=2$ and~$H$ contains at least two non-adjacent vertices, or if
\item $s\ge 3$ and~$H$ contains at least two connected components.
\end{itemize}
\end{theorem}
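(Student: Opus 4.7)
The plan is to prove both parts by reductions from \textsc{Clique}, following the blueprint outlined in the introduction. For each fixed host graph $H$ and \textsc{Clique} instance $(G,k)$, I will build an instance $(G',W,k')$ of \textsc{Seeded~$s$-Club} with $G'[W]\cong H$. In both cases the construction introduces two copies $X_1=\{v_1:v\in V(G)\}$ and $X_2=\{v_2:v\in V(G)\}$ of $V(G)$, adds a bridge edge $v_1v_2$ for every $v\in V(G)$, and encodes the edges of $G$ through cross edges $p_1q_2$ and $p_2q_1$ for each $pq\in E(G)$. The budget is taken to be $k'=|W|+2k$, plus at most a constant number of auxiliary vertices used only to guarantee distances between $W$ and $X_1\cup X_2$.

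For the first case ($s=2$ with non-adjacent $u,z\in V(H)$), each $v_1$ is attached to $z$ and each $v_2$ to $u$, with no further edges between $X_1\cup X_2$ and $W$ initially. Under this construction, the only common neighbors of $v_1$ and $u$ in $G'$ are $v_2$ together with the cross-neighbors $\{q_2:q\in N_G(v)\}$, and the only common neighbors of $v_1$ and $w_2$ for $v\neq w$ are $v_2$ and $w_1$, both of which exist exactly when $vw\in E(G)$. To cover remaining $W$-to-$X$ distance-$2$ requirements I would attach helper vertices to specific $W$-vertices on only one of the two sides at a time, so that no new common neighbor of $u$ and $z$, nor of $v_1$ and $u$, is created. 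Combining these constraints with the budget, any $2$-club $S\supseteq W$ of size at least $k'$ forces $\{v:v_1\in S\}=\{v:v_2\in S\}$, and this common set is a clique of $G$ of size at least~$k$.

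For the second case ($s\geq 3$ with components $U,Z$ of $H$), the same skeleton is used but the attachments to $W$ are replaced by short gadgets of length roughly $\lfloor s/2\rfloor$ linking $v_1$ to $U$ and $v_2$ to $Z$. The decisive property of $H$ exploited here is that $U$ and $Z$ lie in distinct connected components, so every path of length $\leq s$ in $G'[S]$ between a vertex of $U$ and a vertex of $Z$ must traverse some bridge edge $v_1v_2$. Gadget lengths are tuned so that $v_1$'s distance to $U$ in $G'[S]$ is at most $s$ only when $v_2\in S$, and so that $\dist_{G'}(p_1,q_2)\leq s$ if and only if $pq\in E(G)$; the rest of the argument then proceeds as in the first case.

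The main obstacle is engineering the biconditional $\dist_{G'}(p_1,q_2)\leq s\Leftrightarrow pq\in E(G)$ in $G'$, while simultaneously keeping $W\cup\{v_1,v_2:v\in C\}$ an $s$-club for every clique $C$ of $G$: naive helper attachments can easily open unintended short detours between $X_1$ and $X_2$, and it is the non-adjacency of $u$ and $z$ (respectively the disconnection of $U$ from $Z$ in $H$) that serves as a firewall to block such detours. The forward direction of the reduction is then a direct verification that $W\cup\{v_1,v_2:v\in C\}$ (together with any forced auxiliaries) is an $s$-club of size exactly $k'$; the backward direction extracts a clique from any feasible solution via the structural properties established above.
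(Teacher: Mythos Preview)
Your high-level blueprint matches the paper's, but the specific implementation you describe has a genuine gap in the backward direction, in both cases.

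\textbf{The $s=2$ case.} With your cross edges $p_1q_2$, $p_2q_1$ for $pq\in E(G)$ and the attachments $X_1\subseteq N(z)$, $X_2\subseteq N(u)$, the common neighbours of $v_1$ and $u$ are all of $\{v_2\}\cup\{q_2:vq\in E(G)\}$, not just $v_2$. Hence ``$v_1\in S$'' does not force ``$v_2\in S$''; it only forces some $X_2$-neighbour of $v_1$ into $S$. So the sets $A=\{v:v_1\in S\}$ and $B=\{w:w_2\in S\}$ need not coincide. The only cross constraint you get is that every $a\in A$ and $b\in B$ with $a\neq b$ satisfy $ab\in E(G)$ (since $a_1$ and $b_2$ share no neighbour and must therefore be adjacent). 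But pairs inside $A$ (respectively $B$) are always at distance~$2$ via $z$ (respectively $u$), so nothing forces $A$ or $B$ themselves to be cliques. Taking $G$ to be a star with centre $a$ and $2k-1$ leaves, the choice $A=\{a\}$, $B=V(G)$ gives $|A|+|B|\ge 2k$ and (once helpers are added) a valid seeded $2$-club, yet $G$ has no triangle. The paper avoids this by putting the $E(G)$-edges \emph{inside} each copy $G_u,G_v$ and using only the identity edges $x_ux_v$ across; then $N(x_u)\cap N(v)=\{x_v\}$ exactly, which does force $A=B$, after which the pair $x_u,y_v$ with $xy\notin E(G)$ has empty common neighbourhood. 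The paper also specifies the helpers explicitly (a vertex~$p$ adjacent to all of $W$; vertices $u^*,v^*$ adjacent to $p$, to one copy each, and to $W\setminus\{u,v\}$), which your proposal leaves open.

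\textbf{The $s\ge 3$ case.} Here the length-$1$ bridge $v_1v_2$ together with the cross edges already yields $\dist_{G'}(p_1,q_2)\le 3$ whenever $p$ and $q$ have a common $G$-neighbour (via $p_1\to p_2\to r_1\to q_2$), so for $s\ge 3$ the intended biconditional $\dist_{G'}(p_1,q_2)\le s\Leftrightarrow pq\in E(G)$ fails outright, regardless of how the $W$-attachment gadgets are tuned. The paper's construction is structurally different: it replaces each bridge edge by an internally private path $x_1,q^x_1,\dots,q^x_{s-2},x_2$ of length $s-1$, encodes $E(G)$ only inside each copy, and connects the two seed components by a single public path $p_1,\dots,p_{s-1}$. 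The backward direction then argues that the unique length-$\le s$ path from $v_1$ to $D_2$ runs through $Q'_v$, forcing the whole gadget into $S$, and that any length-$\le s$ path from $v_1$ to $u_2$ must end with the edge $v_2u_2$ (or start with $v_1u_1$), which exists only when $uv\in E(G)$.
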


\paragraph{Hardness for $s=2$.} First, we prove hardness for~$s=2$ when~$H$ contains at least one non-edge.
For an illustration of Construction~\ref{const-w-seeded-s-club-s-equals-two} we refer to Figure~\ref{fig-examples-seeded-club}.

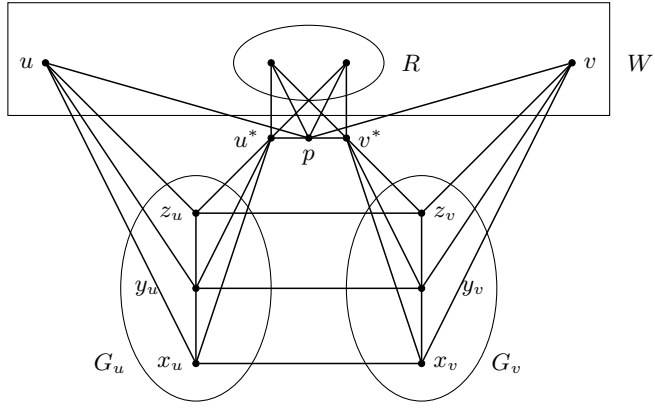
\begin{figure}[t]
\centering
\begin{tikzpicture}
\node[label=left:{$x_u$}](xu) at (0, 0) [shape = circle, draw, fill=black, scale=0.07ex]{};
\node[label=left:{$y_u$}](yu2) at (-0.25, 1) {};
\node(yu) at (0, 1) [shape = circle, draw, fill=black, scale=0.07ex]{};
\node[label=left:{$z_u$}](zu) at (0, 2) [shape = circle, draw, fill=black, scale=0.07ex]{};
\path [-,line width=0.2mm](xu) edge (yu);
\path [-,line width=0.2mm](zu) edge (yu);
\draw (0,1) ellipse (1cm and 1.5cm);
\node[label=left:{$G_u$}](gg) at (-0.7, 0) {};

\node[label=right:{$x_v$}](xv) at (3, 0) [shape = circle, draw, fill=black, scale=0.07ex]{};
\node[label=right:{$y_v$}](yv2) at (3.3, 1) {};
\node(yv) at (3, 1) [shape = circle, draw, fill=black, scale=0.07ex]{};
\node[label=right:{$z_v$}](zv) at (3, 2) [shape = circle, draw, fill=black, scale=0.07ex]{};
\path [-,line width=0.2mm](xv) edge (yv);
\path [-,line width=0.2mm](zv) edge (yv);
\draw (3,1) ellipse (1cm and 1.5cm);
\node[label=right:{$G_v$}](gg) at (3.7, 0) {};

\path [-,line width=0.2mm](xu) edge (xv);
\path [-,line width=0.2mm](yu) edge (yv);
\path [-,line width=0.2mm](zu) edge (zv);

\node[label=below:{$p$}](p) at (1.5, 3) [shape = circle, draw, fill=black, scale=0.07ex]{};
\node[label=left:{$u^*$}](us) at (1, 3) [shape = circle, draw, fill=black, scale=0.07ex]{};
\node[label=right:{$v^*$}](vs) at (2, 3) [shape = circle, draw, fill=black, scale=0.07ex]{};
\path [-,line width=0.2mm](p) edge (us);
\path [-,line width=0.2mm](p) edge (vs);
\path [-,line width=0.2mm](xu) edge (us);
\path [-,line width=0.2mm](yu) edge (us);
\path [-,line width=0.2mm](zu) edge (us);
\path [-,line width=0.2mm](xv) edge (vs);
\path [-,line width=0.2mm](yv) edge (vs);
\path [-,line width=0.2mm](zv) edge (vs);

\node[label=left:{$u$}](u) at (-2, 4) [shape = circle, draw, fill=black, scale=0.07ex]{};
\node[label=right:{$v$}](v) at (5, 4) [shape = circle, draw, fill=black, scale=0.07ex]{};
\path [-,line width=0.2mm](xu) edge (u);
\path [-,line width=0.2mm](yu) edge (u);
\path [-,line width=0.2mm](zu) edge (u);
\path [-,line width=0.2mm](xv) edge (v);
\path [-,line width=0.2mm](yv) edge (v);
\path [-,line width=0.2mm](zv) edge (v);
\draw (1.5,4) ellipse (1cm and 0.5cm);
\node[label=right:{$R$}](R) at (2.5, 4) {};
\node(r1) at (1, 4) [shape = circle, draw, fill=black, scale=0.07ex]{};
\node(r2) at (2, 4) [shape = circle, draw, fill=black, scale=0.07ex]{};
\path [-,line width=0.2mm](p) edge (u);
\path [-,line width=0.2mm](p) edge (v);
\path [-,line width=0.2mm](p) edge (r1);
\path [-,line width=0.2mm](p) edge (r2);
\path [-,line width=0.2mm](us) edge (r1);
\path [-,line width=0.2mm](us) edge (r2);
\path [-,line width=0.2mm](vs) edge (r1);
\path [-,line width=0.2mm](vs) edge (r2);
\draw[draw=black] (-2.5,3.3) rectangle ++(8,1.5);
\node[label=right:{$W$}](W) at (5.5, 4) {};
\end{tikzpicture}
\caption{Illustration of Construction~\ref{const-w-seeded-s-club-s-equals-two}.}
\label{fig-examples-seeded-club}
\end{figure}

\begin{const}
\label{const-w-seeded-s-club-s-equals-two}
Let~$(G,k)$ be an instance of \textsc{Clique}. 
We construct an equivalent instance~$(G',k')$ of \textsc{Seeded~$s$-Club} as follows.
Initially, we add the set~$W$ to~$G'$, and add edges such that~$G'[W]$ is isomorphic to~$H$. 
Since~$H$ is not a clique, there exist two vertices~$u,v\in V(H)$ such that~$uv\notin E(H)$.
Let~$R\coloneqq W\setminus\{u,v\}$.
Next, we add two copies~$G_u$ and~$G_v$ of~$G$ to~$G'$, make~$u$ adjacent to each vertex in~$V(G_u)$, and make~$v$ adjacent to each vertex in~$V(G_v)$. 
We denote with~$x_u$, and~$x_v$ the copies of~$x$ in~$G_u$ and~$G_v$, respectively.
Next, we add the edge~$x_ux_v$ for each~$x\in V(G)$.
Furthermore, we add a new vertex~$p$ and make it adjacent to each vertex in~$W$.
Next, we add a new vertex~$u^*$ adjacent to~$p$, each vertex in~$V(G_u)$, and each vertex in~$R$.
Analogously, we add a new vertex~$v^*$ which is adjacent to~$p$, each vertex in~$V(G_v)$, and each vertex in~$R$. 
Finally, we set~$k'\coloneqq 2k+|W|+3$.
\end{const}

Now, we prove the correctness of Construction~\ref{const-w-seeded-s-club-s-equals-two}.

\begin{lemma}
\label{lemma-w-seeded-2-club-hardness}
For any graph~$H$ which is not a clique, \textsc{Seeded~$2$-Club} parameterized by~$k$ is W[1]-hard if~$G'[W]$ is isomorphic to~$H$.
\end{lemma}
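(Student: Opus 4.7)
I prove that $G$ contains a clique of size at least $k$ if and only if $G'$ contains a $2$-club of size at least $k' = 2k + |W| + 3$ that contains~$W$. Since Construction~\ref{const-w-seeded-s-club-s-equals-two} is polynomial in $|V(G)|$ and $k' = 2k + O(1)$ for the fixed graph~$H$, this is a parameterized reduction from the W[1]-hard \textsc{Clique} problem.

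For the forward direction, given a $k$-clique $C \subseteq V(G)$ I take $S := W \cup \{p, u^*, v^*\} \cup C_u \cup C_v$ of size exactly $k'$ and verify that every pair of vertices in $S$ has distance at most~$2$ in $G'[S]$. This is a routine case analysis: $p$ is a common neighbor of any two vertices in $W \cup \{u^*, v^*\}$; the anchors $u$ and $u^*$ are universal neighbors of~$C_u$ (symmetrically $v, v^*$ of~$C_v$), which handles pairs of the form $\{w, x_u\}$ for $w \in W \cup \{p, u^*, v^*\}$ and $x_u \in C_u$; within $C_u$ and $C_v$ we have cliques because $C$ is a clique in~$G$; and a cross-copy pair $\{x_u, y_v\}$ with $x \ne y$ has the common neighbor $y_u \in C_u \subseteq S$, adjacent to $x_u$ because $xy \in E(G)$ and to $y_v$ via the bridge edge.

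The backward direction rests on the following \emph{pairing claim}: for any $2$-club $S \supseteq W$ with $|S| \ge k'$ and any $x_u \in S \cap V(G_u)$, the partner $x_v$ lies in~$S$ (and symmetrically). Since $u, v \in W \subseteq S$ with $uv \notin E(H)$, the pair $\{x_u, v\}$ needs a common neighbor in~$S$, and a direct computation of $N_{G'}(v) = N_H(v) \cup \{p\} \cup V(G_v)$ and $N_{G'}(x_u) = \{u, u^*, x_v\} \cup N_{G_u}(x_u)$ shows that their intersection is exactly $\{x_v\}$. Writing $A := \{x \in V(G) : x_u \in S\}$, the pairing claim gives $|S| = |W| + 2|A| + |S \cap \{p, u^*, v^*\}|$, and since $|S \cap \{p, u^*, v^*\}| \le 3$ I obtain $|A| \ge k$.

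To finish I argue that $A$ is a clique in~$G$. For distinct $x, y \in A$ the pair $\{x_u, y_v\} \subseteq S$ needs a common neighbor, but none of $u, u^*, v, v^*, p$ is simultaneously adjacent to both: the edges $u \sim V(G_u)$, $v \sim V(G_v)$, $u^* \sim V(G_u)$, $v^* \sim V(G_v)$, and $p \sim W$ never cross from the $u$-side to the $v$-side at once. The only remaining candidates are $y_u$ and $x_v$, which belong to $N(x_u) \cap N(y_v)$ if and only if $xy \in E(G)$; hence $xy \in E(G)$, proving the claim. The main obstacle is the pairing claim: the auxiliary vertices $p, u^*, v^*$ and the set~$R$ are arranged precisely so that $W$ fits into a $2$-club comfortably via the hub~$p$ while leaving $x_v$ as the unique common neighbor of $x_u$ and~$v$, which is what makes the non-edge $uv$ in~$H$ enforce the correspondence between $S \cap V(G_u)$ and $S \cap V(G_v)$.
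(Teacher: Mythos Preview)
Your proof is correct and follows essentially the same approach as the paper: the forward direction builds the same solution $S=W\cup\{p,u^*,v^*\}\cup K_u\cup K_v$ and checks the $2$-club property by the same case analysis, and the backward direction uses exactly the paper's argument that $N(x_u)\cap N(v)=\{x_v\}$ forces the pairing $x_u\in S\Leftrightarrow x_v\in S$, followed by the observation that $N(x_u)\cap N(y_v)=\emptyset$ whenever $xy\notin E(G)$. The only cosmetic difference is that you explicitly note $k'=2k+O(1)$ to certify the parameterized reduction, which the paper leaves implicit.
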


\begin{proof}
We prove that~$G$ contains a clique of size~$k$ if and only if~$G'$ contains a~$2$-club~$S$ containing~$W$ of size~$k'=2k+|W|+3$.

Let~$K$ be a clique of size~$k$ in~$G$ and let~$K_u$ and~$K_v$ be the copies of~$K$ in~$G_u$ and~$G_v$.
We argue that~$S\coloneqq K_u\cup K_v\cup W\cup \{u^*,v^*,p\}$ is a~$2$-club of size at least~$k'$ containing~$W$. 
Clearly,~$|S|=k'$ and~$S$ contains~$W$.
Thus, it remains to show that~$G[S]$ is a~$2$-club.

First, we show that each vertex in~$R$ has distance at most~$2$ to each vertex in~$S$:
All vertices in~$R$ have the common neighbors~$p$,~$u^*$, and~$v^*$. 
Since~$u$ and~$v$ are neighbors of~$p$, each vertex in~$V(G_u)$ is a neighbor of~$u^*$, and since each vertex in~$V(G_v)$ is a neighbor of~$v^*$ we conclude that each vertex in~$R$ has distance at most~$2$ to any vertex in~$S$.

Second, we show that each vertex in~$K_u$ has distance at most~$2$ to each vertex in~$S\setminus R$.
Observe that~$\{u,u^*,x_v\}\cup K_u\subseteq N[x_u]$ for each vertex~$x_u\in K_u$.
Hence,~$x_u$ has distance at most~$2$ to~$p$ via~$u^*$,~$v$, and~$v^*$ via~$x_v$, each vertex, and each vertex in~$K_v$ via the corresponding vertex in~$K_u$.
By symmetric arguments the statement also holds for each vertex in~$K_v$.

Finally, each pair of vertices of~$\{p,u,u^*,v,v^*\}$ has distance at most~$2$ to each other since~$u,u^*,v,v^*\in N(p)$.

Thus,~$S$ is indeed a~$2$-club.

Conversely, suppose that~$G'$ contains an~$2$-club~$S$ of size at least~$2k+|W|+3$ which contains all vertices of~$W$.
Observe that we have~$N(x_u)\cap N(v)=\{x_v\}$ for each vertex~$x_u\in V(G_u)$, and symmetrically~$N(x_v)\cap N(u)=\{x_u\}$ for each vertex~$x_v\in V(G_v)$.
Hence,~$x_u\in S$ if and only if~$x_v\in S$.
Let~$K_u\coloneqq S\cap V(G_u)$.
By definition of~$k'$ we obtain that~$|K_u|\ge k$.
Assume towards a contradiction that~$K_u$ contains a pair of nonadjacent vertices~$x_u$ and~$y_u$.
By the argumentation above we obtain~$y_v\in S$.
Now, observe that~$N(x_u)=\{u,u^*,x_v\}\cup\bigcup\{z_u\mid xz\in E(G)\}$ and that~$N(y_v)=\{v,v^*,y_u\}\cup\bigcup\{z_v\mid yz\in E(G)\}$.
Since~$xy\notin E(G)$ we thus obtain~$N(x_u)\cap N(y_v)=\emptyset$, a contradiction.
Thus,~$G$ contains a clique of size~$k$.\qed
\end{proof}

\paragraph{Hardness for seeds with at least two connected components and~$s\ge 3$.}
Now, we show W[1]-hardness for the case~$s\ge 3$ when the seed contains at least two connected components.
Fix a graph~$H$ with at least two connected components.
We show W[1]-hardness for~$s\ge 3$ even if~$G[W]$ is isomorphic to~$H$.

\begin{const}
\label{const-w-seeded-s-club-s-atleast3-many-conn-compos}
Let~$(G,k)$ be an instance of \textsc{Clique}. 
We construct an equivalent instance~$(G',k')$ of \textsc{Seeded~$s$-Club} as follows.
Initially, we add the set~$W$ to~$G'$, and add edges such that~$G'[W]$ is isomorphic to~$H$. 
Let~$D_1$ be one connected component of~$G'[W]$.
By assumption,~$D_2\coloneqq W\setminus D_1$ is not empty.
Next, we add two copies~$G_1$ and~$G_2$ of~$G$ to~$G'$. Then, we add edges to~$G'$ such that each vertex in~$D_1$ is adjacent to each vertex in~$V(G_1)$ and such that each vertex in~$D_2$ is adjacent to each vertex in~$V(G_2)$. 
Furthermore, we add a path~$(p_1, \ldots , p_{s-1})$ consisting of exactly~$s-1$ new vertices to~$G'$, make~$p_1$ adjacent to each~$u\in D_1$, and make~$p_{s-1}$ adjacent to each~$v\in D_2$.
By~$P\coloneqq \{p_i\mid i\in[s-1]\}$ we denote the set of these newly added vertices.
Now, for each~$x\in V(G)$ we do the following.
Consider the copies~$x_1\in V(G_1)$ and~$x_2\in V(G_2)$ of vertex~$x\in V(G)$.
We add a path~$(x_1, q_1^x, \ldots , q_{s-2}^x, x_2)$ consisting of~$s-2$ new vertices to~$G'$.
By~$Q_x\coloneqq \{q^x_i\mid i\in[s-2]\}$ we denote the set of the new internal path vertices.
Finally, we set~$k'\coloneqq sk+|W|+s-1$.
\end{const}

Now, we prove the correctness of Construction~\ref{const-w-seeded-s-club-s-atleast3-many-conn-compos}.

\begin{lemma}
\label{lem-w-seeded-s-club-s-atleast3-many-conn-compos}
Let~$H$ be a fixed graph with at least two connected components.
\textsc{Seeded~$s$-Club} parameterized by~$k$ is W[1]-hard even if~$G[W]$ is isomorphic to~$H$.
\end{lemma}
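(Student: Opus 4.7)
The plan is to show that $G$ contains a clique of size~$k$ if and only if $G'$ contains a seeded $s$-club of size $k'=sk+|W|+s-1$.

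\emph{Forward direction.} Given a clique $K\subseteq V(G)$ of size~$k$, I would take $S \coloneqq W\cup P\cup \bigcup_{x\in K}(\{x_1,x_2\}\cup Q_x)$, which has exactly $|W|+(s-1)+sk=k'$ vertices and contains~$W$. Verifying that $G'[S]$ has diameter at most~$s$ is a case analysis on endpoint types. The key ingredients are: $p_1$ is adjacent to every vertex of~$D_1$ and $p_{s-1}$ to every vertex of~$D_2$; the path $u,p_1,\dots,p_{s-1},v$ has length~$s$ for $u\in D_1$ and $v\in D_2$; the path $x_1,q_1^x,\dots,q_{s-2}^x,x_2$ has length~$s-1$ for each $x\in K$; and since $K$ is a clique in~$G$, the edges $x_1y_1$ and $x_2y_2$ exist for all $x,y\in K$. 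For endpoint pairs that lack a single obviously short route---for instance $q_i^x$ to~$p_j$, or $q_i^x$ to $q_j^y$ with $x\neq y$---I would bound the distance by the minimum of an ``upper'' route through $x_1$ (and $y_1$) and a ``lower'' route through $x_2$ (and $y_2$); the two route lengths add up to at most $2s+1$ in every such case, so the minimum is at most~$s$.

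\emph{Backward direction.} Let $S\supseteq W$ be an $s$-club in~$G'$ with $|S|\geq k'$. The key structural claim is that for every $x\in V(G)$, $x_1\in S$ if and only if $x_2\in S$, and either implies $Q_x\subseteq S$. To prove this, fix any $v\in D_2\subseteq S$; the $s$-club property demands a path of length at most~$s$ from $x_1$ to~$v$ in $G[S]\subseteq G'$. Any such path must cross from the ``$G_1$-side'' $V(G_1)\cup D_1$ to the ``$G_2$-side'' $V(G_2)\cup D_2$, and the only crossings in~$G'$ are through~$P$ or through some gadget path~$Q_z$. A crossing through~$P$ already costs at least $1+(s-2)+1+1=s+1$ steps from $x_1$ to~$v$; a crossing through~$Q_z$ costs $\dist_{G'}(x_1,z_1)+(s-1)+\dist_{G'}(z_2,v)$ steps, where the last term is~$1$ and the first is~$0$ for $z=x$ but at least~$1$ otherwise. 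Hence only $z=x$ meets the length budget, forcing $Q_x\cup\{x_2\}\subseteq S$.

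Set $K\coloneqq\{x\in V(G):x_1\in S\}$. Counting vertices gives $k'\leq|S|\leq|W|+|P|+s|K|=|W|+(s-1)+s|K|$, so $|K|\geq k$. To show $K$ is a clique, suppose towards a contradiction that $x,y\in K$ with $xy\notin E(G)$. Then $x_1,y_2\in S$, yet the same crossing analysis yields $\dist_{G'}(x_1,y_2)\geq s+1$: via~$Q_x$ one reaches~$x_2$ in $s-1$ steps and then $\dist_{G'}(x_2,y_2)\geq 2$ since the edge $x_2y_2$ is absent; the case via~$Q_y$ is symmetric; via $Q_z$ for $z\notin\{x,y\}$ the cost is at least $1+(s-1)+1=s+1$; and via~$P$ at least $s+2$. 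This contradicts the $s$-club property of~$S$. The main obstacle is the crossing enumeration---in particular, ruling out shortcuts through internal vertices of~$P$ or of some~$Q_z$---which goes through cleanly because each such internal vertex has exactly its two path neighbors as neighbors in~$G'$.
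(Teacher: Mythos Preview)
Your proposal is correct and follows essentially the same route as the paper: the same solution set in the forward direction (your min-of-two-routes argument is equivalent to the paper's observation that the relevant vertices lie on cycles $C_{2s}$ and $C_{2s+1}$), and the same crossing analysis for the backward direction. One small omission in the counting step: the bound $|S|\le |W|+|P|+s|K|$ tacitly assumes that no $q_i^x$ lies in $S$ with $x\notin K$; this is immediate---if $x_1,x_2\notin S$ then any $q_i^x\in S$ would be disconnected from $W$ in $G'[S]$---but it should be said explicitly (the paper handles it by proving the slightly stronger statement $(\{x_1,x_2\}\cup Q_x)\cap S\ne\emptyset \Leftrightarrow \{x_1,x_2\}\cup Q_x\subseteq S$).
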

\begin{proof}
We show that~$G$ contains a clique of size~$k$ if and only if~$G'$ contains a~$W$-seeded~$s$-club of size at least~$k'=sk+|W|+s-1$.

Let~$K$ be a clique of size~$k$ in~$G$. 
Furthermore, let~$K_1$ and~$K_2$ denote the copies of~$K$ in~$G_1$ and~$G_2$, respectively.
We show that~$S\coloneqq W\cup P\cup K_1\cup K_2\cup\bigcup_{x\in K}Q_x$ is a~$W$-seeded~$s$-club of size at least~$k'$.
Clearly,~$|S|=k'$ and~$S$ contains~$W$.
Thus, it remains to verify that~$G[S]$ is an~$s$-club.
Note that since each vertex in~$V(G_1)$ is adjacent to each vertex in~$D_1$, each two vertices in~$D_1$ have distance at most~$2$, and similarly each two vertices in~$V(G_1)$ have distance at most~$2$.
Analogously, we can show that each two vertices in~$D_2$ and each two vertices in~$V(G_2)$ have distance at most~$2$.
Furthermore, the vertices~$(x,p_1,p_2,\ldots,p_{s-1},y,u_2,q^u_{s-2},\ldots, q^u_1,u_1)$ for each vertex~$x\in D_1$, each vertex~$y\in D_2$, and each vertex~$u\in K$ form a~$C_{2s+1}$, a cycle with~$2s+1$ vertices.
Observe that also the vertices~$(u_1,q^u_1,\ldots,q^u_{s-2},u_2,v_2,q^v_{s-2},\ldots, q^v_1,v_1)$ form a~$C_{2s}$ for each two vertices~$u,v\in K$.
Since all remaining distances are covered by these two cycles, we conclude that~$S$ is indeed an~$s$-club.

Conversely, suppose that~$G'$ contains a~$W$-seeded~$s$-club~$S$ of size at least~$k'$.
Let~$Q'_v\coloneqq \{v_1,q^v_1,\ldots, q^v_{s-2},v_2\}$ for each~$v\in V(G)$.
We show that~$Q'_v\cap S\ne\emptyset$ if and only if~$Q'_v\subseteq S$.
Assume towards a contradiction, that~$Q'_v\cap S\ne\emptyset$ for some~$v\in V(G)$ such that~$Q'_v\not\subseteq S$.
If~$v_1\notin S$, and also~$v_2\notin S$, then no vertex in~$S\cap Q'_v$ is connected to any vertex in~$S\setminus Q'_v$.
Hence, we can assume without loss of generality that~$v_1\in S$.
Note that~$N(D_2)=V(G_2)\cup\{p_{s-1}\}$.
Furthermore, observe that~$\dist(v_1,p_{s-1})=s$, that~$\dist(v_1,q^v_{s-2})=s-2$, and that~$\dist(v_1,q^u_{s-2})\ge s-1$ for each~$u\in V(G)\setminus\{v\}$.
Thus, the unique path of length at most~$s$ from~$v_1$ to~$D_2$ contains all vertices in~$Q'_v$.
Hence,~$Q'_v\cap S\ne\emptyset$ if and only if~$Q'_v\subseteq S$.
By the definition of~$k'$ we may thus conclude that~$Q'_v\subseteq S$ for at least~$k$ vertices~$v\in V(G)$.

Now, assume towards a contradiction that~$Q'_u\subseteq S$ and~$Q'_v\subseteq S$ such that~$uv\notin E(G)$.
We consider the vertices~$v_1$ and~$u_2$.
Observe that by construction each path from~$v_1$ to~$u_2$ containing any vertex~$p_i$ has length at least~$s+1$.
Hence, each shortest path from~$v_1$ to~$u_2$ contains the vertex set of~$Q'_w$ for some~$w\in V(G)$.
Since the path induced by each~$Q'_w$ has length~$s-1$, we conclude that~$w=u$ or~$w=v$.
Assume without loss of generality that~$w=v$.
Hence, the~$(s-1)$th vertex on the path from~$v_1$ is~$v_2$.
Since~$uv\notin E(G)$ we have by construction that~$u_2v_2\notin E(G')$.
Hence,~$\dist(v_1,u_2)\ge s+1$, a contradiction to the fact that~$S$ is an~$s$-club.
Thus,~$\{v\mid Q'_v\subseteq S\}$ is a clique of size at least~$k$ in~$G$.\qed
\end{proof}

\section{Conclusion}

We provided a complexity dichotomy for \trisclub{} and \etrisclub{} for the standard parameter solution size~$k$ with respect to~$s$ and~$\ell$.
Furthermore, we also provided a complexity dichotomy for \textsc{Seeded~$2$-Club} for~$k$ in terms of the structure of~$G[W]$.
For \textsc{Seeded~$s$-Club} with~$s\ge 3$ we provided an FPT-algorithm with respect to~$k$ when~$G[W]$ is a clique and we showed W[1]-hardness for~$k$ when~$G[W]$ contains at least 2 connected components.
Hence, an immediate open question is the parameterized complexity of \textsc{Seeded~$s$-Club} for~$s\ge3$ when~$G[W]$ is connected but not a clique.
One aim should be to also provide a dichotomy for~$k$ for \textsc{Seeded~$s$-Club} with~$s\ge 3$ for all possible structures of~$G[W]$.
It is particularly interesting to study seeds of constant size since this seems to be the most interesting case for applications.

For future work, it seems interesting to study the complexity of the considered variants of \textsc{$s$-Club} with respect to further parameters, for example with respect to structural parameters of the input graph~$G$ such as the treewidth of~$G$. 
Additionally, the parameterized complexity of further robust variants of \textsc{$s$-Club} such as~\textsc{$t$-Hereditary $s$-Club}~\cite{PYB13,KNNP19} with respect to~$k$ remains open.
It is also interesting to study other problems for detecting communities with seed constraints.
One prominent example is \textsc{$s$-Plex}.
This problem is also NP-hard for~$W\ne\emptyset$, since an algorithm for the case when~$|W|=1$ can be used as a black box to solve the unseeded variants.
From a practical perspective, we plan to implement combinatorial algorithms for all three problem variants for the most important special case~$s=2$. 
Based on experience with previous implementations for \textsc{$2$-Club}~\cite{HKN15} and some of its robust variants~\cite{KNNP19} we are optimistic that these problems can be solved efficiently on sparse real-world instances.

\bibliographystyle{splncs04}

\end{document}